\documentclass[11pt]{article}
\usepackage{float}
\usepackage{booktabs}
\usepackage{latexsym}
\usepackage{enumerate}

\usepackage[utf8]{inputenc}
\usepackage[margin=1in]{geometry}
\usepackage[ruled,vlined]{algorithm2e}
\SetAlCapHSkip{0pt}       
\SetAlgoNlRelativeSize{0} 
\SetNlSty{}{}{}           

\usepackage{natbib}
\usepackage[english]{babel}  
\usepackage[T1]{fontenc}
\usepackage[utf8]{inputenc} 
\DeclareUnicodeCharacter{00A0}{~}
  
\usepackage{amssymb,amsmath,amscd,amsfonts,amsthm,bbm,mathrsfs,yhmath}
\usepackage{hyperref}
\usepackage{url}

\usepackage{tikz}
\usetikzlibrary{arrows.meta, positioning}
\tikzset{
    node/.style={circle, draw, minimum size=7mm, inner sep=0pt},
    blueNode/.style={circle, draw, blue, minimum size=7mm, inner sep=0pt},
    arrow/.style={-Stealth, thick},
    dashedArrow/.style={-Stealth, dashed, thick, gray},
}

\usepackage{xr-hyper}
\usepackage{algpseudocode}
\usepackage{setspace}
\usepackage{bm}

\theoremstyle{plain}
\newtheorem{corollary}{Corollary}
\newtheorem{theorem}{Theorem}
\newtheorem{lemma}{Lemma}

\newtheorem{assumption}{Assumption}

\newtheorem{example}{Example}

\usepackage{multirow}
\usepackage{tikz}
\usepackage{setspace}
\usetikzlibrary{shapes,decorations,arrows,calc,arrows.meta,fit,positioning}
\tikzset{
    -Latex,auto,node distance =1.5 cm and 1.5 cm,semithick,
    state/.style ={ellipse, draw, minimum width = 0.7 cm},
    point/.style = {circle, draw, inner sep=0.04cm,fill,node contents={}},
    directed/.style={Latex-Latex,dashed},
    el/.style = {inner sep=2pt, align=left, sloped}
}
\usepackage{bbding}
\usepackage{ulem}

\def\E{{\mathbb {E}}}

\def\N{{\mathcal{N}}}

\def\N{{\mathcal{N}}}

\def\bx{\bm{x}}
\def\bX{\bm{X}}

\def\bW{\bm{W}}
\def\bpi{\boldsymbol{\pi}}

\newcommand{\R}{{{\mathbb R}}}

\newcommand{\zeroitem}{\setlength{\itemsep}{0pt}
\setlength{\parsep}{0pt}
\setlength{\parskip}{0pt}}

\newcommand{\metahunt}{\textsc{MetaHunt}}
 
\DeclareMathOperator*{\argmax}{arg\,max}
\DeclareMathOperator*{\argmin}{arg\,min}

\def\khat{{\hat{K}}}

\title{\bf Privacy-preserving Meta-analysis through\\ Low-Rank Basis Hunting\thanks{An open-source software package \texttt{MetaHunt} is available at CRAN \url{https://cran.r-project.org/package=MetaHunt}. We thank Larry Han and an anonymous reviewer of IQSS's rapidPeer program for their helpful comments.}}
\author{Wenqi Shi\thanks{Department of Statistics, Harvard University, 1 Oxford Street, Cambridge, MA 02138, U.S.A. Email: \href{mailto:wenqi_shi@g.harvard.edu}{wenqi\_shi@g.harvard.edu}} \and Kosuke Imai\thanks{Department of Government and Department of Statistics, Harvard University.  1737 Cambridge Street, Institute for Quantitative Social Science, Cambridge MA 02138, U.S.A.  Email: \href{mailto:imai@harvard.edu}{imai@harvard.edu} URL: \href{https://imai.fas.harvard.edu}{https://imai.fas.harvard.edu}} \and Yi Zhang\thanks{Independent Researcher. Email: \href{mailto:yizhang0017@gmail.com}{yizhang0017@gmail.com}}}
\date{\today}

\begin{document}
\maketitle

\begin{abstract}
A central challenge of meta-analysis is that the populations underlying existing studies often differ from the target population in unknown ways. We study the problem of predicting function-valued quantities, such as regression and conditional average treatment effect functions, for a new target population using only study-level covariates and study-specific function estimates. We propose \metahunt, a new meta-analysis methodology based on a shared low-rank structure, in which the true function from each study lies within the convex hull of a small set of latent basis functions. To recover these basis functions, we extend a vertex-hunting procedure called the Successive Projection Algorithm to the functional setting, incorporating a denoised basis-hunting step. We establish consistency of the recovered basis functions under mild regularity conditions. We then model the relationship between study-level covariates and the corresponding mixing weights using flexible semi-parametric or non-parametric methods. \metahunt{} is privacy-preserving and enables meta-analytic prediction based on study-level information alone, even when individual-level data are unavailable to analysts. In addition, for each study, functions of interest can be estimated using possibly different machine learning algorithms. For uncertainty quantification, we construct prediction intervals via conformal prediction.  We show that, under exchangeability and mild estimation-error conditions, these intervals achieve asymptotically valid marginal coverage. We demonstrate the effectiveness of \metahunt{} through both simulation studies and empirical applications. An open-source software package is available for implementing \metahunt.

\end{abstract}

\begin{center}
\noindent Keywords: 
{conformal prediction; evidence synthesis; external validity; generalizability; \\ vertex hunting}
\end{center}

\clearpage
\onehalfspacing

\section{Introduction}
Modern scientific research is increasingly characterized by studies conducted across diverse populations, contexts, and designs. Integrating information from such heterogeneous sources is essential for assessing the external validity of empirical evidence. Meta-analysis provides a foundational framework for synthesizing evidence across studies and has been widely applied in a variety of fields including medicine, social sciences, and public policy \citep[e.g.,][]{hedgesolkin1985,normand1999meta,haidich2010meta,card2015applied,cooper2019handbook,tipton2019history}.

However, generalizing results from multiple source studies to a new target population is challenging, especially when there exists substantial heterogeneity both across sources and between the source and target populations. This heterogeneity extends beyond simple covariate shift to include conditional shifts, where outcome–covariate relationships vary across studies in unknown ways. While classical meta-analytic methods typically focus on scalar estimands such as average treatment effects, many contemporary applications require learning about richer, function-valued objects, including regression and conditional average treatment effect (CATE) functions.  Moreover, once these functional quantities are estimated, they can be used to infer scalar estimands.

In this paper, we consider the problem of predicting a function-valued quantity for a new target population using only aggregate-level data from multiple source studies. In our setting, each source study provides an estimated function of interest together with study-level covariates that summarize contextual features such as location, timing, and other aspects of study design. Crucially, we do not assume access to individual-level covariates or outcomes, reflecting common practical constraints in multi-site studies. Our goal is to predict the corresponding function for a new target population that is also characterized by the same set of study-level covariates. We seek a structured but flexible approach to modeling cross-study heterogeneity that enables principled generalization using only aggregate-level information.

We develop \metahunt, a unified estimation and inference framework for predicting function-valued quantities across heterogeneous studies. \metahunt{} is built on a low-rank structural assumption, which posits that all study-level functions lie in the convex hull of a small set of latent basis functions. This assumption captures cross-study heterogeneity through a low-dimensional representation and enables information sharing across studies without imposing parametric restrictions on the functions themselves. The intuition is that heterogeneity across studies is driven by a small number of persistent mechanisms. Study-level covariates then inform how these latent bases are combined differently across studies and in the target population.

To estimate the latent basis functions, we extend the Successive Projection Algorithm (SPA) \citep{araujo2001successive} from the vertex-hunting to functional setting.  The proposed functional SPA (fSPA) algorithm adapts a denoising step from \cite{jin2024improved} to improve robustness, along with data-driven procedures for selecting the number of basis functions. 
Given the bases recovered by this denoised fSPA (d-fSPA) algorithm, we estimate study-specific mixing weights by projecting each observed function onto the resulting convex hull. 

Lastly, we model the relationship between these weights and study-level covariates using flexible semi-parametric or nonparametric methods, allowing covariate information from the target population to inform prediction. The combination of recovered basis functions and the fitted weight model yields a predicted target function. To quantify uncertainty, we construct conformal prediction intervals \citep{vovk2005algorithmic, shafer2008tutorial}, which accommodate the multi-stage estimation pipeline without requiring distributional assumptions. All together, \metahunt{} is privacy-preserving in the sense that it requires no individual-level data from any of the source studies, while remaining compatible with modern machine learning methods for estimating study-level functions and fitting weight models.

Our theoretical analysis establishes consistency guarantees for the recovered basis functions under mild regularity conditions. To our knowledge, this work is the first to extend vertex-hunting algorithms to the functional setting. For uncertainty quantification, we show that the proposed conformal prediction intervals achieve distribution-free asymptotic marginal coverage under exchangeability and mild control of estimation error, despite the complexity of the underlying estimation procedure. Through extensive simulation studies and a real-world empirical application, we demonstrate that \metahunt{} provides robust prediction and valid uncertainty quantification. Finally, we make available an open-source software package \texttt{MetaHunt} at CRAN (\url{https://cran.r-project.org/package=MetaHunt}) that implements the proposed methodology.

\paragraph{Related Literature}

~\\
\noindent
Classical meta-analytic approaches typically focus on scalar estimands, most notably the average treatment effects (ATE). For function-valued quantities of interest, including the CATE, many widely used methods are based on parametric models such as mixed-effects or hierarchical models \citep[e.g.,][]{hedgesolkin1985, higgins2009re, lunn2013fully, debray2015get, burke2017meta, borenstein2021introduction, seo2021comparing, stogiannis2024heterogeneity}. While these hierarchical formulations allow CATE functions to vary across studies through parametric random effects, their primary estimands remain scalar ATEs. Moreover, such approaches require access to individual-level data and rely on strong parametric assumptions. When cross-study heterogeneity is complex or nonlinear, these methods are susceptible to bias due to model misspecification.  In addition, while aggregate-level meta-analytic approaches, including meta-regression, do not require individual-level information \citep[e.g.,][]{baker2009understanding, tipton2019history}, they are typically limited to scalar quantities and do not directly accommodate function-valued estimands.  In contrast, \metahunt{} targets function-valued quantities and allows the use of any flexible machine learning algorithm to estimate study-specific functions.

More recent work on causally interpretable meta-analysis emphasizes principled generalization to a well-defined target population, but often restricts attention to covariate shift and assumes transportability, namely the invariance of CATE functions across studies \citep[e.g.,][]{dahabreh2020toward,  clark2023causally, dahabreh2023efficient, rott2024causally, steingrimsson2024systematically, vo2025integration, cao2025heterogeneity, shi2025use}. In practice, however, transportability is often not credible because of various unobserved sources of cross-study heterogeneity. \metahunt{} departs from these approaches by allowing CATE functions to vary across studies in unknown ways without imposing parametric restrictions on the functions themselves. 

Another stream of literature that aggregates evidence from multiple datasets is federated learning. Federated learning, introduced by \citet{mcmahan2017communication}, emphasizes privacy-preserving collaborative model training, in which multiple decentralized sites (e.g., devices or institutions) jointly learn a model by sharing only summary-level information rather than individual-level data \citep{li2020federated, kairouz2021advances}. This paradigm has been increasingly applied in healthcare informatics, finance and industrial systems \citep[e.g.][]{silva2019federated, rieke2020future, long2020federated, li2020federated, xu2021federated, antunes2022federated}.
\metahunt{} offers a new aggregation method by modeling each study-level function as a convex combination of latent basis functions. While most existing work on federated causal inference relies on propensity-score-based adjustments and focuses primarily on covariate shift \citep[e.g.,][]{han2023multiply,xiong2023federated,han2025federated}, \metahunt{} allows the conditional relationship between treatment effects and covariates to vary flexibly across studies.

Our low-rank assumption is related to the distributionally robust optimization (DRO) literature, which studies decision-making under distributional uncertainty by optimizing worst-case performance over an uncertainty set \citep[e.g.,][]{ben2013robust, duchi2021statistics,zhang2024optimal}. Recent meta-analytic work models the target population as a convex combination of source-study distributions \citep[e.g.,][]{xiong2023distributionally, zhang2024minimax, wang2025distributionally, mo2024minimax}. Although this formulation is structurally similar to our low-rank assumption, the objectives and modeling choices differ in important ways. DRO methods focus on worst-case guarantees and are therefore driven primarily by source studies with extreme results. In particular, studies whose results lie within the convex hull of results from other source studies do not affect the solution. 
While this worst-case perspective offers strong robustness guarantees, it can be overly conservative in many practical applications. In contrast, \metahunt{} models the distribution of mixing weights, allowing all source studies to contribute meaningfully through their frequency and covariate patterns. 
Relatedly, \cite{jeong2024out} studies out-of-distribution prediction by approximating the target distribution as a weighted combination of source distributions through a random perturbation model. While their model is defined on the unconditional joint distribution, our low-rank assumption restricts only the conditional relationships between outcomes and predictors.


To the best of our knowledge, this work is the first to extend vertex-hunting algorithms to the functional setting. Vertex hunting has been studied extensively in finite-dimensional settings and arises in a wide range of applications, including hyperspectral unmixing \citep[e.g.,][]{bioucas2012hyperspectral}, archetypal analysis \citep[e.g.,][]{cutler1994archetypal, satija2015spatial}, network membership estimation \citep[e.g.,][]{rubin2022statistical,huang2018pairwise,jin2024mixed}, and topic modeling \citep[e.g.,][]{ke2024using}. Existing methods in these areas focus on recovering vertices from finite-dimensional vectors. In contrast, our work extends the classical SPA to infinite-dimensional functional data. Our approach builds on prior work that improves the robustness of SPA to noise \citep{mizutani2018efficient, gillis2015semidefinite, jin2024improved}, as well as studies establishing finite-sample error bounds for SPA \citep{gillis2013fast, jin2024improved}. Leveraging these developments, we derive error bounds for the recovered basis functions under mild regularity conditions in the functional setting. While our proof strategy follows \citet{jin2024improved}, extending their analysis from finite-dimensional vectors to functions requires re-deriving several key geometric inequalities in this functional setting. Notably, the denoising step in the d-fSPA algorithm is not strictly required for consistency, but substantially improves finite-sample performance in practice.

\metahunt{} also contributes to the literature on conformal prediction in meta-analytic and multi-population settings. Foundational work on conformal prediction established distribution-free predictive inference under exchangeability  \citep{vovk2005algorithmic, shafer2008tutorial}. More recent developments have extended conformal methods to causal inference, producing intervals for counterfactual quantities \citep[e.g.,][]{lei2021conformal, jin2023sensitivity, alaa2023conformal}. In parallel, several recent papers study conformal prediction in hierarchical or multi-population settings  \citep{andrews2022transfer, dunn2023distribution, liu2024multi}, but rely on access to individual-level data from source studies. In contrast, \metahunt{} operates entirely on aggregate-level study summaries while focusing on a two-layer hierarchical data generation structure. Under mild estimation error control and regularity conditions, we establish asymptotic marginal coverage guarantees for the proposed prediction intervals, providing a practical alternative for uncertainty quantification when individual-level data are unavailable.


\section{The Setup and Assumptions}

We consider a setting motivated by aggregate-data meta-analysis, in which multiple studies provide estimated functions that capture site-specific predictive relationships. While a common example is the study of conditional average treatment effects (CATEs) across sites, our formulation applies more broadly. Rather than restricting attention to causal estimands, we treat this as a general function-valued prediction problem, where each study contributes an estimated function summarizing its local relationship between covariates and outcomes. Our objective is to leverage these study-specific estimates to predict the corresponding function in a new target population that may differ in its study-level characteristics.

Formally, suppose there exist $m$ source studies indexed by $i = 1, \ldots, m$. Each study $i$ provides an estimated function $\hat f^{(i)}(\bx): \mathcal{X} \rightarrow \R$, which may represent, for instance, a site-specific CATE function, a dose–response curve, or any other predictive object of interest. 
Associated with each study is a vector of study-level covariates $\bW_i \in \mathcal{W} \subseteq \R^p$, summarizing contextual characteristics such as region, study design, or population composition.
Let $n_i$ denote the sample size of study $i$, let $n:=\sum_{i=1}^m n_i$ denote the total number of observations across all studies, and let $n_{\min}:=\min_{1\le i\le m}n_i$ denote the smallest within-study sample size.

Our goal is to predict the corresponding function $f^{(0)}(\bx)$ for a new target population $i=0$ characterized by covariates $\bW_0$, using the collection of observed pairs $\{ (\bW_i, \hat f^{(i)}(\bx) ) \}_{i=1}^m$. Throughout, we treat the observed functions $\hat f^{(i)}(\bx)$ as noisy estimates of the underlying oracle functions $f^{(i)}(\bx)$, i.e., $\hat f^{(i)}(\bx) = f^{(i)}(\bx) + \epsilon^{(i)}(\bx)$, where $\epsilon^{(i)}(\bx)$ represents estimation error within study $i$. 
We do not require any access to individual-level data from source distribution or the target population. 
We view the studies as exchangeable units drawn from a common data-generating process, and focus on pointwise prediction for fixed covariate values $\bx \in \mathcal{X}$. 
Our goal is to learn shared structure across the $m$ studies that enables accurate out-of-sample prediction of $f^{(0)}(\bx)$ for new study-level covariate $\bW_0$.

\subsection{Low-rank assumption}
We propose the {\it low-rank cross-study heterogeneity} assumption to capture the shared structure underlying study-level oracle functions $f^{(i)}(\bx)$. Although studies may differ in complex and unobserved ways, we assume that their study-level functions share a common underlying structure captured by a small number of latent basis functions.
This assumption reduces the complexity of cross-study heterogeneity and enables us to recover a shared functional representation from the aggregate data.
\begin{assumption}[Low-rank cross-study heterogeneity]
\label{cond:low_rank}
There exists a set of basis functions $\{g_k(\bx)\}_{k=1}^K$ such that
\begin{eqnarray*}
f^{(i)}(\bx) = \sum_{k=1}^K \pi_{ik} g_{k}(\bx), \quad \forall \bx \in \mathcal{X},
\end{eqnarray*}
for all $i = 0,1,\ldots, m$ where $K < m$, and the weights satisfy $\bpi_i = (\pi_{i1}, \ldots, \pi_{iK})^\top \in \Delta_{K-1}$ with $\Delta_{K-1} = \{ \bpi \in \R^K: \sum_{k=1}^K \pi_k = 1, \min_k \pi_k \geq 0\}$ denoting the $(K-1)$-dimensional simplex. Moreover, the basis functions are non-degenerate: the differences $g_2 - g_1,\, g_3 - g_1,\, \ldots,\, g_K - g_1$ are linearly independent in $L^2(\mu)$ for a probability measure $\mu$ on $\mathcal{X}$.
\end{assumption}
Assumption~\ref{cond:low_rank} states that each study-level oracle function $f^{(i)}(\bx)$ can be expressed as a convex combination of a shared collection of latent basis functions $\{g_k(\bx)\}_{k=1}^K$. Geometrically, this means that all study-specific functions and the target function lie within the convex hull spanned by these basis functions.
The coefficients $\pi_{ik}$ represent the study-specific mixture weights that characterize each study’s position within this functional simplex, summarizing how much each latent basis function $g_k(\cdot)$ contributes to the predictive mechanism of study $i$. In this sense, the basis functions capture the fundamental ``treatment effect patterns,’’ which are combined by different studies in varying proportions.

The non-degeneracy condition ensures that no basis function can be expressed as an affine combination of the others under the $L^2(\mu)$ norm, so that the convex hull is a genuine $(K-1)$-dimensional simplex and each basis function is identifiable. The probability measure $\mu$ is the same as the one used to define the functional norm in the consistency result (Section~\ref{sec:consistency}).

The motivation behind this assumption is that, although cross-study heterogeneity may appear complex on the surface, it is often driven by a small number of persistent mechanisms. Studies usually differ along a limited set of features, such as their designs (e.g., in-lab versus online administration, or domestic versus international sites) and differences in population composition or standard of care. These features reshape the outcome--covariate relationship.  When this is the case, the study-level functions share a low dimensional structure.

The low-rank assumption is much more flexible than the standard approach of directly transporting a CATE function across sites, which requires the conditional treatment effect to be invariant given the observed individual-level covariates. Such an assumption can fail when those covariates have limited explanatory power. At the same time, the low-rank assumption is richer than restricting attention to a scalar summary such as the average treatment effect. The relaxation is structural rather than parametric, preserving the ability to flexibly model functional quantities within each study using machine learning methods.

We emphasize that the low-rank assumption is a structural modeling assumption rather than a causal identification result. Like transportability, it must be justified on subject-matter grounds. However, the assumption can be subjected to empirical scrutiny: the cross-validation criterion used to select the number of basis functions $K$ also indicates whether a low-dimensional representation adequately fits the observed studies, with a poor fit across all small values of $K$ providing evidence against the assumption. In this sense, our framework is best understood as a response-surface approach to evidence synthesis \citep{konnyu2024evidence}, as it predicts the function-valued estimate that a future study would report, rather than identifying a target-population causal effect through cross-study identification conditions.

This assumption parallels structures commonly used in archetypal analysis and topic modeling, where observations are expressed as convex combinations of a few latent archetypes or topics. 
Our assumption~\ref{cond:low_rank} can be viewed as a functional analogue of these finite-dimensional decompositions. In all such cases, a shared low-dimensional structure explains the variation observed across many high-dimensional objects, enabling efficient representation, interpretation, and generalization.

Our decomposition shares a high-level structure with classical functional representations such as Fourier expansions and functional principal component analysis (fPCA) \citep[][]{ramsay2005functional, hsing2015theoretical}. All express functions as combinations of basis functions. However, Fourier bases are fixed and universal, whereas fPCA bases, though data-driven, are orthogonal eigenfunctions of the covariance operator chosen to maximize explained variance. 
In contrast, the basis functions in \metahunt{} are data-driven and generally non-orthogonal, recovered as vertices of a convex hull. The central distinction is the simplex constraint on the mixing weights, which makes the weights $\pi_{ik}$ as membership proportions characterizing the degree to which study $i$ resembles each basis.

A closely related but distinct formulation arises in the distributionally robust optimization (DRO) literature. Recent work, such as  \cite{xiong2023distributionally} and \cite{zhang2024minimax}, defines an uncertainty set where the target function is expressed as a convex combination of the source-study functions, i.e., $\{ f^{(0)}(\bx) = \sum_{i=1}^m q_i f^{(i)}(\bx), \mathbf{q} \in \Delta_{m-1}  \}$.
While this setup is structurally similar, \metahunt{} differs in that the feasible convex hull of the target function is defined by a fixed set of latent basis functions, rather than by the observed source studies themselves. The number of bases $K$ does not grow with the number of studies $m$. Instead, these basis functions capture the common underlying structure that persists as new studies are added. This perspective emphasizes representation learning by identifying a shared generative structure, rather than expanding mixtures over an ever-growing set of empirical studies.

\subsection{Weight model}
\label{sec:weight_model}
Assumption~\ref{cond:low_rank} specifies the shared low-rank structure for study-level functions. We now describe how the corresponding mixing weights $\bpi_i$ vary across studies. These weights determine how each study combines the latent basis functions $\{g_k(\bx)\}_{k=1}^K$ to form its oracle function $f^{(i)}(\bx)$.
Since studies are conducted under different conditions, characterized by covariates such as region, time period, or experimental design, we allow the weights $\bpi_i$ to depend flexibly on the study-level covariates $\bW_i$.

\begin{assumption}[Weight model]
\label{cond:weight_model}
For all $i \in \{0, 1, \ldots, m\}$, the weight vector $\bpi_i$ is drawn independently from a conditional distribution
\begin{eqnarray*}
    \bpi_i \mid \bW_i \ \stackrel{ind.}{\sim} \ \mathcal{P}_{\bpi \mid \bW } (\cdot \mid \bW_i),
\end{eqnarray*}
where $\mathcal{P}_{\bpi \mid \bW }$ denotes an arbitrary distributional map from the covariate space $\mathcal{W}$ to the simplex $\Delta_{K-1}$. 
\end{assumption}
Assumption~\ref{cond:weight_model} treats the weight-generating process as a stochastic mechanism linking study-level covariates to the composition of the underlying basis functions. This flexible, nonparametric formulation allows the distribution of $\bpi_i$ to vary with $\bW_i$, enabling the model to capture potentially complex patterns of heterogeneity across studies.
In practice, the conditional distribution $\mathcal{P}_{\pi |W }$ can be further specified using additional parametric or semiparametric assumptions to enhance interpretability or estimation efficiency, especially when the number of studies is limited.

A flexible and principled extension is to replace the linear predictor in standard Dirichlet regression with a function that lies in a reproducing kernel Hilbert space (RKHS). This leads to a kernelized generalization of Dirichlet regression (Example~\ref{exp:dirichlet}). Such a formulation preserves the Dirichlet simplex constraint while accommodating nonlinear relationships between $\bW_i$ and the study-level weights $\bpi_i$. Kernelized Dirichlet models inherit the computational tractability of classical Dirichlet regression, while providing a semiparametric representation that can adapt to complex patterns of cross-study heterogeneity.
\begin{example}
[Kernelized Dirichlet regression model]
\label{exp:dirichlet}
Let ${\kappa}:\mathcal W \times \mathcal W \to \mathbb{R}$ be a positive–definite kernel
with associated reproducing kernel Hilbert space (RKHS) $\mathcal H_{\kappa}$ and feature
map $\psi:\mathcal W \to \mathcal H_{\kappa}$. For all
$i \in \{0,1,\ldots,m\}$, assume
\begin{eqnarray}
\bpi_i :=\left( \pi_{i1}, \pi_{i2}, \ldots, \pi_{iK} \right) \sim \mathrm{Dirichlet}(\alpha_{i1},\alpha_{i2}, \ldots, \alpha_{iK}), 
\end{eqnarray}
with $\alpha_{ik}
  = \exp\big\{\langle \boldsymbol{\beta}_k, \psi(\bW_i)\rangle_{\mathcal H_{\kappa}}\big\}$ and $\boldsymbol{\beta}_{k} \in \mathcal H_{\kappa}$ for all $k$.
\end{example}

Another approach to avoid the simplex constraint on the weights is to model their log-ratios \citep{aitc:82}. Log-ratio transformations provide an unconstrained representation of simplex-valued data and are widely used in compositional analysis. Under this formulation, the dependence of the weights on study-level covariates is modeled through a regression for the log of the relative weights, which can then be mapped back to the simplex via a softmax-type transformation. This specification preserves the interpretability of log-ratio models while allowing nonlinear covariate effects through RKHS
regularization. Example~\ref{exp:kernel_logratio} illustrates this kernelized log-ratio regression framework.
\begin{example}
[Kernelized log–ratio regression]
\label{exp:kernel_logratio}
For each $k \in \{2,\ldots,K\}$, define the 
log-ratio $\eta_{ik} := \log\!\left({\pi_{ik}}/{\pi_{i1}}\right)$.
For all $i \in \{0,1,\ldots,m\}$, assume
\begin{eqnarray}
\eta_{ik} = \langle \boldsymbol{\gamma}_k, \psi(\bW_i) \rangle_{\mathcal H_\kappa},
\qquad \boldsymbol{\gamma}_k \in \mathcal H_\kappa.
\end{eqnarray}
\end{example}

The low-rank structure (Assumption \ref{cond:low_rank}) and the weight model (Assumption \ref{cond:weight_model}) together provide a unified and interpretable representation for study-level functions. The low-rank structure captures a shared set of latent basis functions that summarize cross-study variation, while the weight model explains heterogeneity by characterizing how different studies combine these bases according to their covariates. 
This joint formulation enables generalization to new target populations and reduces the effective dimensionality of the problem.
Moreover, since both components are nonparametric, the framework retains the flexibility to capture complex, nonlinear patterns of heterogeneity while preserving interpretability through its low-rank decomposition.

A further appeal of this formulation is that the study-level information plays an active role in the prediction. Under this formulation, the prediction for the target moves toward sources conducted in similar conditions rather than toward a global average
Moreover, the framework degrades gracefully when $\bW$ carries little explanatory power. In the extreme case where $\bpi_i$ is independent of $\bW_i$, a consistently estimated weight model simply predicts the marginal mean weight, which is close to the across-study average weight. The resulting target prediction is therefore approximately the average of the projected study-specific functions, with the projection itself acting as a form of denoising. In other words, when the study-level covariates are uninformative, \metahunt{} falls back on a sensible default: an average of the denoised study-specific functions. The covariate information in $\bW_0$ thus serves only to refine this baseline toward more comparable studies.

\begin{assumption}[Exchangeability]
\label{cond:exchagneable}
The site-level covariates $\bW_0, \bW_1, \ldots, \bW_m$ are exchangeable.
\end{assumption}
Assumption~\ref{cond:exchagneable} assumes that the site-level covariates are exchangeable. Combined with Assumption~\ref{cond:low_rank}--~\ref{cond:weight_model}, this implies that $(\bW_i, \bpi_i, f^{(i)})$ are jointly exchangeable across $i= 0, 1, \ldots, m$. This exchangeability property underlies the validity of our conformal prediction procedure in Section~\ref{sec:conformal}.

\section{Estimation through Low-rank Basis Hunting}
\label{sec:estimation}

This section describes the estimation procedure. We first consider the recovery of the latent basis functions $\{g_k(\bx)\}_{k=1}^K$ that define the shared low-dimensional structure underlying the study-specific functions. We then discuss how to determine the number of basis functions $K$ in practice. Finally, given the estimated bases, we fit the weight model to estimate the study-specific mixing weights $\bpi_i$ as functions of the study-level covariates $\bW_i$. Together, these steps yield an estimated representation of study-specific functions and enable prediction for the target population.

\subsection{Basis hunting via the d-fSPA algorithm}

In this section, we extend the successive projection algorithm (SPA) from the vertex hunting setting to recover the latent basis functions $\{g_k(\bx)\}_{k=1}^K$.
We call our extension the functional successive projection (fSPA) algorithm.
The low-rank assumption implies that each study-level function $f^{(i)}(\bx)$ can be expressed as a convex combination of a small number of basis functions, placing our task in close analogy to the classical vertex hunting problem. In vertex hunting, each observed vector is modeled as a convex combination of a finite set of vertices plus noise, and the goal is to recover these vertices from noisy samples. This problem arises in a wide range of applications, including hyperspectral unmixing, archetypal analysis, network membership estimation, and topic modeling. Our setting differs from these applications in one important respect: instead of finite-dimensional vectors, our ``observations'' are functions, and the latent vertices are infinite-dimensional basis functions.

The SPA of \citet{araujo2001successive} is a simple yet powerful method for vertex hunting. It is a stepwise greedy procedure that repeatedly selects the observation with the largest norm after projecting out previously chosen directions. The algorithm is computationally efficient, requires no explicit objective function, and has desirable theoretical guarantees. 
We adopt SPA in our functional setting for two key reasons. First, SPA selects basis elements directly from the observed functions, thereby avoiding the challenge of representing an arbitrary function in a high-dimensional function space. Second, the core argument underlying SPA, namely the triangle inequality, extends naturally to function norms, allowing the algorithm’s intuition to carry over to our infinite-dimensional context.

However, SPA is a greedy procedure and can be sensitive to noise and outliers. In particular, SPA tends to select points that fall outside the underlying convex hull. This outward bias arises because noisy observations may have large norms and thus are more likely to be selected by the algorithm. To address this, it is natural to introduce a denoising step before applying SPA. Following the approach of \citet{jin2024improved}, we extend their denoising strategy to our functional setting.
The resulting denoised fSPA (d-fSPA) algorithm, presented in Algorithm~\ref{alg:spa}, provides a more robust method for basis hunting under noisy functional observations. Here, we define $B_\Delta(\hat f^{(i)}) := \{f: \Vert f - \hat f^{(i)} \Vert \leq \Delta \}$ to be the $\Delta$-neighborhood of $\hat f^{(i)}$. 

{
\begin{algorithm}[t]
\SetAlFnt{\footnotesize} 
\caption{The d-fSPA Algorithm for basis hunting \label{alg:spa}}

\KwIn{Estimated functions from $m$ studies $\{\hat f^{(i)}(\cdot)\}_{i=1}^m$; estimated number of basis $\khat$; tuning parameters $(N, \Delta)$.}

\textbf{Denoising step:}
\begin{itemize}
\zeroitem
    \item [1.] If there are fewer than $N$ functions in $B_\Delta(\hat f^{(i)})$, remove this function.
    \item [2.] Otherwise, replace $\hat f^{(i)}$ by the average of all functions in $B_\Delta(\hat f^{(i)})$.
\end{itemize}

\textbf{Initialize} $S = \emptyset$, $h_i(\cdot) =\hat f^{(i)}(\cdot)$ for $1 \leq i \leq m$

\For{$k = 1, \ldots, \khat$}{ 

1. Project $h_i$ onto the orthogonal space of $\mathrm{span}(S)$: $h_i =  h_i - P_{\mathrm{span}(S)} h_i$.

2. Find $s_k$ such that $h_{s_k}$ has the largest norm $s_k = \argmax_i \Vert h_i\Vert$. 

3. Update the set $S = S \cup \{ \hat f^{(s_k)}\}$.
}

\KwOut{Estimated basis functions $\hat g_k(\cdot) = \hat f^{(s_k)}$ for $1 \leq k \leq \khat$.}
\end{algorithm}
}

The d-fSPA algorithm includes a denoising step and a successive projection step. The denoising step reduces the influence of noise by locally averaging functions within a $\Delta$-neighborhood and discarding functions that lack sufficient nearby support. The tuning parameters $(N, \Delta)$ can be selected by cross-validation. In our numerical studies, following the recommendations of \cite{jin2024improved}, we use the heuristic choices $N= 0.5\log m$ and $\Delta = \max_{ij} \Vert \hat f^{(i)} - \hat f^{(j)} \Vert/10$.
After denoising, the d-fSPA algorithm proceeds iteratively: at each step, the algorithm projects all remaining functions onto the orthogonal complement of the span of previously selected bases, and then selects the function with the largest residual norm. Intuitively, this residual norm identifies the direction that introduces the greatest new variation, corresponding to an extreme point of the underlying simplex. The final output consists of $\khat$ selected functions that best capture the extreme structure implied by the low-rank assumption.


For the functional norm used in Algorithm~\ref{alg:spa}, we allow any norm on the relevant function space. In practice, we recommend the $L^2$ norm with respect to a probability measure $\mu$ on $\mathcal{X}$, i.e., $\Vert f - \tilde f \Vert = \{ \int ( f(\bm{x}) - \tilde f(\bm{x})  )^2\, d\mu(\bm{x}) \}^{1/2}$. A natural choice is the covariate distribution of the target population, $\mu = \mathcal{P}_{0,\bm{X}}$, which emphasizes accuracy in the regions of the covariate space most relevant to the target. If $\mathcal{P}_{0,\bm{X}}$ is unavailable, one may instead use the pooled covariate distribution from the source studies as a proxy. For the consistency guarantee established in Section~\ref{sec:consistency}, we restrict to the $L^2(\mu)$ norm.

In many applications, study-specific functions $\hat f^{(i)}$ are estimated using black-box machine learning methods and therefore lack a closed analytical form. In such cases, the $L^2$ norm can be approximated empirically: draw individual-level covariate samples, evaluate the corresponding predictions, and compute the empirical analogue of the above expectation. This approximation provides a practical and reliable way to measure functional distances when only sampled covariate values and black-box estimators are available.

\subsection{Consistency of basis recovery}
\label{sec:consistency}
We establish the consistency of the estimated basis functions $\{\hat g_k\}_{k=1}^\khat$ produced by the d-fSPA algorithm. 
The consistency result requires two conditions: one controlling the within-study estimation error and another ensuring that the observed studies are sufficiently diverse to approximate each vertex of the simplex.

The first condition controls the within-study estimation error. Recall that $\epsilon^{(i)}(\bx) := \hat f^{(i)}(\bx) - f^{(i)}(\bx)$ denotes the estimation error for study $i$. In most meta-analyses, the number of studies $m$ is moderate, so we require only that the within-study estimation error is well-controlled as the study-level sample sizes $n_i$ grow.

\begin{assumption}[Estimation error conditions]
\label{cond:est_error}
We assume the following conditions:
\begin{itemize}
\zeroitem
    \item [(a)] $\sup_{\bx \in \mathcal{X}}\E[\epsilon^{(i)}(\bx)^2] = O(n_i^{-r})$ for some $r > 0$ and all $i$.
    \item [(b)] There exists a constant $ 0 < a < r$ s.t. $m = o( \inf_i n_i^{a})$.
\end{itemize}
\end{assumption}
Assumption~\ref{cond:est_error}(a) requires the mean squared estimation error to vanish uniformly over the covariate space $\mathcal{X}$ as the within-study sample sizes grow.
Many meta-learner and machine-learning estimators satisfy this condition. For parametric estimators, we typically have $r=1$.
For non-parametric or machine-learning estimators (e.g., random forests \citep{wager2018estimation}; doubly-robust estimators \citep{kennedy2023towards}; local polynomial estimators \citep{kennedy2024minimax}; X-learner \citep{kunzel2019metalearners}), the rate $r$ depends on underlying smoothness conditions and is usually slightly below 1.
Assumption~\ref{cond:est_error}(b) ensures that the number of studies does not grow too fast relative to the within-study sample sizes.
Together, these conditions imply that the estimation error is uniformly small across all studies with probability tending to one.

The second condition controls the degree of purity among the observed studies. Since the d-fSPA algorithm recovers basis functions by selecting from the observed studies, its accuracy depends on having studies that lie sufficiently close to each vertex of the simplex.
\begin{assumption}[Approximate purity condition]
\label{cond:purity}
For each $k=1,\ldots,K$ and $\eta > 0$, define the near-pure count $M_k(\eta) := |\{1\le i\le m:\, \pi_{ik}\ge 1-\eta\}|$.
There exists a sequence $\delta_m$ s.t. $\delta_m \max_{1\leq k \leq K} \Vert g_k \Vert = o(1)$ and $\min_{1\le k\le K} M_k(\delta_m) \ge 1$.
\end{assumption}
Assumption~\ref{cond:purity} requires that for each basis function $g_k$, at least one study has its weight predominantly concentrated on $g_k$, with the approximation error $\delta_m$ vanishing as $m$ grows. When $\delta_m = 0$, there exists a pure study for every basis function, i.e., a study $i$ with $\bpi_i = e_k$ for each $k$, corresponding to the pure node condition commonly assumed in the vertex-hunting literature \citep[e.g.,][]{gillis2013fast, jin2024improved}. Assumption~\ref{cond:purity} relaxes this by requiring only approximate purity. 

Whether this condition holds generally depends on the weight model in Assumption~\ref{cond:weight_model}. For instance, a weight distribution that places non-negligible mass near each vertex of the simplex (e.g., a Dirichlet model with suitable concentration parameters) ensures that near-pure studies emerge as $m$ grows, whereas weights tightly concentrated away from the vertices may violate the condition even for large $m$. More generally, the near-pure count $M_k(\delta_m)$ characterizes how many studies lie near each vertex at the resolution $\delta_m$; this quantity plays a role in the choice of denoising parameters for the d-fSPA algorithm.

With these two conditions, we establish the following consistency guarantee.
\begin{theorem}[Consistency of d-fSPA]
\label{thm:bcons}
Suppose $\|\cdot\|$ is the $L^2(\mu)$ norm. Under Assumptions~\ref{cond:low_rank},~\ref{cond:est_error}, and~\ref{cond:purity}, with an appropriate choice of the denoising parameters $(N, \Delta)$, we have, up to permutation,
\[
\max_{1\le k\le \min(K, \khat)}
\|\hat g_k - g_k\|
=
o_P(1).
\]
\end{theorem}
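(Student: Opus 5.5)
The plan is to reduce to a deterministic perturbation bound for SPA in the Hilbert space $L^2(\mu)$. First show the noise is uniformly small with high probability, then show the observed point cloud approximates the simplex vertices, and finally invoke an SPA stability argument in the style of \citet{gillis2013fast} and \citet{jin2024improved}, re-derived for a Hilbert space.

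\textbf{Step 1: uniform noise control.} By Fubini, Assumption~\ref{cond:est_error}(a) gives $\E\|\epsilon^{(i)}\|^2 = \int \E[\epsilon^{(i)}(\bx)^2]\,d\mu(\bx) \le C n_i^{-r}$. Fix $b$ with $a<b<r$ (not $b=a$, see below) and set $\tau_m := n_{\min}^{-(r-b)/2} \to 0$. Markov's inequality and a union bound give
\[
P\Big(\max_i \|\epsilon^{(i)}\| > \tau_m\Big) \le \sum_{i=1}^m \frac{C n_i^{-r}}{\tau_m^2} \le C\, m\, n_{\min}^{-b}.
\]
By Assumption~\ref{cond:est_error}(b), $m\,n_{\min}^{-b} = o(n_{\min}^{a-b}) \to 0$. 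The choice $b>a$ is what makes this union bound vanish; with $b=a$ it would only be $o(1)$ times a constant, which is still fine but less clean. So on an event $\mathcal{E}_m$ with $P(\mathcal{E}_m)\to1$ we have $\max_i\|\hat f^{(i)} - f^{(i)}\|\le\tau_m$.

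\textbf{Step 2: geometry of the point cloud.} On $\mathcal{E}_m$, every $\hat f^{(i)}$ lies within $\tau_m$ of $\mathrm{conv}(g_1,\dots,g_K)$. By Assumption~\ref{cond:purity}, for each $k$ there is a study $i_k$ with $\pi_{i_k k}\ge 1-\delta_m$. This gives
\[
\|f^{(i_k)}-g_k\| = \Big\|\sum_{j\ne k}\pi_{i_k j}(g_j-g_k)\Big\| \le 2\delta_m\max_j\|g_j\| = o(1),
\]
so $\|\hat f^{(i_k)} - g_k\| \le \epsilon_m := \tau_m + 2\delta_m\max_j\|g_j\| \to 0$.

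For denoising, choose $\Delta$ and $N$ so that denoising is harmless. The simplest option is $N=1$ with any $\Delta=\Delta_m\to0$. No point is removed, and each replacement is an average of points within $\Delta_m$, hence lies within $\tau_m+\Delta_m$ of the simplex. The near-vertex point $i_k$ then moves by at most $\Delta_m$, so it stays within $\epsilon_m+\Delta_m$ of $g_k$. The stated consistency result does not require the denoising step, consistent with the paper's remark; more aggressive $(N,\Delta)$ tied to $M_k(\delta_m)$ can be handled the same way provided the near-vertex points survive.

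\textbf{Step 3: deterministic SPA bound in $L^2(\mu)$ (main obstacle).} All $g_k$ and $\hat f^{(i)}$ lie in a Hilbert space, so the projections in Algorithm~\ref{alg:spa} are orthogonal projections and the only geometric input needed is the strict convexity of the squared norm. I would restrict to the finite-dimensional subspace $V=\mathrm{span}\{g_k,\hat f^{(i)}\}$, where SPA coincides with the classical algorithm. Lemmas such as ``the maximum of $\|P h\|$ over a perturbed simplex is attained near a vertex'' transfer isometrically, but the constants must not depend on $\dim V$, which grows with $m$.

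The key inequality, proved by induction on $k$, is the following: after projecting out directions within $O(\epsilon)$ of $k-1$ true vertices, the residual norm of any point $\sum_j\pi_j g_j + e$ satisfies a strict-convexity gap. Concretely, $\|Ph\|^2 \le \max_j \|Pg_j\|^2 - c\,\gamma\,(1-\max_j\pi_j) + O(\epsilon)$, where $\gamma>0$ is the minimal singular value of the $(K-1)$-frame $(g_j-g_1)$. This $\gamma$ is positive by the non-degeneracy in Assumption~\ref{cond:low_rank}, and it depends only on the $g$'s.

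This inequality forces the selected point to have $\max_j\pi_j\ge1-O(\epsilon/\gamma)$ for some new vertex $j$, so $\|\hat g_k-g_{j}\|=O(\epsilon_m/\gamma)$. Applying the induction for $k\le\min(K,\khat)$ gives the bound up to permutation. I expect the work to concentrate in verifying these projection-perturbation inequalities, in particular that projecting onto the span of estimated rather than true vertices perturbs norms by only $O(\epsilon)$ uniformly. Once that holds, combining with $P(\mathcal{E}_m)\to1$ yields the $o_P(1)$ conclusion.
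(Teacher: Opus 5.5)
Your architecture matches the paper's. You control $\max_i\|\epsilon^{(i)}\|$ by Markov and a union bound and use the near-pure bound $\|f^{(i_k)}-g_k\|\le 2\delta_m d_{\max}$ (together these are Lemma~\ref{lem:beta_control}). You then invoke a deterministic SPA perturbation bound (Theorem~\ref{thm:spa_pure}) and check that denoising does not inflate the relevant $\beta$. Two of your choices are genuine simplifications. First, $N=1$, $\Delta_m\to 0$ is a legitimate ``appropriate choice'' and sidesteps the paper's survival argument for $N=\lfloor c\min_k M_k(\delta_m)\rfloor$. Second, every SPA iterate, projection, and distance to $\mathcal S$ lives in the finite-dimensional span $V$, so any Euclidean SPA bound with dimension-free constants transfers isometrically; this would replace the paper's Hilbert-space re-derivation of its geometric lemmas.

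The gap is in Step~3. The key inequality is not just unproved; as you formulate it, it is false at the last iteration. SPA is not translation invariant, so an affine non-degeneracy constant cannot control it. At iteration $K$ the residual of the one unselected vertex is $\|(I-P)g_{k_K}\|$, where $P$ projects onto the span of $K-1$ selected functions near $g_{k_1},\dots,g_{k_{K-1}}$. This residual is bounded below only if $g_{k_K}\notin\mathrm{span}\{g_{k_1},\dots,g_{k_{K-1}}\}$, i.e.\ under linear rather than affine independence.

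Here is a concrete failure. Take $K=2$ with $g_1\equiv 1$, $g_2\equiv -1$ (or a ``null'' basis $g_1\equiv 0$). Put a third of the studies at each of $\pi_i=(1,0)$, $(0,1)$, $(1/2,1/2)$, and let $\epsilon^{(i)}=n_i^{-1/2}Z_i\phi$ with $Z_i$ i.i.d.\ standard normal and a bounded $\phi\perp 1$. All assumptions hold. Every (denoised) function is a constant plus noise, so after the first pick all residuals are of noise order. The second pick then lands on the midpoint cluster with probability bounded away from zero. This remains true under the paper's $(N,\Delta)$, because the midpoint cluster is as large as the vertex clusters, so Theorem~\ref{thm:bcons} as stated fails here. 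Correspondingly, your inequality $\|Ph\|^2\le\max_j\|Pg_j\|^2-c\gamma(1-\max_j\pi_j)+O(\epsilon)$ fails for $h=\tfrac12 g_1+\tfrac12 g_2$: the left side is nonnegative while the right side is $-c\gamma/2+O(\epsilon)$. (In squared norms the gap should also scale like $\gamma^2$, not $\gamma$.)

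The paper's proof has the same hole, so following it will not close yours. Lemma~\ref{lem:S6} invokes Lemmas~\ref{lem:S3} and~\ref{lem:S4} with a projection of rank $s-1$, which those lemmas permit only for $s-1\le K-2$. At $s=K$ the index $\lambda_{K-1-(s-1)}$ becomes $\lambda_0$, and the claim $d^{(K)}_{\max}\ge\sigma_\ast/2$ is false in the example above.

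There are two ways to repair this.
\begin{enumerate}
\item Assume $g_1,\dots,g_K$ are linearly independent in $L^2(\mu)$, i.e.\ $\lambda_K(\Gamma)>0$. Weyl's inequality then gives $\lambda_{K-t}\bigl(G^\ast(I-P)G\bigr)\ge\lambda_K(\Gamma)$ for every orthogonal projection $P$ of rank $t\le K-1$. Combined with the trace argument of Lemma~\ref{lem:S4}, this supplies the lower bound on the unselected residual that your induction (and Lemma~\ref{lem:S6}) needs at the final step, with $\sqrt{\lambda_K(\Gamma)}$ in place of your affine $\gamma$.
\item Run SPA on the lifted elements $(\hat f^{(i)},1)\in L^2(\mu)\oplus\mathbb R$. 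Linear independence of the $(g_k,1)$ is equivalent to the affine condition in Assumption~\ref{cond:low_rank}.
\end{enumerate}
With either fix, the argument goes through. Combine your reduction to $V$, a full-column-rank Euclidean SPA bound such as that of \citet{gillis2013fast} (whose constants do not depend on the ambient dimension), and your bound $\beta\le\epsilon_m+\Delta_m$ for the denoised functions.
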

When $\khat = K$, Theorem~\ref{thm:bcons} states that all $K$ estimated basis functions converge to the true basis functions, up to permutation. When $\khat > K$, only the first $K$ of the recovered bases are guaranteed to be consistent for the true basis functions.

Our proof extends the consistency analysis of \citet{jin2024improved} from the finite-dimensional setting to the functional setting. The overall strategy is the same as we first establish a finite-sample error bound, and then show that this bound vanishes under our assumptions. However, moving from vectors to functions requires re-deriving several key geometric inequalities in the functional setting.
A notable feature of our analysis is that the denoising step is not strictly required for consistency. In this sense, the estimation error is controlled by Assumption~\ref{cond:est_error} rather than by the denoising step. Nevertheless, we find that the denoising step is valuable in practice. Our numerical studies show that it  reduces finite-sample basis recovery error, particularly when $m$ is moderate or when within-study estimation error is non-negligible. 

An appropriate choice of $(N, \Delta)$ depends on the purity level $\delta_m$ and the within-study estimation error rate, and therefore implicitly on the weight model and estimation error. Intuitively, the neighborhood radius $\Delta$ should be large enough that near-vertex studies cluster together and are not mistakenly pruned, while the pruning threshold $N$ should scale with the number of near-pure studies expected at each vertex. Under the kernelized Dirichlet weight model (Example~\ref{exp:dirichlet}), for instance, the expected number of near-pure studies at each vertex satisfies $M_k(\delta_m) \asymp m\,\delta_m^{K-1}$, leading to the choices $N \asymp \log m$ and $\Delta \asymp (\log m/m)^{1/(K-1)} + n_{\min}^{(a-r)/2}$. Explicit formulas and derivations are given in the Appendix (Section~\ref{sec:denoising_proof}).


\subsection{Choice of $K$}

The proposed d-fSPA algorithm requires the estimated number of basis functions $\khat$ as an input. \metahunt{} treats $K$ as a fixed, low-dimensional parameter that does not grow with the number of studies $m$. This reflects the premise that the basis functions represent an underlying latent structure shared across studies, and this structure should remain stable as additional studies become available.

Choosing $K$ is an important modeling decision. If $K$ is too small, the resulting convex hull fails to capture key sources of cross-study heterogeneity, leading to information loss and underfitting. Conversely, selecting the value of $K$ that is too large may lead to overfitting, introduce instability when fitting the weight model, and increase prediction variance. We discuss two data-driven approaches for selecting $K$.

The first approach is to examine the following reconstruction error as a function of $K$,
\begin{eqnarray}
\label{eq:re_error}
\mathcal{E}(K) := \frac{1}{m}
\sum_{i=1}^m 
\min_{\bpi_i}\left\Vert
 \hat f^{(i)} - 
\sum_{k=1}^K \pi_{ik}\, \hat g_k
\right\Vert,
\end{eqnarray}
which measures the average distance between the observed functions and their projections onto the convex hull spanned by the $K$ selected bases. For each candidate value of $K$, we compute the corresponding reconstruction error $\mathcal{E}(K)$. Plotting this error against $K$ typically yields an “elbow” shape, where the reduction in error slows markedly beyond a certain point. This unsupervised approach is analogous to selection heuristics used in principal component analysis and archetypal analysis, and provides a convenient way to identify a reasonable range for $K$.

The second approach is to evaluate the prediction error via cross-validation for different values of $K$. Since the ultimate goal is to predict the target function $f^{(0)}(\bx)$, it is natural to select $\khat$ based on out-of-sample predictive performance. In this approach, we perform cross-validation over the entire estimation and prediction pipeline. For each candidate $K$, we estimate the basis functions using a subset of studies, fit the weight model, generate predictions for the held-out studies, and compute the prediction error. The value of $K$ that minimizes the average validation error is then selected. Although more computationally demanding, this supervised procedure is aligned with our inferential objective and appears to yield better predictive performance in practice. But, the two approaches complement each other. One can use the elbow plot to find an appropriate range for $K$, and use cross-validation to further refine the choice within that range.

\subsection{Fitting the weight model}

Given the estimated basis functions $\{ \hat g_k \}_{k=1}^\khat$ based on the d-fSPA algorithm, the next task is to estimate the study-specific mixing weights $\bpi_i$, which characterize how each study combines the shared latent basis functions.
For each study $i \in \{1, \ldots, m\}$, we obtain $\hat \bpi_i$ by projecting the observed function $\hat f^{(i)}$ onto the convex hull spanned by the estimated basis functions. Specifically, we solve the following constrained optimization problem, 
\begin{eqnarray}
\label{eq:solvepi}
        \hat \bpi_i = \argmin_{\bpi \in \Delta_{\khat-1}}  \left\Vert \hat f^{(i)} - \sum_{k=1}^\khat \pi_{ik} \hat g_{k}  \right\Vert.
\end{eqnarray}
The function norm can be chosen flexibly. In practice, we recommend the $L^2$ norm with respect to the empirical covariate distribution of the target population.
This optimization step yields the mixture weights that best reconstruct each study-level function using the recovered basis functions. 

With the estimated weights $\hat\bpi_i$ in hand, we then fit the weight model introduced in Section~\ref{sec:weight_model}. Any model that maps study-level covariates $\bW_i$ to simplex-valued weights $\bpi_i$ can be used, including parametric models (e.g. Dirichlet regression), semi-parametric models (e.g. RKHS Dirichlet regression model), or nonparametric approaches (e.g. random forests, neural networks).

A natural alternative to \metahunt{} is to pool all individual-level data across studies and fit a single regression model using both individual-level covariates $\bX$ and study-level covariates $\bW$ as joint inputs, where every individual within the same study shares the same value of $\bW$. The fitted model is then evaluated at $\bW_0$ and the target-population covariate distribution to produce a prediction for the target site. We refer to this as the pooling approach.

Compared to the pooling approach, \metahunt{} is privacy-preserving and computationally efficient. First, the pooling approach requires access to individual-level data from every source study, whereas \metahunt{} operates entirely on aggregate-level summaries $\{(\bW_i, \hat f^{(i)})\}_{i=1}^m$ and is naturally privacy-preserving. Second, the pooled model must be refit whenever a new source study is added, while \metahunt{} only updates the newly added study-level function estimate and re-runs the low-rank basis hunting step, with the costly fitting of study-specific functions performed locally at each site. Empirically, we find that \metahunt{} achieves lower RMSE than the pooling approach, possibly reflecting the implicit denoising benefit of the low-rank representation; see Section~\ref{sec:empirical} for detailed comparisons.

\section{Conformal Prediction}
\label{sec:conformal}

In Section~\ref{sec:estimation}, we explained how to obtain the recovered basis functions $\{ \hat g_k \}_{k=1}^\khat$ using the d-fSPA algorithm, and fit the weight model $\widehat{\mathcal{M}}$ that maps study-level covariates $\bW$ to mixing weights $\bpi$. With these two components, we can now predict the target function $f^{(0)}$ corresponding to a new study with covariates $\bW_0$.
Specifically, we first generate the predicted weight vector $\tilde \bpi_0$ and then construct the target function:
\begin{eqnarray} 
\tilde f^{(0)} (\cdot) = \sum_{k=1}^\khat \tilde \pi_{0k} \hat g_k (\cdot) \quad \text{where} \ \tilde \bpi_0 = \widehat{\mathcal{M}}(\bW_0).
\end{eqnarray}
The tilde notation highlights that these are predictions, in contrast to the hat notation used for estimation.

With the predicted target function, we now develop prediction intervals to quantify uncertainty through conformal prediction. 
This task presents two main challenges. First, our outcome of interest is function-valued. Some existing work \citep[e.g.,][]{lei2015conformal} develops conformal confidence bands for random functions with distribution-free guarantee.  However, these full confidence bands tend to be overly conservative and uninformative in practice. In our setting, we instead focus on pointwise inference for a fixed covariate value $\bx$, aiming to characterize uncertainty in the scalar quantity $f^{(0)}(\bx)$. 

Second, there is a measurement-error problem as the oracle study-level functions $f^{(i)}(\bx)$ are never observed directly. Instead, we only have access to their noisy estimates $\hat f^{(i)}(\bx)$.  These estimates carry estimation error that may vary across studies and propagates through all stages of \metahunt. 
To obtain theoretical guarantees for our prediction intervals, we therefore require conditions that control this estimation error.

As shown in Algorithm~\ref{alg:split_conformal}, we adapt the split conformal prediction procedure of \citet{shafer2008tutorial} to construct pointwise prediction intervals for $f^{(0)}(\bx)$ at a fixed covariate value $\bx$. 
We first split the studies into a training set and a calibration set. The training set is used to run the full estimation pipeline from Section~\ref{sec:estimation}, including basis hunting via the d-fSPA algorithm, weight estimation, and fitting the weight model. These together yield a predictor $\widehat f^{\mathrm{tr}}(\bx;\bW)$. The calibration set is then used to compute conformity scores, defined as absolute residuals between the observed study-level functions $\hat f^{(i)}(\bx)$ and their corresponding predictions. The empirical $(1-\alpha)$-quantile of these conformity scores determines the width of the prediction interval centered at the predicted value for the target study. 

\begin{algorithm}[t]
\SetAlFnt{\footnotesize} 
\caption{Split conformal prediction for $f^{(0)}(\bx)$ \label{alg:split_conformal}}

\KwIn{Estimated functions $\{\hat f^{(i)}(\cdot)\}_{i=1}^m$; target-study covariates $\bW_0$;
evaluation point $\bx$;
miscoverage level $\alpha \in (0,1)$;
estimation pipeline.}

\textbf{1. Data split:} Randomly split the index set $\{1,\ldots,m\}$ into a training set $\mathcal I_{\mathrm{tr}}$ and a calibration set $\mathcal I_{\mathrm{cal}}$, with sizes $m_{\mathrm{tr}}$ and $m_{\mathrm{cal}}$, respectively.

\textbf{2. Train predictor on training studies:} Using only $\{(\bW_i,\hat f^{(i)}(\cdot)) : i \in \mathcal{I}_{\mathrm{tr}}\}$, run the estimation pipeline of Section~\ref{sec:estimation} to obtain a prediction rule $\widehat f^{\mathrm{tr}}(\bx;\bW): \mathcal{X} \times \mathcal{W} \to \mathbb{R}$.

\textbf{3. Compute calibration residuals:} For each $i \in \mathcal{I}_{\mathrm{cal}}$, calculate the conformity score.
\begin{itemize}
\zeroitem
    \item [(i)] Compute the predicted value at $\bx$: $\tilde f^{(i)}(\bx) := \widehat f^{\mathrm{tr}}(\bx; \bW_i)$.
    \item [(ii)] Define the conformity score (absolute residual): $r_i := \bigl|\hat f^{(i)}(\bx) - \tilde f^{(i)}(\bx)\bigr|$.
\end{itemize}

\textbf{4. Estimate conformal quantile:} Let $r_{(1)} \leq \cdots \leq r_{(m_{\mathrm{cal}})}$ be the order statistics of $\{r_i : i \in \mathcal{I}_{\mathrm{cal}}\}$ and set $q_{1-\alpha}(\bx) := r_{(\lceil (1-\alpha)(m_{\mathrm{cal}}+1) \rceil)}$.

\textbf{5. Predict for the new study:} Compute the point prediction for the new study, $\tilde f^{(0)}(\bx) := \widehat f^{\mathrm{tr}}(\bx; \bW_0)$.

\KwOut{Split conformal prediction interval
\[
C_\alpha(\bx)
=
\bigl[\tilde f^{(0)}(\bx) - q_{1-\alpha}(\bx),\; \tilde f^{(0)}(\bx) + q_{1-\alpha}(\bx) \bigr].
\]}
\end{algorithm}

Split conformal prediction guarantees valid marginal coverage when the calibration data are exchangeable with the target data. In our setting, Assumption~\ref{cond:low_rank}--\ref{cond:exchagneable} ensure exchangeability of $(\bW_i, f^{(i)}(\bx))$ across studies. Additionally, we require mild control over the estimation error in the study-level functions $\hat f^{(i)}(\bx)$ as stated in Assumption~\ref{cond:est_error}.

Finally, the following theorem establishes that our conformal interval achieves asymptotic marginal coverage for the target quantity $f^{(0)}(\bx)$. 
\begin{theorem}[Asymptotic marginal coverage guarantee]
\label{thm:ci_cvg}
Under Assumption~\ref{cond:low_rank}--~\ref{cond:est_error}, we have
\begin{eqnarray*}
    \lim \Pr( f^{(0)}(\bx) \in C_\alpha(\bx) ) \geq 1- \alpha.
\end{eqnarray*}
\end{theorem}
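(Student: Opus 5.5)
Conditional on the training studies $\mathcal I_{\mathrm{tr}}$, the predictor $\widehat f^{\mathrm{tr}}$ is a fixed function. I will compare the feasible scores built from noisy estimates with oracle scores that the paper's exchangeability argument covers. Define the oracle scores $r_i^\ast := |f^{(i)}(\bx) - \widehat f^{\mathrm{tr}}(\bx;\bW_i)|$ for $i\in\mathcal I_{\mathrm{cal}}\cup\{0\}$. By Assumptions~\ref{cond:low_rank}--\ref{cond:exchagneable}, the triples $(\bW_i,\bpi_i,f^{(i)})$ are exchangeable over $i=0,1,\ldots,m$. Conditionally on the training block, the calibration triples and the target triple therefore remain exchangeable, and so do $\{r_i^\ast\}_{i\in\mathcal I_{\mathrm{cal}}\cup\{0\}}$. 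I will need that the training block, including the estimation errors $\epsilon^{(i)}$ of training studies, does not break this symmetry. I will assume the errors are generated study-by-study given $f^{(i)}$, which is implicit in the setup. The standard split-conformal argument then gives $\Pr(r_0^\ast \le q^\ast) \ge 1-\alpha$, where $q^\ast := r^\ast_{(\lceil(1-\alpha)(m_{\mathrm{cal}}+1)\rceil)}$ is the oracle quantile.

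Next I control the discrepancy between the feasible scores $r_i$ and the oracle scores. By the reverse triangle inequality, $|r_i - r_i^\ast| \le |\epsilon^{(i)}(\bx)|$. Set $\varepsilon_m := \max_{1\le i\le m}|\epsilon^{(i)}(\bx)|$. By Markov's inequality and a union bound, Assumption~\ref{cond:est_error}(a) gives
\[
\Pr(\varepsilon_m > t) \le \sum_i \E[\epsilon^{(i)}(\bx)^2]/t^2 \lesssim m\, n_{\min}^{-r}/t^2 .
\]
Taking $t_m := n_{\min}^{-(r-a)/2}\to 0$ and using Assumption~\ref{cond:est_error}(b), $m=o(n_{\min}^{a})$, this bound is $o(1)$. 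Hence $\varepsilon_m = o_P(1)$. Order statistics are $1$-Lipschitz in the sup norm, so $|q_{1-\alpha}(\bx) - q^\ast| \le \varepsilon_m$. On the event $\{\varepsilon_m\le t_m\}$, any $r_0^\ast\le q^\ast - t_m$ implies $r_0^\ast \le q_{1-\alpha}(\bx)$, that is, $f^{(0)}(\bx)\in C_\alpha(\bx)$. This direction needs no noise on the target, since the target quantity is the oracle value $f^{(0)}(\bx)$.

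Combining the two steps,
\[
\Pr\bigl(f^{(0)}(\bx)\in C_\alpha(\bx)\bigr) \ge \Pr(r_0^\ast \le q^\ast) - \Pr(q^\ast - t_m < r_0^\ast \le q^\ast) - \Pr(\varepsilon_m>t_m).
\]
The first term is at least $1-\alpha$ and the last term is $o(1)$. The middle term is the main obstacle, because it requires anti-concentration of $r_0^\ast$ near the random threshold $q^\ast$. I will try to avoid any continuity assumption by an exchangeability trick. Set $\tilde q := r^\ast_{(\lceil(1-\alpha)(m_{\mathrm{cal}}+1)\rceil)}$ computed from the scores shifted by $-t_m$. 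Instead of shifting the threshold, I will deflate it: apply the exchangeable-quantile argument directly to the scores $r_i^\ast + t_m$ for $i\in\mathcal I_{\mathrm{cal}}$ and $r_0^\ast$. This fails, however, because the calibration scores would then be inflated rather than deflated.

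The fallback is to write $\Pr(q^\ast - t_m< r_0^\ast\le q^\ast) \le \sup_{u}\Pr(u-t_m< r_0^\ast\le u \mid \text{train})$. This tends to zero whenever the conditional law of $f^{(0)}(\bx)$ given $\bW_0$ has no atoms. Since $\bpi_0$ is continuous in typical weight models such as Example~\ref{exp:dirichlet}, and $f^{(0)}(\bx)=\sum_k\pi_{0k}g_k(\bx)$ with the $g_k(\bx)$ not all equal, the law of $r_0^\ast$ is continuous. A monotone-limit (Pólya-type) argument along $t_m\downarrow0$ then gives uniform convergence to zero. If the $g_k(\bx)$ coincide, $f^{(0)}(\bx)$ is deterministic across studies and coverage is trivial. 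Taking $\liminf$ yields the claim.
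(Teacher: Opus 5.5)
Your argument is essentially the paper's proof: oracle scores made exchangeable by Assumptions~\ref{cond:low_rank}--\ref{cond:exchagneable}, the bound $|r_i-r_i^\ast|\le|\epsilon^{(i)}(\bx)|$, Markov plus a union bound with threshold $n_{\min}^{(a-r)/2}$, $1$-Lipschitz stability of the order statistic, and a thin-strip term $\Pr(q^\ast-t_m<r_0^\ast\le q^\ast)$. Like the paper, you can only remove that term by assuming the conditional law of $f^{(0)}(\bx)$ is atomless, an extra condition not implied by Assumptions~\ref{cond:low_rank}--\ref{cond:est_error}, and you are right to single it out as the crux. The one claim to drop is your aside that coverage is ``trivial'' when all $g_k(\bx)$ coincide, which can happen at a crossing point without violating non-degeneracy in $L^2(\mu)$: with a common bias $\epsilon^{(i)}\equiv\delta_n>0$ (small enough for Assumption~\ref{cond:est_error}), every $\hat g_k(\bx)$ and hence every prediction equals $f^{(0)}(\bx)+\delta_n$, all calibration scores vanish so $q_{1-\alpha}(\bx)=0$ whenever it is finite, and $C_\alpha(\bx)=\{f^{(0)}(\bx)+\delta_n\}$ misses $f^{(0)}(\bx)$ with probability one, so this case must be excluded by the continuity assumption rather than dismissed.
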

This guarantee holds despite the existence of estimation error in the study-level functions and the fact that prediction proceeds through a multi-stage procedure involving basis hunting and weight modeling. 
The key reason is that conformal prediction 
relies mainly on the exchangeability of calibration residuals.
Because Assumption~\ref{cond:est_error} ensures that estimation errors are uniformly small, even though the prediction rule is learned through a complex pipeline, the conformal adjustment corrects for its uncertainty in a distribution-free manner, ensuring valid coverage in large samples.

It is worth contrasting our conformal procedure with the pooling approach discussed at the end of Section~\ref{sec:estimation} from the perspective of exchangeability. The pooling approach treats $\bX$ and $\bW$ symmetrically as joint covariates in a single regression model. Depending on how conformal prediction is applied, this can implicitly require the exchangeability of individual-level triples $(\bX_{ij}, \bW_i, Y_{ij})$ across both studies and individuals within studies, which is a stronger condition than the study-level exchangeability of $\{(\bW_i, f^{(i)})\}$ used by \metahunt. Empirically, the pooling approach tends to produce shorter prediction intervals, consistent with its access to higher-resolution data; see Section~\ref{sec:empirical} for detailed comparisons.

The split conformal procedure in Algorithm~\ref{alg:split_conformal} produces intervals of a single width $q_{1-\alpha}(\bx)$ that does not depend on $\bW_0$, which can be conservative when calibration residuals vary across studies. Several extensions of conformal prediction can potentially improve efficiency by allowing  the width to adapt to $\bW_0$. Weighted conformal prediction \citep{tibshirani2019conformal} reweights calibration residuals by kernel similarity to $\bW_0$. Conformalized quantile regression \citep{romano2019conformalized} conformalizes intervals from a conditional quantile regression on $\bW$. Both approaches retain a marginal coverage guarantee, though weighted conformal additionally requires the weights to coincide with the true likelihood ratio.

\section{Extension to Functionals of the Target Function}

\metahunt{} naturally extends from predicting the target function $f^{(0)}$ to predicting functionals of $f^{(0)}$. 
Let $\Phi : \mathcal{F} \rightarrow \mathbb{R}$ be a functional mapping elements of the function space $\mathcal{F}$ to real numbers.
Our object of interest is then $\theta^{(0)} = \Phi\big(f^{(0)}\big)$.
This formulation accommodates many quantities of substantive interest, including averages, integrals, and other policy-relevant summaries of the predicted function.

As a leading example, consider the average treatment effect (ATE) for the target population.  If $f^{(0)}(\bx)$ represents a conditional average treatment effect (CATE), the target ATE is given by
\[
\tau^{(0)}
=
\Phi\big(f^{(0)}\big)
=
\mathbb{E}_{\mathcal{P}_{0,X}}
\!\left[
f^{(0)}(X)
\right],
\]
where $\mathcal{P}_{0,X}$ denotes the covariate distribution of the target population.

Given the predicted target function $\tilde f^{(0)}$, a natural point estimator of $\theta^{(0)}$ is the plug-in estimator $\tilde \theta^{(0)} = \Phi\big(\tilde f^{(0)}\big)$.
In the ATE example, this corresponds to
\[
\tilde\tau^{(0)}
=
\mathbb{E}_{\mathcal{P}_{0,X}}
\!\left[
\tilde f^{(0)}(X)
\right].
\]

Uncertainty quantification for $\theta^{(0)}$ can be obtained by modifying the conformity score in Algorithm~\ref{alg:split_conformal}.
For each calibration study $i$, define
\[
r_i
=
\left|
\Phi\big(\hat f^{(i)}\big)
-
\Phi\big(\tilde f^{(i)}\big)
\right|.
\]
That is, we measure the discrepancy between the observed functional summary and its predicted counterpart. Prediction intervals for $\theta^{(0)}$ are then constructed using the empirical quantile of $\{r_i\}$, yielding a split-conformal interval centered at $\tilde \theta^{(0)}$.

\begin{corollary}[Asymptotic marginal coverage for functionals]
\label{cor:functional_cvg}
Suppose Assumptions~\ref{cond:low_rank}--~\ref{cond:est_error} hold. In addition, suppose $\Phi$ is Lipschitz continuous with respect to $\|\cdot\|_{L^2(\mathcal{P}_{0,\bm{X}})}$, i.e., there exists a constant $L_\Phi > 0$ such that
\begin{eqnarray*}
|\Phi(f) - \Phi(g)| \leq L_\Phi \|f - g\|_{L^2(\mathcal{P}_{0,\bm{X}})}, \quad \text{for all } f, g \in \mathcal{F}.
\end{eqnarray*}
Then
\begin{eqnarray*}
    \lim \Pr\!\left( \theta^{(0)} \in C_\alpha^\Phi \right) \geq 1 - \alpha,
\end{eqnarray*}
where $C_\alpha^\Phi := [\tilde\theta^{(0)} - q_{1-\alpha}^\Phi,\; \tilde\theta^{(0)} + q_{1-\alpha}^\Phi]$ is the split conformal prediction interval with $q_{1-\alpha}^\Phi$ denoting the $\lceil(1-\alpha)(m_{\mathrm{cal}}+1)\rceil$-th smallest value among the conformity scores $\{r_i: i \in \mathcal{I}_{\mathrm{cal}}\}$.
\end{corollary}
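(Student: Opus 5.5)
The plan is to reduce the corollary to the same argument that gives Theorem~\ref{thm:ci_cvg}. The only new ingredient is that the measurement error is now on the scalar scale, through $\Phi$, rather than pointwise in $\bx$.

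Define the oracle scores $r_i^\ast := |\Phi(f^{(i)}) - \Phi(\tilde f^{(i)})|$ for $i \in \mathcal I_{\mathrm{cal}}$, and $r_0^\ast := |\theta^{(0)} - \tilde\theta^{(0)}|$. Note that $\theta^{(0)} \in C_\alpha^\Phi$ if and only if $r_0^\ast \le q^\Phi_{1-\alpha}$.

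\textbf{Step 1 (exchangeability).} Condition on the training studies. The map $\tilde f(\cdot;\bW) = \widehat f^{\mathrm{tr}}(\cdot;\bW)$ is then fixed. By Assumptions~\ref{cond:low_rank}--\ref{cond:exchagneable}, the pairs $(\bW_i, f^{(i)})$ for $i \in \mathcal I_{\mathrm{cal}}\cup\{0\}$ are exchangeable. Hence $\{r_i^\ast\}_{i\in\mathcal I_{\mathrm{cal}}\cup\{0\}}$ are exchangeable, being a fixed measurable function of these pairs. The standard split-conformal quantile lemma then gives
\[
\Pr\bigl(r_0^\ast \le r^\ast_{(\lceil(1-\alpha)(m_{\mathrm{cal}}+1)\rceil)}\bigr)\ge 1-\alpha .
\]

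\textbf{Step 2 (perturbation).} Compare the observed scores $r_i = |\Phi(\hat f^{(i)}) - \Phi(\tilde f^{(i)})|$ with the oracle scores. By the reverse triangle inequality and the Lipschitz condition,
\[
|r_i - r_i^\ast| \le |\Phi(\hat f^{(i)})-\Phi(f^{(i)})| \le L_\Phi \|\epsilon^{(i)}\|_{L^2(\mathcal P_{0,\bX})}.
\]
Order statistics are $1$-Lipschitz in the sup norm, so
\[
|q^\Phi_{1-\alpha} - q^{\ast}_{1-\alpha}| \le L_\Phi \max_{i\in\mathcal I_{\mathrm{cal}}}\|\epsilon^{(i)}\|_{L^2(\mathcal P_{0,\bX})} =: L_\Phi E_m .
\]

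\textbf{Step 3 (uniform control of $E_m$).} This is the step I expect to be the main obstacle, since it needs uniformity over a growing number of studies. By Fubini and Assumption~\ref{cond:est_error}(a),
\[
\E\|\epsilon^{(i)}\|^2_{L^2(\mathcal P_{0,\bX})} = \int \E[\epsilon^{(i)}(\bx)^2]\,d\mathcal P_{0,\bX} = O(n_i^{-r}).
\]
A union bound with Markov's inequality at threshold $t_m = n_{\min}^{(a-r)/2}$ gives
\[
\Pr(E_m > t_m) \le \sum_i \frac{O(n_i^{-r})}{t_m^2} = O\bigl(m\, n_{\min}^{-a}\bigr) = o(1)
\]
by Assumption~\ref{cond:est_error}(b). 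Since $a<r$, we also have $t_m\to 0$. This is the same uniform-smallness fact invoked for Theorem~\ref{thm:ci_cvg}.

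\textbf{Step 4 (conclusion).} On the event $\{E_m \le t_m\}$, the inequality $r_0^\ast \le q^\ast_{1-\alpha}$ implies $r_0^\ast \le q^\Phi_{1-\alpha} + L_\Phi t_m$. Therefore
\[
\Pr(\theta^{(0)}\in C_\alpha^\Phi) \ge \Pr(r_0^\ast \le q^\Phi_{1-\alpha}) \ge 1-\alpha - o(1) - \Pr\bigl(q^\Phi_{1-\alpha} < r_0^\ast \le q^\Phi_{1-\alpha}+L_\Phi t_m\bigr).
\]
The remaining obstacle is to show that this last slab probability vanishes. I would first attempt the route presumably used for Theorem~\ref{thm:ci_cvg}: an anti-concentration or continuity condition on the distribution of $r_0^\ast$ near its quantile, so that a shrinking window of width $L_\Phi t_m\to0$ carries vanishing mass. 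If no such condition is available, I would instead apply the exchangeability argument directly to the inflated scores $r_i^\ast + L_\Phi t_m$. In that case the conclusion holds for an interval widened by $o(1)$, and the limit statement follows.
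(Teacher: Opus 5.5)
Your main route is the paper's proof. It uses oracle scores $r_i^\star$ with the split-conformal guarantee from exchangeability, the reverse-triangle/Lipschitz bound $|r_i-r_i^\star|\le L_\Phi\|\epsilon^{(i)}\|_{L^2(\mathcal P_{0,\bm X})}$, Fubini with Markov and a union bound at threshold $n_{\min}^{(a-r)/2}$ (this is Lemma~\ref{lem:asy_error_functional}), quantile stability, and a thin-strip step. Placing the strip at $q^\Phi_{1-\alpha}$ rather than at the oracle quantile changes nothing.

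Drop your fallback, though, because it does not prove the statement. Running exchangeability on the inflated scores $r_i^\star+L_\Phi t_m$ only certifies the widened interval $[\tilde\theta^{(0)}-q^\Phi_{1-\alpha}-2L_\Phi t_m,\ \tilde\theta^{(0)}+q^\Phi_{1-\alpha}+2L_\Phi t_m]$. An $o(1)$ widening cannot be removed when probability mass sits at the boundary.

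In fact no argument can avoid anti-concentration, since the conclusion fails under Assumptions~\ref{cond:low_rank}--\ref{cond:est_error} and the Lipschitz condition alone. Take $\Phi(f)=\E_{\mathcal P_{0,\bm X}}[f(\bm X)]$ and non-degenerate bases with $\Phi(g_1)=\cdots=\Phi(g_K)=\theta$; for instance $K=2$, $g_1\equiv 0$, $g_2(\bx)=x_1$ with $\E_{\mathcal P_{0,\bm X}}[X_1]=0$. Let all $n_i=n$, with deterministic common error $\epsilon^{(i)}\equiv b_n:=n^{-r/2}$.
\begin{itemize}
\item Every selected basis, raw or denoised, has $\Phi(\hat g_k)=\theta+b_n$.
\item Predicted weights lie in the simplex, so $\Phi(\tilde f^{(i)})=\theta+b_n$ for every $i$, including $i=0$.
\item Hence all $r_i=0$ and $q^\Phi_{1-\alpha}=0$, while $|\theta^{(0)}-\tilde\theta^{(0)}|=b_n>0$.
\end{itemize}
So coverage is $0$ for every $n$.

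Commit to your first route and state the anti-concentration condition as an explicit hypothesis. One sufficient version is a uniformly bounded conditional density for $\theta^{(0)}$ given the training data, calibration data and $\bW_0$. The paper does the same thing implicitly, asserting continuity ``under Assumption~\ref{cond:weight_model} (e.g., when $\bpi_0$ has a density)''. The example shows that for functionals one also needs $\bpi\mapsto\Phi(\sum_k \pi_k g_k)$ to be non-constant, since there $\theta^{(0)}\equiv\theta$ whatever the law of $\bpi_0$.
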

The Lipschitz condition is satisfied by many functionals of practical interest. For example, when $\Phi(f) = \mathbb{E}_{\mathcal{P}_{0,\bm{X}}}[f(\bm{X})]$ corresponds to the ATE, the Cauchy--Schwarz inequality gives $L_\Phi = 1$.

\section{Simulation Studies}
\label{sec:simulation}
We conduct simulation studies to evaluate the finite-sample performance of \metahunt.

\subsection{Setup}

We generate data from a two-layer hierarchical model using $m = 100$ source studies.  Although the baseline simulation uses $m=100$, we also explicitly vary the number of source studies from smaller to larger values to assess sensitivity to the number of available studies. 
In the first layer, we generate the study-level covariates $\bW_i \in \R^3$, the mixing weights $\bpi_i$,  and the corresponding oracle functions $f^{(i)}$. For each study $i$, the study-level covariates and mixing weights are generated as
\[
\begin{aligned}
\bW_i &\sim \mathcal{N}(\bm 0, \bm I_3), \\[-0.3em]
\boldsymbol{\alpha}_i &= 20\exp\!\big( (1, \bW_i)^\top \boldsymbol{\beta} \big), \\[-0.3em]
\bpi_i &\sim \mathrm{Dirichlet}(\boldsymbol{\alpha}_i),
\end{aligned}
\qquad\qquad
\boldsymbol{\beta} = 
\begin{pmatrix}
1 & -1 & -1 & 1 \\
2 & 1 & -1 & -1 \\
0 & 4 & 0 & 0 \\
1 & 0 & -3 & 0
\end{pmatrix}.
\]
Each study-level oracle function is then constructed as a convex combination of four basis functions
\begin{eqnarray*}
 f^{(i)}(x)
&=& \sum_{k=1}^4 \pi_{ik}\, g_k(x), \\
g_1(x)&=&-2x+3,\quad
g_2(x)=x^2/4,\quad
g_3(x)= 10 \sin(x/3),\quad
g_4(x)= -2|x+4|.   
\end{eqnarray*}

In the second layer, we generate individual-level data 
\((X_{ij}, Y_{ij})\) for each study, using a common sample size 
\(n_i = 200\). Specifically, $X_{ij} \sim \N(0, 5^2)$, and $Y_{ij} \mid X_{ij} \sim \N( f^{(i)}(X_{ij}), 5^2 )$.
For each study \(i\), we then estimate the study-level function 
\(\hat f^{(i)}\) using a random forest regression fitted to its individual-level data.  These estimated functions \(\{\hat f^{(i)}\}_{i=1}^m\) serve as the inputs to our low-rank basis hunting procedure based on the d-fSPA algorithm.

For implementation, we approximate the functional norm in the d-fSPA algorithm using the empirical $L^2$ norm with respect to the covariate distribution of the data-generating process, i.e., $\N(0,5^2)$. In practice, this is computed by evaluating each function on an independently sampled grid of covariate values and taking the empirical average of squared differences. For the denoising step in the d-fSPA algorithm, we set the tuning parameters to $N=0.5 \log m$ and $\Delta = \max_{ij} \Vert \hat f^{(i)} - \hat f^{(j)} \Vert/10$, which scale the neighborhood radius relative to the observed variability among the estimated functions.
After the basis hunting step, we estimate the study-specific mixing weights via constrained projection and fit the weight model using a Dirichlet regression linking the estimated weights to the study-level covariates $\bW_i$. This pipeline yields the fitted components required for prediction and subsequent conformal inference. All reported mean squared error and coverage results are based on 200 independent Monte Carlo simulation runs.

\subsection{Findings}

As shown in Figure~\ref{fig:chooseK}, we assess the choice of the number of basis functions $K$ using two proposed criteria: reconstruction error and cross-validation prediction error. The left panel reports the reconstruction error $\mathcal{E}(K)$. For both fSPA and d-fSPA algorithms, the error decreases sharply for small $K$ and then levels off, producing an elbow. The elbow appears near $K=4$ for both methods. The middle panel shows cross-validation prediction errors, which decline substantially up to the true value $K=4$ and then rise gradually for larger values. This pattern is consistent with the usual bias–variance tradeoff, as the model with small $K$ underfits and fails to capture cross-study heterogeneity, while an overly large $K$ introduces instability in the weight estimation step without improving predictive accuracy. Across both criteria, the d-fSPA algorithm generally achieves lower errors than the fSPA algorithm, reflecting the benefits of the denoising step in basis hunting.
The right panel shows the distribution of CV-selected $\khat$ across 200 simulation runs, where $K=4$ is selected most frequently. Taken together, these results support selecting $\khat=4$ and using the d-fSPA algorithm in subsequent analyses.

\begin{figure}[!t]
    \centering
    \includegraphics[width=\linewidth]{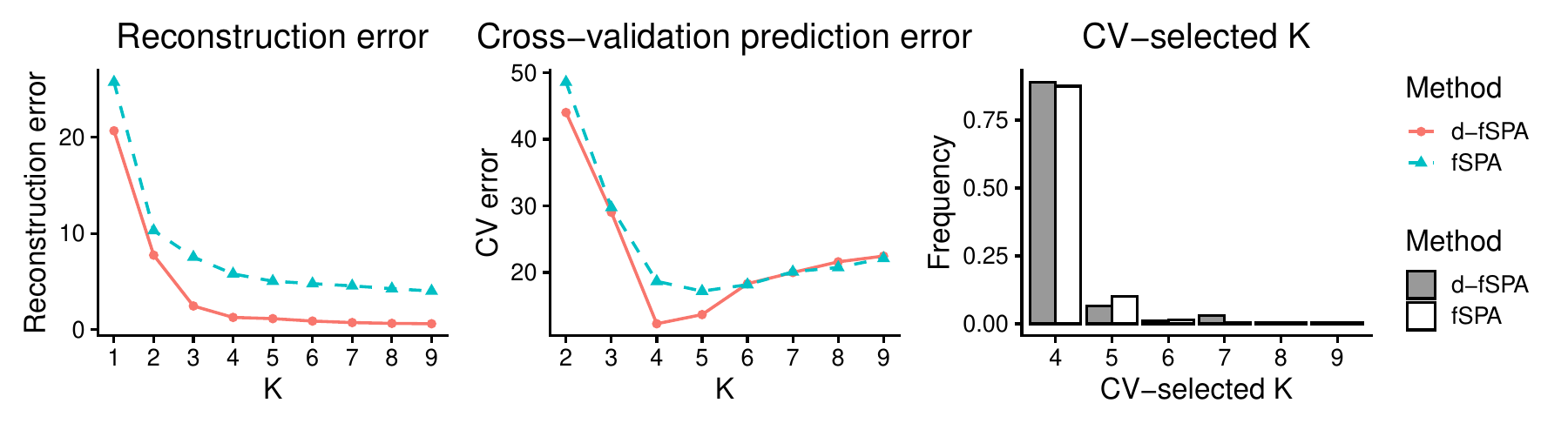}
    \caption{
    (Left panel) Reconstruction error decreases with $K$ and exhibits an elbow.
    (Middle panel) Cross-validation prediction error across values of $K$, minimized at $K=4$ for d-fSPA.
    (Right panel) Empirical distribution of cross-validation–selected $K$ over 200 Monte Carlo runs, with the true value $(K=4)$ selected most frequently.}
    \label{fig:chooseK}
\end{figure}

Figure~\ref{fig:predm} reports the mean squared error (MSE) of the d-fSPA algorithm as a function of the number of studies $m$, for several choices of the basis dimension $K$. 
For implementation of the denoising step, we set the tuning parameter $N=0.5 \log m$, and choose the neighborhood radius parameter $\Delta = \max_{ij} \Vert \hat f^{(i)} - \hat f^{(j)} \Vert/\{0.2 \times (m-100)+10\}$. This choice coincides with the neighborhood radius used above when $m=100$.
For all choices of $K$, the MSE decreases rapidly as $m$ increases from relatively small sample size, followed by a more gradual improvement thereafter. When $K$ is too small ($K=2$), the MSE remains substantially higher, indicating underfitting due to an insufficiently expressive set of basis functions. The correctly specified value ($K=4$) consistently achieves the lowest MSE, demonstrating that the d-fSPA algorithm is able to recover and exploit the true low-rank structure. Further increasing $K$ leads to higher prediction error, suggesting that overly large basis sets introduce redundancy and geometric mis-specification that degrade predictive performance. Overall, these results indicate that the d-fSPA algorithm achieves optimal predictive accuracy at the true value of $K$. 

\begin{figure}[!t]
    \centering
    \includegraphics[width=0.5\linewidth]{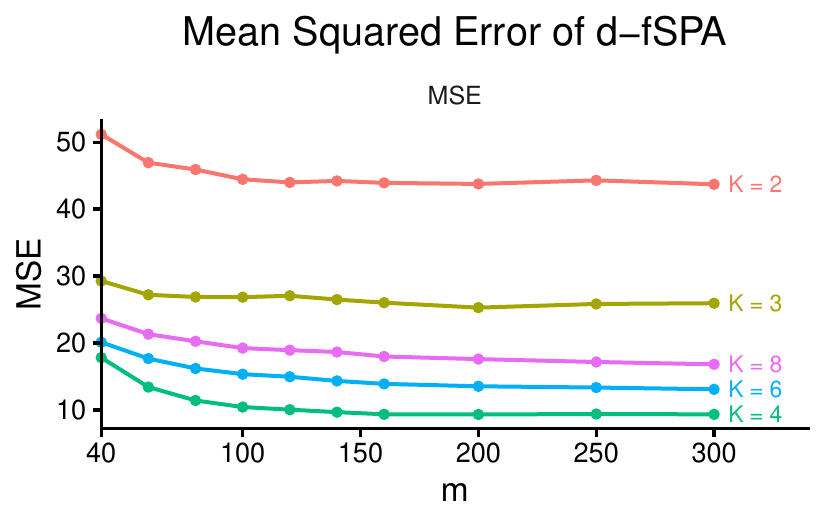}
    \caption{
    Mean squared error (MSE) of d-fSPA as a function of the number of studies $m$ over 200 simulation runs. MSE is shown for different number of the bases. For all choices of $K$, MSE decreases as $m$ increases, with the correctly specified value $(K=4)$ achieving the lowest error. }
    \label{fig:predm}
\end{figure}

We next evaluate the finite-sample performance of our conformal prediction procedure. We implement a 70-30 train-calibration split and estimate coverage over 200 simulation runs. In each run, we generate predictions for 100 target study-level covariates $\bW_0$ and 1000 values of $X$. The overall empirical coverage across the 200 runs is 98.43\%, exceeding the 95\% target. This mild overcoverage is driven by estimation error in the study-level functions $\hat f^{(i)}$, which introduces additional uncertainty beyond what is accounted for in the conformal procedure. 
Figure~\ref{fig:cvg} reports the average coverage and average 95\% prediction interval length versus different values of $x$. Theorem~\ref{thm:ci_cvg} guarantees asymptotic marginal coverage for any fixed $x$ under exchangeability and mild estimation error control. In finite samples, coverage remains close to or above the nominal level for most values of $x$, while undercoverage may occur at extreme $x$ values, where estimation error is larger, and data are relatively sparse. The right panel shows that confidence interval lengths are shortest in regions with high data density and expand toward the tails, reflecting increased uncertainty and demonstrating that the conformal intervals adapt to local variability.

\begin{figure}[!t]
    \centering
    \includegraphics[width=0.7\linewidth]{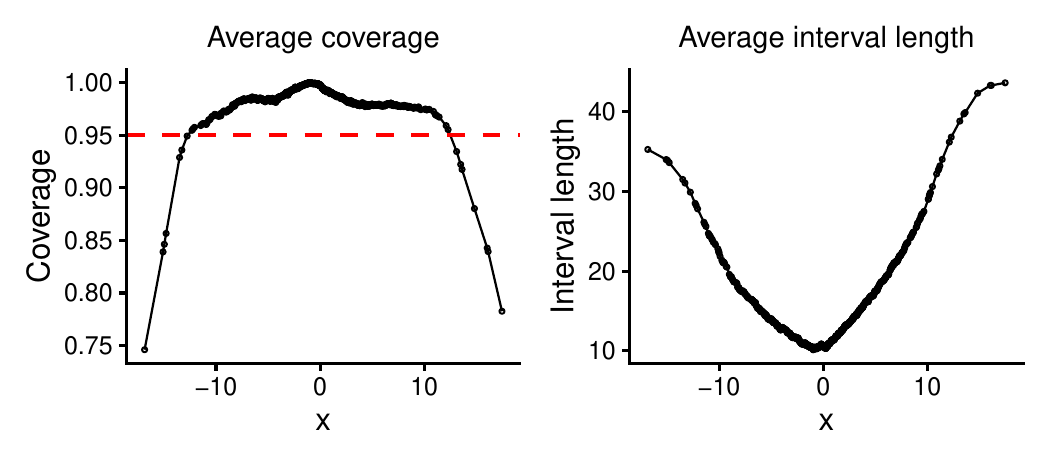}
    \caption{Average marginal coverage (left) and 95\% prediction interval length (right) versus $\bx$ over 200 simulation runs. The red dashed line in the left panel denotes the nominal 95\% coverage. Coverage remains above the nominal level for most values of $x$, while it decreases and prediction intervals widen at extreme values of $x$, reflecting increased estimation uncertainty in the tails.}
    \label{fig:cvg}
\end{figure}

Figure~\ref{fig:ci} visualizes prediction intervals for six representative target study covariates $\bW_0$. The shaded regions represent the prediction intervals. The predicted values (red dots) generally track the true values (blue curves) closely, indicating that the combination of basis hunting and weight modeling captures the main functional patterns well. The prediction intervals adapt to local uncertainty and widen in regions with greater variability. Across all panels, most true values are covered by the intervals. These results suggest that the proposed conformal procedure provides reliable uncertainty quantification despite the multi-stage estimation pipeline and the estimation error in the observed study-level functions $\hat f^{(i)}$.

\begin{figure}[!t]
    \centering
    \includegraphics[width=.9\linewidth]{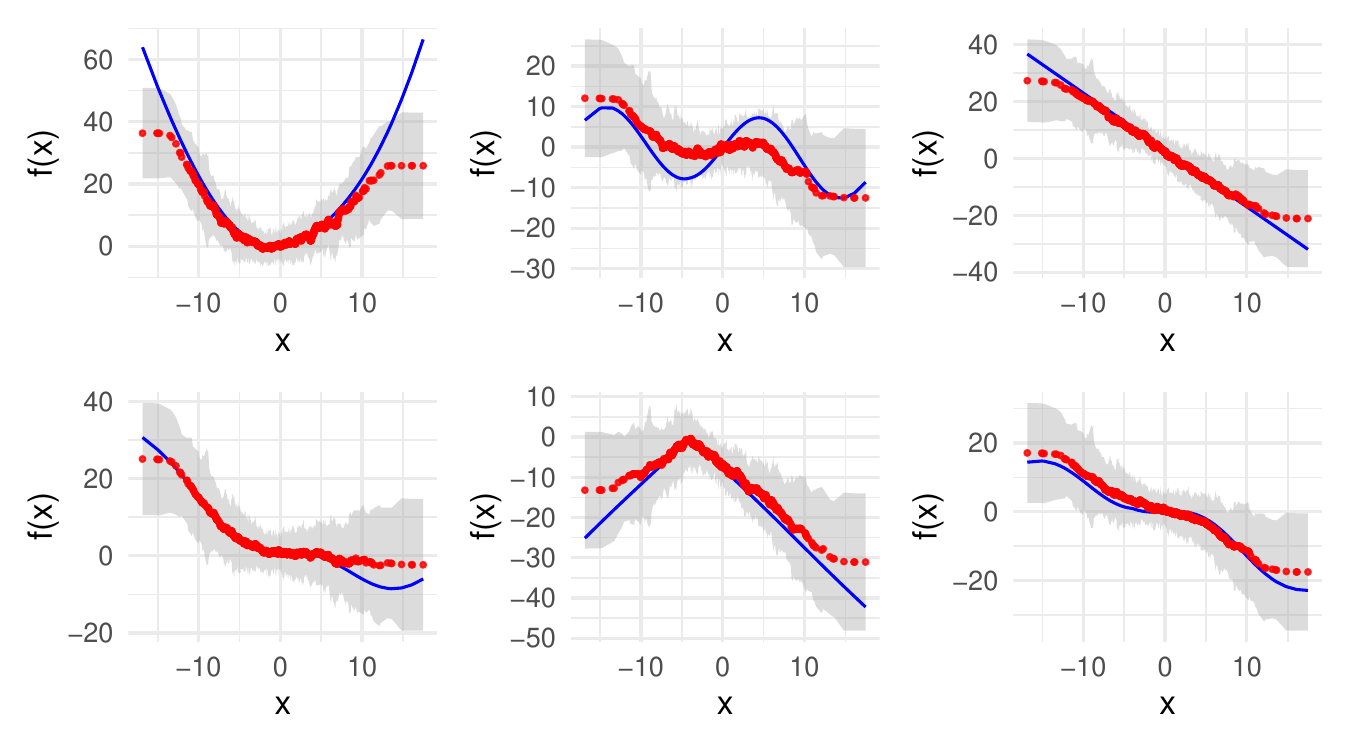}
    \caption{
    Prediction intervals, predicted values, and expected values versus $\bx$ for six different new study-level covariates $\bW_0$. The predicted functions closely track the expected values across the domain, while the prediction intervals adapt to local variability and widen in regions with greater uncertainty. }
    \label{fig:ci}
\end{figure}

\section{Empirical Application}
\label{sec:empirical}

We illustrate \metahunt{} using the Many Labs 1 project \citep{klein2014investigating}. 

\subsection{Setup}
The Many Labs 1 project is a large-scale multisite replication study in psychology. The project evaluates the replicability of 13 classic and contemporary hypotheses under a standardized protocol, with data collected across $m=36$ independent sites. Among these, 27 studies were conducted in laboratory settings and 9 online; 25 sites are located in the United States and 11 in other countries. The combined dataset contains a total of 6,344 participants. 
The hypotheses span a range of well-known effects in social and cognitive psychology, in which participants estimate a numerical quantity, such as the population of Chicago or the height of Mount Everest, after exposure to a high or low numerical anchor. Outcomes vary across hypotheses and live on different natural scales. A complete list of hypotheses is provided in Supplementary Materials Section~\ref{sec:ml1_hypotheses}.
We preprocess the raw data following the scripts provided by the original authors.

The analysis incorporates covariates at two levels. At the site level, we use a vector $\bW_i \in \mathbb{R}^6$ summarizing study context. This includes indicators for online versus laboratory administration and US versus international site, as well as site-level summary statistics such as average age, gender ratio, average political ideology, and a scalar measure of covariate shift relative to the target site constructed following \citet{jin2025beyond}. At the individual level, we use $\bX \in \mathbb{R}^5$ consisting of demographic and attitudinal variables: gender, age, race, political ideology, and American identity. 

We consider prediction and causal inference tasks. In both cases, the ultimate target is a scalar functional of the target-site function $f^{(0)}$.
\begin{itemize}
\item[1.] \textbf{Prediction task (average outcome).} 
For each site $i$ and hypothesis, we estimate a site-specific outcome regression function $\hat f^{(i)}$ using a random forest. The target estimand is the target-site mean outcome.
\item[2.] \textbf{Causal inference task (ATE).} 
For each site $i$ and hypothesis with a binary treatment, we estimate a site-specific CATE function $\hat f^{(i)}$ using causal forests. The target estimand is the target-site average treatment effect (ATE).
\end{itemize}

We implement a leave-one-site-out evaluation scheme. For each target site $i$, we fit the model using the remaining sites and then predict the corresponding target-site quantity. Because the project design provides the data from each target site, we can directly evaluate generalization performance in terms of prediction error and empirical coverage. As an empirical benchmark for each held-out target site, we use the raw target-site estimate (the sample mean for the prediction task and the raw difference-in-means ATE for the causal task), enabling transparent out-of-sample comparison.

Due to the small number of sites, all conformal prediction intervals in the empirical application are constructed using cross-conformal prediction \citep{vovk2018cross}.
In addition, for both \metahunt{} and the pooling approach, we report a weighted conformal variant based on \citet{tibshirani2019conformal}, in which each calibration site $i$ is assigned a Gaussian kernel weight $\exp(-\|\bW_i - \bW_0\|^2 / (2h^2))$ based on its similarity to the target site, and the conformal quantile is replaced by the corresponding weighted quantile. The bandwidth is set to $h = 3 \cdot h_\text{med}$, where $h_\text{med}$ is the median of pairwise distances in $\bW$. This variant upweights sites closer to the target and produces tighter intervals than the original procedure.

In the empirical application, we benchmark \metahunt{} against several existing generalization strategies. For the prediction task, we compare Methods 1--3 below. For the causal inference task, we compare all six methods.
\begin{itemize}
    \item [1.] {\bf \metahunt}: We perform d-fSPA algorithm with the number of basis functions $K$ selected by cross-validation, and fit the weight model via Dirichlet regression using site-level covariates. The resulting estimator combines study-specific functions through the learned low-rank representation and weight model.
    \item [2.] {\bf Minimax-regret}: 
    The minimax-regret aggregation method of \citet{zhang2024minimax}, which constructs a weighted average of site-specific models by minimizing the worst-case regret over an uncertainty set. Prediction intervals are obtained via conformal inference.
    \item [3.] {\bf Pooling}:
    For this approach discussed at the end of Section~\ref{sec:estimation}, we use random forests for the prediction task and causal forests for the causal inference task. Prediction intervals are obtained via conformal inference.
    \item [4.] {\bf Partial-pooling}: A one-stage individual-participant-data meta-analysis based on a hierarchical linear model fit to the pooled source data \citep{burke2017meta}, regressing the outcome on individual-level covariates $\bX$, treatment, and their interactions, with a random intercept and a random treatment effect per site. We use the \texttt{lme4} package to fit the model \citep{bates2015lme4}. Prediction intervals are constructed using the corresponding analytical variance estimates.
    \item [5.] {\bf Random-effects}:
    This meta-analytic baseline computes the raw difference-in-means estimator at each site. These site-level estimates are then aggregated via linear random-effects meta-regression, incorporating site-level covariates $\bW$ \citep{higgins2009re}. We use the \texttt{metafor} package to fit the random-effects model \citep{viechtbauer2010conducting}. Prediction intervals are constructed using the corresponding analytical variance estimates.
    \item [6.] {\bf Doubly-robust}:
    This two-step procedure is based on doubly robust generalization estimators \citep{dahabreh2019generalizing}. At each site, we first compute a doubly robust estimator that adjusts for covariate shift in $\bX$. In our implementation, both the outcome regression and the site-membership propensity score are estimated using regression forests.
    The resulting estimates are then combined via random-effects meta-regression with site-level covariates $\bW$. Prediction intervals are obtained using analytical variance formulas.
    \item [7.] {\bf Entropy-balancing}:
    This two-step procedure applies entropy balancing \citep{hainmueller2012entropy} at each site to adjust for covariate shift in $\bX$. The adjusted site-level estimates are then aggregated using random-effects meta-regression on $\bW$. Prediction intervals are constructed based on analytical variance estimates.
\end{itemize}
These methods differ in how they use individual and site-level information. The pooling and partial-pooling approaches require access to individual-level data and therefore do not operate under a privacy-preserving evidence synthesis framework. The minimax-regret method aggregates site-specific models without incorporating site-level information $\bW$. In contrast, the meta-analytic approaches (Random-effects, Doubly-robust, and Entropy-balancing) explicitly adjust for $\bW$ through meta-regression. For the causal inference task, we further select a subset of site-level covariates $\bW$ via cross-validation {using \metahunt{}, and the same subset is then applied to all methods}. Finally, although the analytical variance formulas do not account for the uncertainty from the within-study estimation, we use them to assess the severity of this problem, as researchers use them frequently in practice. 

\subsection{Findings}

\subsubsection{Prediction task}

\begin{figure}[!t]
    \centering
    \includegraphics[width=.9\linewidth]{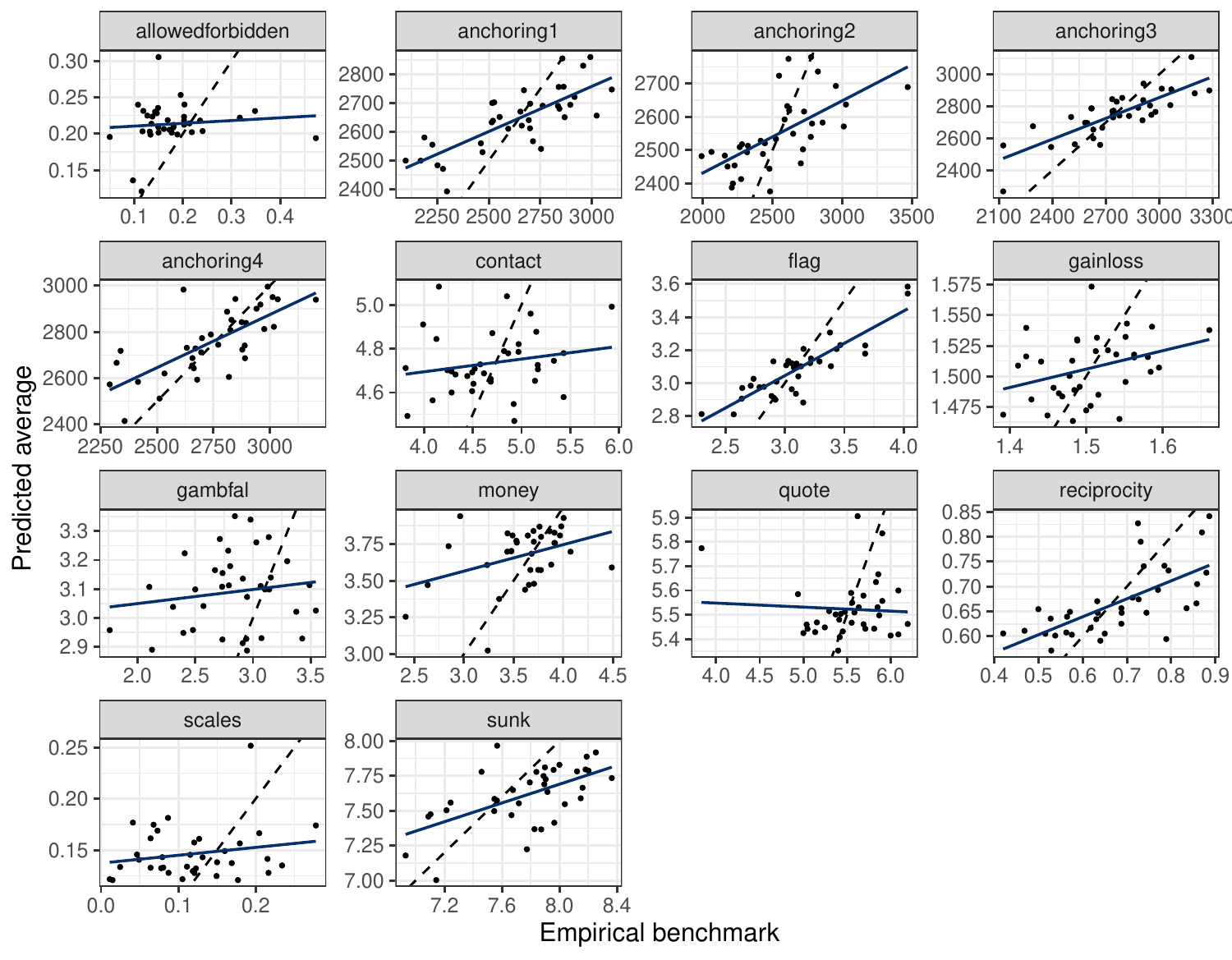}
    \caption{
    Scatterplots of predicted target-site mean outcomes versus the corresponding empirical benchmarks (raw target-site sample means), displayed separately for each hypothesis using \metahunt. The dashed 45-degree line corresponds to perfect prediction, while the solid line shows the fitted linear regression.}
\label{fig:pred_scatter}
\end{figure}

Figure~\ref{fig:pred_scatter} summarizes the performance of \metahunt{} for the prediction task by plotting predicted target-site mean outcomes against the corresponding empirical benchmarks, separately for each hypothesis.
Performance varies across hypotheses. For some (e.g., various \textit{anchoring} and \textit{flag} hypotheses), predictions align closely with the 45-degree line, indicating strong accuracy, while others exhibit greater dispersion, occasionally driven by outlying sites (e.g., \textit{gambfal} and \textit{quote}).
This heterogeneity is expected, as the explanatory power of site-level covariates $\bW$ differs across settings. Nevertheless, across many hypotheses, the relationship between predictions and benchmarks remains broadly positive, indicating that the method captures systematic cross-site variation in outcomes.

Table~\ref{tab:pred_comparison} compares the overall performance of the three methods for the prediction task. Prediction accuracy is measured by the standardized RMSE, defined as the root mean squared prediction error normalized by the empirical standard deviation of the target-site averages, allowing aggregation across hypotheses with different scales. A value of one corresponds to using the average of empirical benchmarks across all sites (including the target site) as the predictor.

\metahunt{} achieves the lowest standardized RMSE (0.904), indicating the best overall predictive accuracy. The pooling approach attains the smallest bias (0.203), with \metahunt{} a close second (0.273). Together with \metahunt{}'s lower RMSE, this indicates that the low-rank representation trades a small amount of bias for a larger reduction in variance, yielding better overall accuracy despite not requiring individual-level data. The minimax-regret approach is dominated on both criteria, consistent with its focus on worst-case regret rather than targeted prediction.

Conformal prediction is applied to all methods and achieves coverage above the nominal 95\% level.
The final column reports the average interval length normalized by the maximum interval width across methods. With the unweighted conformal procedure, all three methods produce similarly conservative intervals; the weighted conformal variant tightens the interval length of \metahunt{} from $0.969$ to $0.764$ (a 21\% reduction) while maintaining coverage above the 95\% target, and further reduces the interval length of the pooling approach from $0.806$ to $0.635$.

\begin{table}[t]
\centering
\begin{tabular}{lcccc}
\toprule
\textbf{Method} & \textbf{Std. Bias} & \textbf{Std. RMSE} & \textbf{Coverage} & \textbf{Interval Length (relative)} \\
\midrule
\metahunt       & \multirow{2}{*}{0.273} & \multirow{2}{*}{0.904} & 0.988 & 0.969 \\
\quad (weighted PI)       &             &             & 0.976 & 0.764 \\
Minimax-regret  & 0.611 & 1.125          & 0.982 & 0.964 \\
Pooling           & \multirow{2}{*}{0.203} & \multirow{2}{*}{0.933}          & 0.974 & 0.806 \\
\quad (weighted PI)       &             &             & 0.944 & 0.635 \\
\bottomrule
\end{tabular}
\caption{Overall performance comparison for prediction task. ``Std.\ Bias'' is the bias standardized by the empirical standard deviation of the target-site averages. ``Std.\ RMSE'' is the root mean squared prediction error, standardized by the same way. Column ``Coverage'' gives the empirical coverage rate of the nominal 95\% prediction intervals. ``Interval Length (relative)'' denotes the average prediction interval length normalized by the maximum interval width across methods. For \metahunt{} and the pooling approach, ``(weighted PI)'' reports coverage and interval length under the weighted conformal procedure.}
\label{tab:pred_comparison}
\end{table}



\subsubsection{Causal inference task}

Figure~\ref{fig:ate_scatter} summarizes the performance of \metahunt{} for the causal inference task. It shows the target-site ATE estimates against the corresponding empirical benchmarks, separately for each hypothesis.
As in the prediction task, performance varies across hypotheses, reflecting differences in how informative the site-level covariates $\bW$ are for explaining cross-site heterogeneity. Overall, the alignment between predictions and benchmarks is weaker than in the prediction task, with smaller correlations and greater dispersion.  This indicates the increased difficulty of ATE generalization relative to outcome prediction.

\begin{figure}[!t]
    \centering
    \includegraphics[width=.9\linewidth]{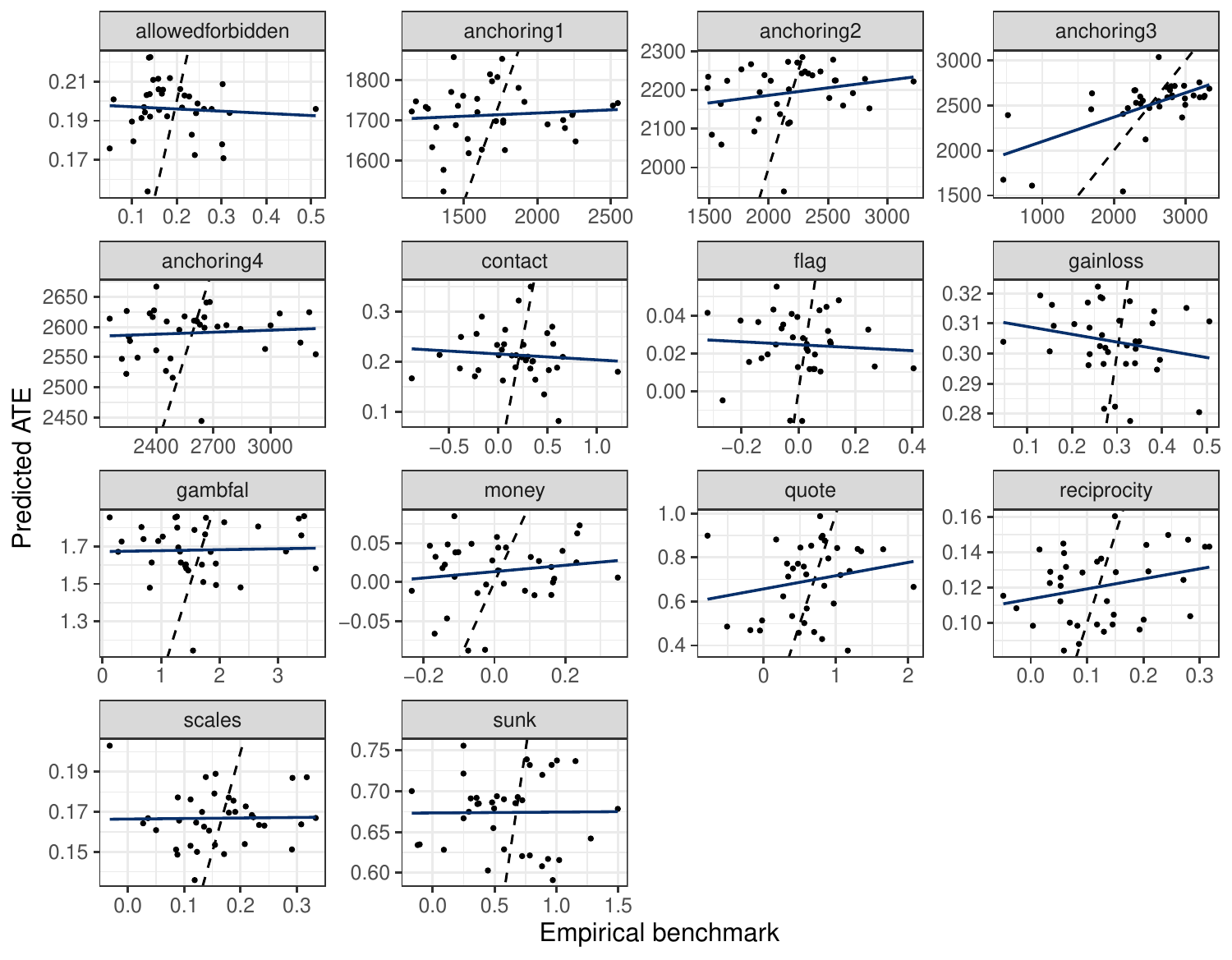}
    \caption{Scatterplots of predicted target-site ATE versus the corresponding empirical benchmarks (raw target-site ATE), displayed separately for each hypothesis using \metahunt. The dashed 45-degree line corresponds to perfect prediction, while the solid line shows the fitted linear regression.}
\label{fig:ate_scatter}
\end{figure}

Table~\ref{tab:ate_comparison} compares the overall performance of all methods. \metahunt{} achieves both the lowest standardized RMSE (0.981) and the lowest standardized bias (0.090). The two individual-data benchmarks, pooling and partial-pooling, match \metahunt{} closely on bias (0.094 and 0.099, respectively) but have higher RMSE (0.992 and 1.013), despite their access to individual-level data. All other methods give RMSEs above one; minimax-regret in particular yields a substantially higher bias (0.371), as it does not incorporate study-level information.

All methods based on conformal prediction achieve coverage above the nominal 95\% level. In contrast, methods relying on analytical variance estimates (Partial-pooling, Random-effects, Doubly-robust, and Entropy-balancing) exhibit undercoverage. This finding is expected and consistent with that of \citet{jin2024mixed}, suggesting that these models are unable to fully account for the heterogeneity across studies.
Among methods with valid coverage, \metahunt{} with the weighted conformal variant attains the shortest interval length (0.660) while maintaining coverage above 95\%. The weighted variant of pooling method produces even shorter intervals (0.598) but dips slightly below the nominal level (0.921).

\begin{table}[t]
\centering
\begin{tabular}{lcccc}
\toprule
\textbf{Method} & \textbf{Std. Bias} & \textbf{Std. RMSE} & \textbf{Coverage} & \textbf{Interval Length (relative)} \\
\midrule
\metahunt       & \multirow{2}{*}{0.090} & \multirow{2}{*}{0.981} & 0.976 & 0.801 \\
\quad (weighted PI)       &            &            & 0.956 & 0.660 \\
Minimax-regret  & 0.371 & 1.076          & 0.972 & 0.781 \\
Pooling           & \multirow{2}{*}{0.094} & \multirow{2}{*}{0.992}          & 0.958 & 0.735 \\
\quad (weighted PI)       &             &             & 0.921 & 0.598 \\
Partial-pooling   & 0.099 & 1.013         & 0.724 & 0.369 \\
Random-effects        & 0.130 & 1.058          & 0.667 & 0.305 \\
Doubly-robust         & 0.186 & 1.134          & 0.734 & 0.438 \\
Entropy-balancing       & 0.244 & 1.657          & 0.837 & 0.934 \\
\bottomrule
\end{tabular}
\caption{Overall performance comparison for the ATE task. ``Std.\ Bias'' is the bias standardized by the empirical standard deviation of the target-site averages. ``Std.\ RMSE'' is the root mean squared prediction error, standardized by the same way. ``Coverage'' is the empirical coverage of the nominal 95\% prediction intervals, and ``Interval Length (relative)'' is the average interval length normalized by the maximum across methods. For \metahunt{} and the pooling approach, ``(weighted PI)'' reports coverage and length under the weighted conformal procedure.}
\label{tab:ate_comparison}
\end{table}

Lastly, Figure~\ref{fig:ate_cmp} compares the hypothesis-specific performance for the causal inference task. Performance varies across hypotheses. \metahunt{} and the pooling method typically achieve standardized MSE around or below one, while other approaches rarely do so, indicating weaker ability to capture cross-site heterogeneity.
Entropy-balancing is notably unstable, with large variability in MSE and generally wider intervals, likely due to difficulties in constructing stable weights under substantial covariate differences across sites.
Methods based on conformal prediction maintain near-nominal coverage across hypotheses, whereas those using analytical variance estimates exhibit systematic undercoverage.

\begin{figure}[!t]
    \centering
    \includegraphics[width=\linewidth]{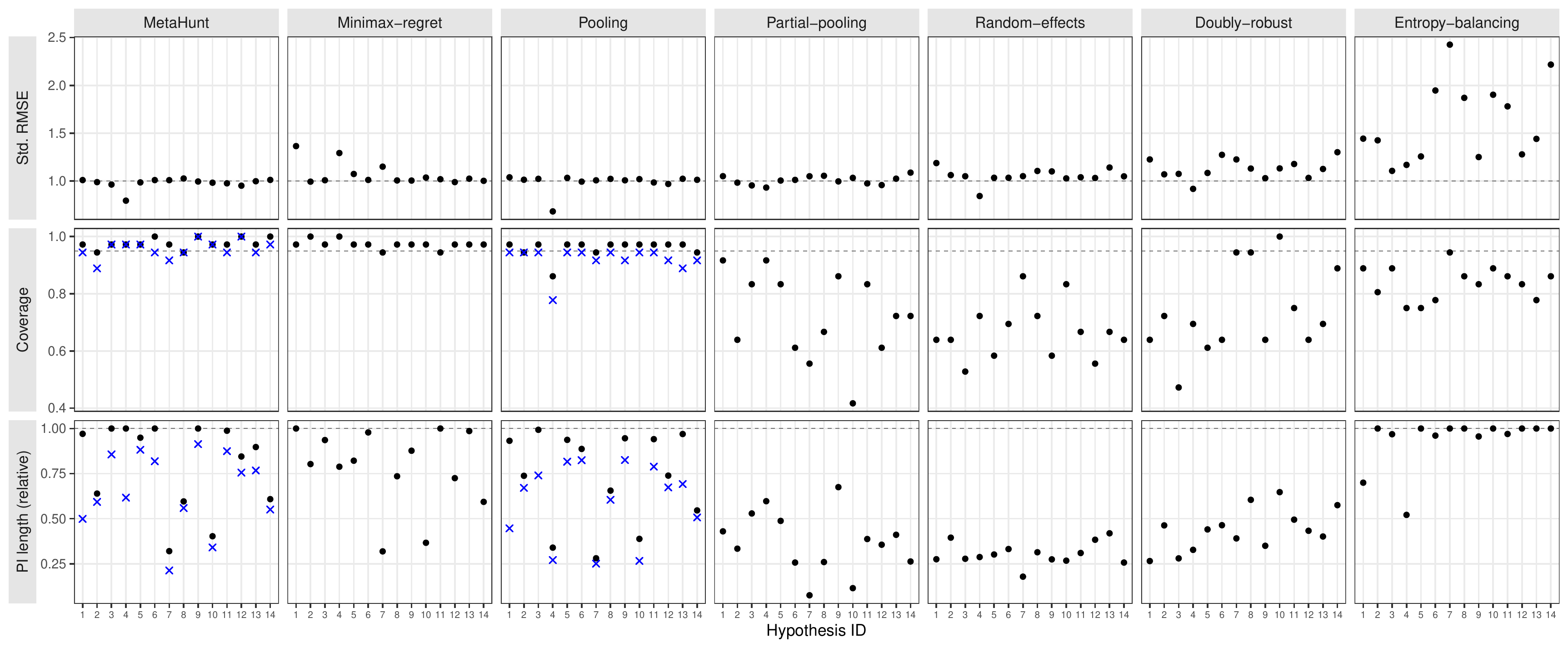}
    \caption{Hypothesis-specific performance for the causal inference task under leave-one-site-out evaluation. The top, middle, and bottom rows report standardized root mean squared error (RMSE), empirical coverage of nominal 95\% prediction intervals, and relative prediction interval length, respectively, for each method across hypotheses. Each black point corresponds to one hypothesis. In the coverage and interval-length rows, blue crosses overlay the values obtained under the weighted conformal procedure for the two methods (\metahunt{} and pooling) where this variant is available. The dashed horizontal line at 1 in the top row denotes the empirical variance of the target-site benchmark. Interval lengths are normalized by the maximum interval width across methods within each hypothesis to facilitate comparison of efficiency. }
    \label{fig:ate_cmp}
\end{figure}



\section{Concluding Remarks}

We have proposed \metahunt, a framework for privacy-preserving meta-analysis of function-valued quantities across heterogeneous studies. \metahunt{} combines a low-rank basis representation, recovered by the proposed d-fSPA algorithm, with a flexible weight model linking mixing weights to study-level covariates. We established formal consistency of the recovered basis functions, extending vertex-hunting analysis to the infinite-dimensional functional setting for the first time. For uncertainty quantification, we developed conformal prediction intervals with asymptotic marginal coverage for both pointwise prediction and scalar functionals such as the average treatment effect. Because \metahunt{} operates entirely on aggregate-level summaries, it is naturally privacy-preserving and accommodates modern machine learning estimators for the study-level functions. In our empirical application on the Many~Labs~1 data, \metahunt{} achieves lower prediction RMSE than all competing approaches, including the pooled approach that has access to individual-level data, while maintaining valid conformal coverage above the nominal level.

We note several limitations of \metahunt. First, the d-fSPA algorithm recovers basis functions by selecting from the observed studies, and its accuracy depends on the availability of near-pure studies whose mixing weights are concentrated near a single vertex of the simplex. When the number of studies $m$ is small, such near-pure studies may not be present, degrading basis recovery. A similar issue arises when the target's site-level covariates $\bW_0$ lie outside the range of the observed $\bW_i$, since the weight model may then predict target weights near a vertex that no source study covered. A small $m$ also makes reliable selection of $K$ more difficult, as the reconstruction error and cross-validation criteria become less stable. In addition, the weight model is at risk of overfitting when $m$ is small, depending on its complexity.

Second, the predictive performance relies on the study-level covariates $\bW$ being sufficiently informative about the mixing weights $\bpi$. When $\bW$ has limited explanatory power, the weight model may fail to capture meaningful patterns of cross-study heterogeneity. This was reflected in the empirical application, where the causal inference task exhibited weaker performance than the outcome prediction task, partly due to the difficulty of explaining cross-site variation in treatment effects using the available site-level covariates. Third, although we provide theoretically justified choices of the denoising tuning parameters $(N, \Delta)$, the resulting basis recovery can still be sensitive to these choices in finite samples, especially when the number of studies $m$ is small.

We highlight several directions for future research. First, in some applications such as multi-center clinical trials, outcome data may be available from the target site in addition to the study-level covariates $\bW_0$. Incorporating such target-site observations could yield efficiency gains, especially when the CATE function is estimated with substantial noise. For example, we can directly constrain the mixing weights or by refining the predicted function through a semi-supervised update. How to optimally combine target-site data with the proposed aggregate-level framework is an open problem. Second, the exchangeability assumption underlying our conformal prediction procedure may not hold when studies are collected sequentially, exhibit systematic temporal trends, or are selected in a non-representative manner. Extending the framework to non-exchangeable settings, for example through online conformal methods or time-varying weight models, would broaden its applicability. 

\bigskip
\bibliography{main}  

\clearpage
\appendix
\setcounter{equation}{0}
\setcounter{theorem}{0}
\setcounter{lemma}{0}
\renewcommand {\theequation} {A\arabic{equation}}
\renewcommand {\thetheorem} {A\arabic{theorem}}
\renewcommand {\thelemma} {A\arabic{lemma}}

\section{Many Labs 1 Hypotheses}
\label{sec:ml1_hypotheses}

Table~\ref{tab:ml1_hypotheses} provides the full names, original references, and outcome types for the experimental findings of the Many Labs 1 project \citep{klein2014investigating} analyzed in Section~\ref{sec:empirical}. The codes match those used in the original data release.

\begin{table}[h]
\centering
\small
\begin{tabular}{lll}
\toprule
\textbf{Code} & \textbf{Effect (original reference)} & \textbf{Outcome type} \\
\midrule
\texttt{anchoring1}      & Anchoring -- distance SF--NYC \citep{jacowitz1995measures}     & numerical estimate \\
\texttt{anchoring2}      & Anchoring -- Chicago pop.\ \citep{jacowitz1995measures}         & numerical estimate \\
\texttt{anchoring3}      & Anchoring -- Everest height \citep{jacowitz1995measures}        & numerical estimate \\
\texttt{anchoring4}      & Anchoring -- Babies' age \citep{jacowitz1995measures}           & numerical estimate \\
\texttt{sunk}            & Sunk Costs \citep{oppenheimer2009instructional}                 & Likert rating \\
\texttt{gainloss}        & Gain vs.\ Loss Framing \citep{tversky1981framing}               & binary choice \\
\texttt{flag}            & Flag Priming \citep{carter2011single}                           & Likert rating \\
\texttt{contact}         & Imagined Contact \citep{husnu2010elaboration}                   & Likert rating \\
\texttt{money}           & Currency Priming \citep{caruso2013mere}                         & Likert rating \\
\texttt{gambfal}         & Retrospective Gambler's Fallacy \citep{oppenheimer2009retrospective} & categorical choice \\
\texttt{quote}           & Quote Attribution \citep{lorge1936prestige}                     & Likert rating \\
\texttt{reciprocity}     & Norm of Reciprocity \citep{hyman1950current}                    & Likert rating \\
\texttt{allowedforbidden}& Allowed vs.\ Forbidden \citep{rugg1941experiments}              & Likert rating \\
\texttt{scales}          & Low-vs-High Category Scales \citep{schwarz1985response}         & ordinal category \\
\texttt{iat}             & Sex diff.\ in implicit math attitudes \citep{nosek2002math}     & continuous (D-score) \\
\bottomrule
\end{tabular}
\caption{Experimental findings from the Many Labs 1 project \citep{klein2014investigating} analyzed in this paper, identified by the short codes used in the original data release. The \emph{Anchoring} effect of \citet{jacowitz1995measures} is included in four versions, differing only in the quantity that subjects are asked to estimate.}
\label{tab:ml1_hypotheses}
\end{table}

\paragraph{Descriptions of the hypotheses.} For completeness, we briefly describe each effect below.
\begin{itemize}
\item \texttt{anchoring1} -- Anchoring, distance SF--NYC \citep{jacowitz1995measures}:
Participants estimate the distance between San Francisco and New York City after first being exposed to either a high or low numerical anchor. Even when the anchor is arbitrary, it biases the subsequent numerical judgment, and the hypothesis predicts that estimates shift in the direction of the anchor.
\item \texttt{anchoring2} -- Anchoring, population of Chicago \citep{jacowitz1995measures}:
Participants estimate the population of Chicago after receiving either a high or low anchor value. As in the other anchoring tasks, estimates are pulled toward the provided reference, and the hypothesis predicts significantly higher estimates in the high-anchor condition than in the low-anchor condition.
\item \texttt{anchoring3} -- Anchoring, height of Mt.\ Everest \citep{jacowitz1995measures}:
Participants estimate the height of Mt.\ Everest after exposure to a high or low anchor. Despite the implausibility of the anchor in some conditions, it influences the final judgment, illustrating that even relatively knowledgeable domains are vulnerable to contextual numerical cues.
\item \texttt{anchoring4} -- Anchoring, babies born per day \citep{jacowitz1995measures}:
Participants estimate the number of babies born daily in the United States after exposure to a high or low anchor. As in the other anchoring tasks, estimates systematically differ by anchor condition.
\item \texttt{sunk} -- Sunk Cost Effect \citep{oppenheimer2009instructional}:
Participants decide whether to continue an activity after having paid for it versus receiving it for free. The hypothesis predicts greater persistence when prior costs have been incurred, illustrating an irrational escalation of commitment based on unrecoverable past investment.
\item \texttt{gainloss} -- Gain vs.\ Loss Framing \citep{tversky1981framing}:
Participants choose between programs framed in terms of lives saved (gains) or lives lost (losses). Although the outcomes are logically equivalent, the framing affects risk preference: the hypothesis predicts greater risk aversion under gain framing and greater risk seeking under loss framing.
\item \texttt{flag} -- Flag Priming \citep{carter2011single}:
Participants are subtly exposed to images containing the American flag before reporting political attitudes. The hypothesis predicts that exposure to the national flag shifts respondents toward more conservative policy endorsement, testing whether symbolic cues can nonconsciously influence political preferences.
\item \texttt{contact} -- Imagined Contact \citep{husnu2010elaboration}:
Participants are asked to imagine a positive interaction with a member of an outgroup. Imagined intergroup contact is hypothesized to increase willingness to engage with that group in the future, testing whether mental simulation alone can reduce prejudice and improve intergroup attitudes.
\item \texttt{money} -- Currency Priming \citep{caruso2013mere}:
Participants are exposed to images of currency before reporting attitudes toward economic systems. The hypothesis predicts that exposure to money increases endorsement of free-market arrangements and tolerance of social inequality.
\item \texttt{gambfal} -- Retrospective Gambler's Fallacy \citep{oppenheimer2009retrospective}:
Participants observe an unlikely streak of events (e.g., repeated sixes in dice rolls) and then estimate prior occurrences. The hypothesis predicts that witnessing an unlikely streak increases estimates of how many prior rolls occurred, reflecting a tendency to reconstruct the past to make rare events appear more probable.
\item \texttt{quote} -- Quote Attribution \citep{lorge1936prestige}:
Participants evaluate the same quotation attributed either to a liked or a disliked figure. The hypothesis predicts greater agreement when the quote is attributed to a positively regarded source, demonstrating how source credibility shapes attitude evaluation.
\item \texttt{reciprocity} -- Norm of Reciprocity \citep{hyman1950current}:
Participants evaluate whether rights should be granted to an outgroup, either before or after considering similar rights for their ingroup. The hypothesis predicts increased support for granting rights to the outgroup when reciprocal rights for the ingroup are made salient first, reflecting a norm of fairness and consistency.
\item \texttt{allowedforbidden} -- Allowed vs.\ Forbidden \citep{rugg1941experiments}:
Participants are asked whether a controversial behavior should be ``allowed'' versus ``forbidden.'' Although the two wordings are logically equivalent, the framing systematically shifts responses: people are less likely to endorse forbidding an action than to reject allowing it.
\item \texttt{scales} -- Low- vs.\ High-Category Scales \citep{schwarz1985response}:
Participants report a behavior (e.g., television watching) using response scales with different numeric ranges. The range of the response options shifts how respondents interpret their own behavior, producing higher self-reported frequencies when the scale implies that higher values are normative.
\item \texttt{iat} -- Sex Differences in Implicit Math Attitudes \citep{nosek2002math}:
Participants complete an Implicit Association Test linking mathematical and gender concepts. The hypothesis predicts that participants exhibit implicit associations consistent with the cultural stereotype that math is male, with the strength of the effect differing by participant gender.
\end{itemize}

\section{Proofs}
\subsection{Proof of Theorem~\ref{thm:bcons}}

To prove this theorem, we begin by introducing the notation and describing our proof strategy.  We then establish a series of lemmas before proving the theorem.

\subsubsection{Notation and proof strategy}

Let $(\mathcal H,\langle\cdot,\cdot\rangle)$ be the Hilbert space $L^2(\mu)$ equipped with the inner product induced by the functional norm, and write $\|h\|_{\mathcal H} := \langle h,h\rangle^{1/2}$. Throughout this section, $\|\cdot\|_{\mathcal H}$ denotes the same $L^2(\mu)$ norm as $\|\cdot\|$ used in the main text. We note that $\mathcal H$ here refers to $L^2(\mu)$ and should not be confused with the RKHS $\mathcal{H}_\kappa$ appearing in Examples~\ref{exp:dirichlet} and~\ref{exp:kernel_logratio}.
Let $G:\mathbb R^K\to\mathcal H$ be the linear operator defined by $G e_k=g_k$, where $\{e_k\}_{k=1}^K$ denotes the canonical basis of $\mathbb R^K$. Define the Gram matrix $\Gamma\in\mathbb R^{K\times K}$ by $\Gamma_{ij}:=\langle g_i,g_j\rangle$.

We center the basis functions by
\[
\tilde g_k := g_k-\bar g,\quad \text{for } k=1,\dots,K \quad \text{where} \quad \bar g := \frac1K\sum_{k=1}^K g_k.
\]
Then, for any $\pi,\tilde\pi\in\Delta_{K-1}$,
\[
\sum_{k=1}^K(\pi_k-\tilde\pi_k)g_k =
\sum_{k=1}^K(\pi_k-\tilde\pi_k)\tilde g_k,
\]
since $\sum_{k=1}^K(\pi_k-\tilde\pi_k)=0$. Hence, only the centered family $\{\tilde g_k\}_{k=1}^K$ matters on the simplex-difference subspace.
Let $\tilde G:\mathbb R^K\to\mathcal H$ be the linear operator defined by $\tilde G e_k=\tilde g_k$, and define the corresponding Gram matrix $\tilde\Gamma\in\mathbb R^{K\times K}$ by $\tilde\Gamma_{ij}:=\langle \tilde g_i,\tilde g_j\rangle $.
Denote $\sigma_\ast := \sqrt{\lambda_{K-1}(\tilde\Gamma)}$, where $\lambda_k(\Gamma)$ denotes the $k$-th largest eigenvalue of a symmetric matrix $\Gamma$.

Denote the convex hull generated by the basis functions by
\begin{eqnarray*}
\mathcal S := \Big\{\sum_{k=1}^K \pi_k g_k:\ \pi\in\Delta_{K-1}\Big\},
\end{eqnarray*}
and define the following geometric quantities:
\begin{eqnarray}
\beta(\hat f,G)
&:=& \max\Bigg\{
\max_{1\le i\le m}\mathrm{Dist}_{\mathcal H}(\hat f^{(i)}, \mathcal S),\
\max_{1\le k\le K}\ \min_{1 \leq i \leq m}\ \|\hat f^{(i)}-g_k\|_{\mathcal H}
\Bigg\}, \label{eq:beta_def}\\
\gamma (G) &:=& \min_{g_0 \in \mathcal S} \max_{1 \leq k \leq K}  \Vert g_k - g_0 \Vert_{\mathcal H} , \label{eq:gamma_def}\\
d_{\max}(G)
& := & \max_{1 \leq k \leq K}  \Vert g_k  \Vert_{\mathcal H} , \label{eq:dmax_def}
\end{eqnarray}
where $\mathrm{Dist}_{\mathcal H}(f,\mathcal S):=\inf_{s\in\mathcal S}\|f-s\|_{\mathcal H}$.
The quantity $\beta(\hat f,G)$ measures how well the observed functions approximate the simplex geometry: the first term captures the maximum distance of any observed function from $\mathcal S$, and the second captures the worst-case distance from any vertex to the nearest observed function. The quantity $\gamma(G)$ is the minimax radius of the simplex, and $d_{\max}(G)$ is the maximum norm of the basis functions.

We first note that the non-degeneracy condition in Assumption~\ref{cond:low_rank} is equivalent to $\sigma_\ast > 0$.
Indeed, since the differences $g_k - g_1 = \tilde g_k - \tilde g_1$ lie in the span of $\{\tilde g_1,\ldots,\tilde g_K\}$, and conversely $\tilde g_1 = -K^{-1}\sum_{k=2}^K(g_k - g_1)$ shows each centered function lies in the span of the differences, we have $\mathrm{span}\{g_2-g_1,\ldots,g_K-g_1\} = \mathrm{span}\{\tilde g_1,\ldots,\tilde g_K\}$.
Therefore, the differences are linearly independent in $L^2(\mu)$ if and only if $\mathrm{rank}(\tilde\Gamma)=K-1$, which is equivalent to $\sigma_\ast = \sqrt{\lambda_{K-1}(\tilde\Gamma)}>0$.

The proof proceeds in three parts. We first establish a non-asymptotic error bound for the fSPA algorithm (without denoising) in the Hilbert space $\mathcal H$ (Theorem~\ref{thm:spa_pure}). Extending the analysis of \citet{jin2024improved} from $\mathbb R^d$ to $\mathcal H$, this result shows that the basis recovery error is bounded by a constant multiple of $\beta(\hat f,G)$, provided $\beta(\hat f,G)$ is sufficiently small relative to $\sigma_\ast$. We then show that under Assumptions~\ref{cond:est_error} and~\ref{cond:purity}, the quantity $\beta(\hat f,G)$ vanishes in probability (Lemma~\ref{lem:beta_control}), bridging the finite-sample bound to the consistency result. Finally, in the proof of Theorem~\ref{thm:bcons}, we incorporate the denoising step by showing that the denoised functions satisfy $\beta(\tilde f,G) = o_P(1)$ with appropriate denoising parameters $(N,\Delta)$, from which consistency follows by the same argument.

\begin{theorem}[fSPA error bounds]
\label{thm:spa_pure}
Suppose $\sigma_\ast >0$ and $\beta(\hat f,G)$ satisfies
\begin{equation}
\label{eq:thmS1.1}
450\, d_{\max}(G)\, \max\Big\{1,\ \frac{d_{\max}(G)}{\sigma_\ast}\Big\}\, \beta(\hat f,G)
\;\le\; \sigma_\ast^2.
\end{equation}
Apply the fSPA (Algorithm~\ref{alg:spa} without the denoising step) to $\hat f^{(1)},\dots,\hat f^{(m)}\in\mathcal H$ and let $\hat g_1,\dots,\hat g_K$ denote the $K$ selected functions. Then, up to a permutation of these $K$ functions,
\begin{equation}
\label{eq:thmS1.2}
\max_{1\le k\le K}\ \|\hat g_k - g_k\|_{\mathcal H}
\;\le\;
\Bigg(1 + \frac{30\,\gamma(G)}{\sigma_\ast}\max\Big\{1,\ \frac{d_{\max}(G)}{\sigma_\ast}\Big\}\Bigg)
\,\beta(\hat f,G).
\end{equation}
\end{theorem}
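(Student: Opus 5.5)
The proof is an induction over the $K$ steps of fSPA, transplanting the argument of \citet{jin2024improved} from $\mathbb R^d$ to $\mathcal H$. The one structural fact needed is that every quantity the algorithm touches lives in the finite-dimensional subspace $V:=\mathrm{span}\{g_1,\dots,g_K,\hat f^{(1)},\dots,\hat f^{(m)}\}$. Orthogonal projections onto spans of finitely many selected functions are therefore ordinary finite-rank projections, and every Euclidean inequality used in the vector case (Pythagoras, strict convexity of $\|\cdot\|^2$, singular-value bounds for the operator $\tilde G$ restricted to $\{\pi:\sum_k\pi_k=0\}$) carries over verbatim once stated in inner-product form. Write $\beta:=\beta(\hat f,G)$ and $\rho:=\max\{1,d_{\max}(G)/\sigma_\ast\}$. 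For each $i$, write $\hat f^{(i)}=G\pi_i^\ast+e_i$ with $G\pi_i^\ast$ the projection onto $\mathcal S$, so $\|e_i\|\le\beta$. For each $k$, fix an index $i_k$ with $\|\hat f^{(i_k)}-g_k\|\le\beta$.

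\textbf{Key lemma (strict convexity).} Let $P$ be any orthogonal projection whose range is orthogonal to $\mathrm{span}(S_{k-1})$, where $S_{k-1}$ is the set of the first $k-1$ selections. Suppose these selections are already $C\beta$-close to $k-1$ distinct vertices. Then for every $\pi\in\Delta_{K-1}$,
\[
\|PG\pi\|^2\le\max_{j\text{ unselected}}\|Pg_j\|^2-c\,\sigma_\ast^2\,\textstyle(1-\max_{j\text{ unselected}}\pi_j)
+O(d_{\max}\beta),
\]
with an explicit constant $c$. To prove it, use convexity of $\pi\mapsto\|PG\pi\|^2$ on the simplex, together with the fact that the selected vertices are nearly annihilated by $P$, which costs $O(d_{\max}\beta)$ per vertex. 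The strict-convexity gap comes from a lower bound on the smallest singular value of $PG$ on the face spanned by the unselected vertices. I would show this singular value is at least roughly $\sigma_\ast/\sqrt2$, via a perturbation argument: compare $P$ with the exact projection off $\mathrm{span}\{g_j:j\text{ selected}\}$ and use $\sigma_\ast>0$. Controlling that perturbation is exactly where condition \eqref{eq:thmS1.1} enters, through the factors $d_{\max}$ and $\rho$.

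\textbf{Induction step.} At step $k$, the candidate $\hat f^{(i_j)}$ for an unselected vertex $j$ has residual norm at least $\|Pg_j\|-\beta$, and $\|Pg_j\|\gtrsim\sigma_\ast$ by the same singular-value bound. Take the maximizer $s_k$. Its residual satisfies $\|P\hat f^{(s_k)}\|\le\|PG\pi^\ast_{s_k}\|+\beta$, and maximality forces $\|P\hat f^{(s_k)}\|$ to be at least the largest unselected $\|Pg_j\|$ minus $\beta$. Combining this with the lemma gives
\[
1-\max_j\pi^\ast_{s_k,j}\lesssim \frac{d_{\max}\,\beta\,\rho}{\sigma_\ast^2},
\]
so $\pi^\ast_{s_k}$ is concentrated on a single unselected vertex $j$. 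Then
\[
\|\hat f^{(s_k)}-g_j\|\le\beta+\|G(\pi^\ast_{s_k}-e_j)\|\le\beta+2\gamma(G)(1-\pi^\ast_{s_k,j})\cdot(\text{const}).
\]
Here I use $\|\tilde G(\pi-e_j)\|\le 2(1-\pi_j)\max_\ell\|\tilde g_\ell-\tilde g_0\|$ for a suitable centering, together with $\gamma(G)$ as the minimax radius. This yields the right-hand side of \eqref{eq:thmS1.2}, and it also shows the newly selected vertex is distinct from the earlier ones, which closes the induction and produces the permutation.

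The main obstacle is the lemma's singular-value perturbation. In finite dimensions \citet{jin2024improved} lean on matrix inequalities such as Weyl's theorem; here I would rederive them directly in $\mathcal H$ by restricting to $V$ and working with Gram matrices. That makes Weyl-type eigenvalue bounds available again, at the cost of carefully tracking how the projection error $O(\beta)$ on the selected vertices inflates into the $d_{\max}\rho/\sigma_\ast$ factors. The numerical constants $450$ and $30$ would be fixed only at the end by bookkeeping.
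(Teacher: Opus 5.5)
\textbf{There is a genuine gap, and it cannot be closed under the stated hypotheses.} The problem is your key lemma: the claim that $PG$ has smallest singular value $\gtrsim\sigma_\ast/\sqrt2$ on the face of unselected vertices, and that $\|Pg_j\|\gtrsim\sigma_\ast$ for unselected $j$.

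The quantity $\sigma_\ast$ comes from the \emph{centered} Gram matrix $\tilde\Gamma$, so it is translation invariant. Projecting off $\mathrm{span}(S)$ is not translation invariant: it depends on where the origin sits relative to the affine hull of the $g_k$. No perturbation argument can rescue a bound that already fails for the exact projection. Here is a concrete case. Take orthonormal $\phi_1,\phi_2,\psi\in L^2(\mu)$ and set $g_1=2\phi_2$, $g_2=\phi_1+\phi_2$, $g_3=\phi_1-\phi_2$. These are affinely independent, with $\sigma_\ast^2\approx 0.26$ and $d_{\max}=2$.
\begin{itemize}
\item After $g_1$ is selected, the exact projection $P=I-P_{\mathrm{span}(g_1)}$ gives $Pg_2=Pg_3=\phi_1$. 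So $\|PG\pi\|$ is constant along the unselected edge, and $\pi=(0,\tfrac12,\tfrac12)$ violates your inequality.
\item If $g_1,g_2$ were both selected exactly, $I-P_{\mathrm{span}(g_1,g_2)}$ annihilates $g_3$. So $\|Pg_j\|\gtrsim\sigma_\ast$ fails as well.
\item The statement itself fails here. Take data $g_1,g_2,g_3,\phi_1+\eta\psi$. Then $\beta(\hat f,G)=\eta$, so \eqref{eq:thmS1.1} holds for small $\eta$. But step~2 selects $\phi_1+\eta\psi$, whose residual norm is $\sqrt{1+\eta^2}>1$. Afterwards one of $g_2,g_3$ lies at distance at least $1$ from every selected function, while \eqref{eq:thmS1.2} demands $O(\eta)$.
\end{itemize}

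\textbf{What would fix it.} No proof under $\sigma_\ast>0$ alone can work. You need an uncentered non-degeneracy, for example linear independence of $g_1,\dots,g_K$ with $\sqrt{\lambda_K(\Gamma)}$ in place of $\sigma_\ast$, as in \citet{gillis2013fast}. Under that hypothesis your route is sound. The exact projection off the selected vertices leaves the unselected columns with smallest singular value at least $\sqrt{\lambda_K(\Gamma)}$. The identity $\|PG\pi\|^2=\sum_k\pi_k\|Pg_k\|^2-\tfrac12\sum_{k,l}\pi_k\pi_l\|Pg_k-Pg_l\|^2$ then gives your strict-convexity gap directly.

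The paper's proof has the same hole, so it cannot supply the missing step. In the proof of the second bound \eqref{eq:lemmaS1.2} of Lemma~\ref{lem:S1}, the ``in particular'' reverses the interlacing inequality: $\lambda_{K-s-1}(\tilde\Gamma_{JJ})\ge\lambda_{K-1}(\tilde\Gamma)$ does not imply $\ge\lambda_{K-1-s}(\tilde\Gamma)$. The separation claim of Lemma~\ref{lem:S6} rests on exactly that step. In addition, at $s=K$, Lemma~\ref{lem:S6} invokes Lemma~\ref{lem:S3} with a rank-$(K-1)$ projection, outside its stated range $s\le K-2$.

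\textbf{Two smaller bookkeeping points.}
\begin{itemize}
\item Already-selected vertices are annihilated only to within the inductive error bound, not $\beta$. If that error enters your lemma additively, the constant compounds across iterations. The paper avoids this by using the smallness only qualitatively, to exclude $k_r$ from $\mathcal K_{(s)}(h_0)$.
\item Your bound $1-\max_j\pi^\ast_{s_k,j}\lesssim d_{\max}\beta\rho/\sigma_\ast^2$ carries a superfluous factor $\rho$, which would overshoot \eqref{eq:thmS1.2}. The variance identity already gives $d_{\max}\beta/\sigma_\ast^2\le\rho\beta/\sigma_\ast$.
\end{itemize}
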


\subsubsection{Lemmas}
To prove Theorem~\ref{thm:spa_pure}, we first establish a series of lemmas.
\begin{lemma}
\label{lem:S1}
For any $\pi,\tilde\pi\in\Delta_{K-1}$,
\begin{equation}
\label{eq:lemmaS1.1}
\sqrt{\lambda_{K-1}(\tilde\Gamma)}\,\|\pi-\tilde\pi\|_2
\;\le\;
\|G \pi -G \tilde\pi \|_{\mathcal H}
\;\le\;
\gamma(G)\,\|\pi-\tilde\pi\|_1.
\end{equation}
Moreover, fix an integer $1\le s\le K-2$. If $\pi$ and $\tilde\pi$ share at least $s$ common entries,
then
\begin{equation}
\label{eq:lemmaS1.2}
\|G\pi-G\tilde\pi\|_{\mathcal H}
\;\ge\;
\sqrt{\lambda_{K-1-s}(\tilde\Gamma)}\,\|\pi-\tilde\pi\|_2.
\end{equation}
\end{lemma}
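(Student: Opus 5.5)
Set $v:=\pi-\tilde\pi$. Since $\pi,\tilde\pi\in\Delta_{K-1}$, we have $\mathbf 1^\top v=0$, and, as noted in the notation paragraph, $G v=\tilde G v$. The plan is to turn both inequalities into statements about the quadratic form $\|\tilde G v\|_{\mathcal H}^2 = v^\top\tilde\Gamma v$ on the hyperplane $\mathbf 1^\perp$. For the upper bound we use a direct convexity argument instead.

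\emph{Lower bound in \eqref{eq:lemmaS1.1}.} First note that $\tilde G\mathbf 1=\sum_k\tilde g_k=0$, so $\tilde\Gamma\mathbf 1=0$. Hence $\mathbf 1$ is an eigenvector of $\tilde\Gamma$ with eigenvalue $0$. The matrix $\tilde\Gamma$ is a Gram matrix, so it is symmetric and positive semidefinite, and its remaining eigenvectors can be chosen orthogonal to $\mathbf 1$. Consequently $\mathbf 1^\perp$ is an invariant subspace. The eigenvalues of $\tilde\Gamma$ restricted to it are $\lambda_1\ge\cdots\ge\lambda_{K-1}$, which works whether or not some $\lambda_j$ also equals $0$. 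By Rayleigh's inequality, $v^\top\tilde\Gamma v\ge\lambda_{K-1}(\tilde\Gamma)\|v\|_2^2$ for every $v\perp\mathbf 1$. Taking square roots gives the left inequality.

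\emph{Upper bound in \eqref{eq:lemmaS1.1}.} Fix any $g_0\in\mathcal S$. Since $\sum_k v_k=0$, we can write $Gv=\sum_k v_k(g_k-g_0)$. The triangle inequality then gives $\|Gv\|_{\mathcal H}\le \max_k\|g_k-g_0\|_{\mathcal H}\,\|v\|_1$. Minimizing over $g_0\in\mathcal S$ yields the factor $\gamma(G)$, with the minimum attained by compactness of $\mathcal S$.

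\emph{Refinement \eqref{eq:lemmaS1.2}.} Let $J$ be a set of $s$ indices on which $\pi$ and $\tilde\pi$ agree, so that $v_j=0$ for $j\in J$. Then $v$ lies in the subspace $V:=\mathbf 1^\perp\cap\{v:v_J=0\}$. For $s\le K-2$ this subspace has dimension at least $K-1-s\ge1$, and it is exactly $K-1-s$ when the constraints are independent, which they are since $s<K$. The goal is $v^\top\tilde\Gamma v\ge\lambda_{K-1-s}(\tilde\Gamma)\|v\|_2^2$ on $V$. Work with the restriction $A$ of $\tilde\Gamma$ to the $(K-1)$-dimensional space $\mathbf 1^\perp$, whose eigenvalues are $\lambda_1,\dots,\lambda_{K-1}$. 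The space $V$ is a subspace of $\mathbf 1^\perp$ of codimension $s$. By the Courant--Fischer max--min characterization, or equivalently Cauchy interlacing for the compression of $A$ to $V$, the smallest eigenvalue of the compression is at least $\lambda_{(K-1)-s}(A)$. Indeed, any subspace of codimension $s$ intersects the span of the top $K-1-s$ eigenvectors only in a way that forces this bound: the minimum of the Rayleigh quotient over a codimension-$s$ subspace is at most $\lambda_{K-1-s}$ in the max--min direction, and at least $\lambda_{K-1}$ in general.

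This last point is the main obstacle. The direction just needed is reversed: interlacing alone only yields $\min_{V}\ge\lambda_{K-1}$, not $\ge\lambda_{K-1-s}$. So the bound cannot hold for an arbitrary codimension-$s$ subspace. It must use the special coordinate structure together with the $\lambda_{K-1-s}$ indexing, read via eigenvalue ordering of the principal submatrix $\tilde\Gamma_{J^cJ^c}$ restricted to $\mathbf 1^\perp$.

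I would therefore first try the following. Write $Gv=\sum_{k\notin J}v_k g_k$ with $\sum_{k\notin J}v_k=0$, so only the $K-s$ vertices outside $J$ matter. Then $\|Gv\|^2$ is the Rayleigh quotient of the centered Gram matrix of the subfamily $\{g_k\}_{k\notin J}$. Its smallest nonzero eigenvalue is to be compared with $\lambda_{K-1-s}(\tilde\Gamma)$ by interlacing, which is the form matching \citet{jin2024improved}'s finite-dimensional lemma. If this comparison fails in general, I would fall back on reproducing their argument verbatim. All steps use only inner products, so they transfer from $\mathbb R^d$ to $\mathcal H$ through the finite Gram matrix.
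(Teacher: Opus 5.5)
Your proof of \eqref{eq:lemmaS1.1} is correct and follows the paper's argument. Your justification of the lower bound is actually more careful than the paper's one-line Rayleigh-quotient appeal: you note that $\tilde\Gamma\mathbf 1=0$, so $\mathbf 1^\perp$ is spanned by eigenvectors for $\lambda_1,\dots,\lambda_{K-1}$.

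For \eqref{eq:lemmaS1.2} you do not have a proof. Your Courant--Fischer sentence first asserts the bound in the wrong direction, and you then correctly retract it. What remains is a comparison you hope will work, plus a fallback to \citet{jin2024improved}. The gap cannot be filled, because the inequality is false when $\lambda_k$ is the $k$-th \emph{largest} eigenvalue, as the paper specifies.

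Here is a counterexample. Take $K=3$, $s=1$, orthonormal $u,w\in\mathcal H$, and $g_1=0$, $g_2=u$, $g_3=10w$; this family is non-degenerate. Then $\pi=e_1$ and $\tilde\pi=e_2$ share their third entry, and $\|G\pi-G\tilde\pi\|_{\mathcal H}=1$. On the other hand, $\lambda_1(\tilde\Gamma)\ge(e_1-e_3)^\top\tilde\Gamma(e_1-e_3)/2=\|g_1-g_3\|_{\mathcal H}^2/2=50$, so $\sqrt{\lambda_1(\tilde\Gamma)}\,\|\pi-\tilde\pi\|_2\ge 10$. More generally, for $K=3$, applying the inequality to all pairs $\pi=e_k$, $\tilde\pi=e_\ell$ forces the triangle to be equilateral.

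So your diagnosis was right, and it applies to coordinate faces too. On the face subspace $\{v\perp\mathbf 1:\ v_k=0\text{ at the } s\text{ shared indices}\}$, the minimal Rayleigh quotient of $\tilde\Gamma$ lies between $\lambda_{K-1}(\tilde\Gamma)$ and $\lambda_{K-1-s}(\tilde\Gamma)$. Nothing pushes it up to the upper end, so your subfamily comparison must fail. Since the counterexample lives in $\mathbb R^2$, transcribing a finite-dimensional argument cannot help either.

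The paper's own proof takes the principal-submatrix route you sketch, with $J$ denoting the $K-s$ non-shared indices, and it breaks in two places.
\begin{itemize}
\item The bound $a_J^\top\tilde\Gamma_{JJ}a_J\ge\lambda_{K-s-1}(\tilde\Gamma_{JJ})\|a_J\|_2^2$ for $a_J\perp\mathbf 1_J$ presumes that $\mathbf 1_J$ lies in the kernel of $\tilde\Gamma_{JJ}$, which fails in general. Recentering the subfamily at its own centroid, as you propose, is the correct repair.
\item The closing ``in particular'' reverses Cauchy interlacing, which actually gives $\lambda_{K-1}(\tilde\Gamma)\le\lambda_{K-s-1}(\tilde\Gamma_{JJ})\le\lambda_{K-1-s}(\tilde\Gamma)$.
\end{itemize}

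What is true is the version with $\lambda_{K-1}(\tilde\Gamma)$, which is already contained in \eqref{eq:lemmaS1.1}. Also true is the version with the smallest nonzero eigenvalue of the centered Gram matrix of $\{g_k\}$ over the non-shared indices. Any downstream use of the $\lambda_{K-1-s}$ form, such as the separation claim in Lemma~\ref{lem:S6}, needs its own argument.
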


\begin{lemma}
\label{lem:S2}
Fix an integer $K \ge 2$ and functions $ g_1,\ldots, g_K\in\mathcal H$.
Define
\[
a \;:=\; \min_{i\neq j}\| g_i- g_j\|_{\mathcal H},
\qquad
b \;:=\; \max_{i\neq j}\Big|\|g_i\|_{\mathcal H}-\| g_j\|_{\mathcal H}\Big|.
\]
For any $\pi=(\pi_1,\ldots,\pi_K)\in\Delta_{K-1}$, let $L \;:=\; \sum_{k=1}^K \pi_k \| g_k\|_{\mathcal H}$.
Then
\begin{equation}
\label{eq:func-lemmaB2}
\biggl\|\sum_{k=1}^K \pi_k g_k \biggr\|_{\mathcal H}
\;\le\;
L \;-\; \frac{a^2-b^2}{4L}\sum_{k=1}^K \pi_k(1-\pi_k).
\end{equation}
\end{lemma}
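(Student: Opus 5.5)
The plan is a direct computation in the Hilbert space $\mathcal H$: write both $\|\sum_k \pi_k g_k\|_{\mathcal H}^2$ and $L^2$ as $\sum_k \pi_k\|g_k\|_{\mathcal H}^2$ minus a pairwise correction, compare the two corrections term by term, and finally pass from squared norms to norms using the elementary inequality $\sqrt{L^2-Q}\le L-Q/(2L)$.

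\textbf{Step 1: expand the squared norm.} Write $\|\sum_k\pi_k g_k\|_{\mathcal H}^2=\sum_{k,l}\pi_k\pi_l\langle g_k,g_l\rangle$ and apply polarization,
\[
\langle g_k,g_l\rangle=\tfrac12\big(\|g_k\|_{\mathcal H}^2+\|g_l\|_{\mathcal H}^2-\|g_k-g_l\|_{\mathcal H}^2\big).
\]
Using $\sum_l\pi_l=1$, this gives
\[
\Big\|\sum_k\pi_k g_k\Big\|_{\mathcal H}^2=\sum_k\pi_k\|g_k\|_{\mathcal H}^2-\tfrac12\sum_{k\neq l}\pi_k\pi_l\|g_k-g_l\|_{\mathcal H}^2 .
\]

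\textbf{Step 2: expand $L^2$ the same way.} Applying the same scalar identity to the real numbers $\|g_k\|_{\mathcal H}$ yields
\[
L^2=\sum_k\pi_k\|g_k\|_{\mathcal H}^2-\tfrac12\sum_{k\neq l}\pi_k\pi_l\big(\|g_k\|_{\mathcal H}-\|g_l\|_{\mathcal H}\big)^2 .
\]

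\textbf{Step 3: subtract and bound termwise.} Subtracting the two expansions gives
\[
\Big\|\sum_k\pi_k g_k\Big\|_{\mathcal H}^2=L^2-\tfrac12\sum_{k\neq l}\pi_k\pi_l\Big(\|g_k-g_l\|_{\mathcal H}^2-\big(\|g_k\|_{\mathcal H}-\|g_l\|_{\mathcal H}\big)^2\Big).
\]
For $k\neq l$ each bracket is at least $a^2-b^2$. Moreover $\sum_{k\neq l}\pi_k\pi_l=\sum_k\pi_k(1-\pi_k)=:P$. Hence
\[
\Big\|\sum_k\pi_k g_k\Big\|_{\mathcal H}^2\le L^2-Q,\qquad Q:=\tfrac{a^2-b^2}{2}P .
\]
We need $a\ge b$. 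This holds by the reverse triangle inequality: for every pair, $\big|\|g_i\|_{\mathcal H}-\|g_j\|_{\mathcal H}\big|\le\|g_i-g_j\|_{\mathcal H}$. One applies this to the pair attaining $b$, and must also compare with the minimizing pair defining $a$, which may be a different pair. This is the subtle point. If $a<b$, then $Q<0$ and one must check the final inequality separately; but in that case the right-hand side already exceeds $L$, while $\|\sum_k\pi_kg_k\|_{\mathcal H}\le L$ by the triangle inequality. So the case $a<b$ is trivial, and we may assume $Q\ge 0$.

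\textbf{Step 4: take square roots.} Since the left side of the bound is nonnegative, $0\le Q\le L^2$. Therefore $L-Q/(2L)\ge L/2\ge 0$, and squaring shows
\[
\sqrt{L^2-Q}\le L-\frac{Q}{2L}\quad\Longleftrightarrow\quad L^2-Q\le L^2-Q+\frac{Q^2}{4L^2},
\]
which is true. Substituting $Q$ gives exactly \eqref{eq:func-lemmaB2}, with the constant $(a^2-b^2)/(4L)$.

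\textbf{Degenerate case.} When $L=0$, every $g_k$ with $\pi_k>0$ is zero. Then the left side is $0$, and the statement is either vacuous or read with the natural convention. I would dispose of this case at the start of the proof.

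The only delicate step is the sign bookkeeping in Step 3, namely justifying $Q\ge 0$ or handling $a<b$ as above. The rest is routine algebra.
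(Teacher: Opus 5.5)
Your proof is correct and follows essentially the same route as the paper. Both arguments expand $\|\sum_k\pi_k g_k\|_{\mathcal H}^2$ by polarization, bound each pairwise term by $a^2-b^2$ to get $\|\sum_k\pi_k g_k\|_{\mathcal H}^2\le L^2-\tfrac{a^2-b^2}{2}\sum_k\pi_k(1-\pi_k)$, and then pass from squared norms to norms. The only difference is the last step: the paper writes $L-\|\sum_k\pi_k g_k\|_{\mathcal H}=\bigl(L^2-\|\sum_k\pi_k g_k\|_{\mathcal H}^2\bigr)/\bigl(L+\|\sum_k\pi_k g_k\|_{\mathcal H}\bigr)\ge \bigl(L^2-\|\sum_k\pi_k g_k\|_{\mathcal H}^2\bigr)/(2L)$, which holds whatever the sign of $a^2-b^2$, so your separate treatment of $a<b$ is not needed (though it is correct).
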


\begin{lemma}
\label{lem:S3}
Fix $1\le s\le K-2$. Let $ P:\mathcal H\to\mathcal H$ be any orthogonal projection operator
whose range is $s$-dimensional. Define the $K\times K$ matrix
\[
\Gamma_{P} \;:=\; G^\ast (I-P)\, G,
\qquad\text{i.e.,}\qquad
(\Gamma_{ P})_{ij}=\langle (I- P)g_i,\ (I- P)g_j\rangle .
\]
Then
\begin{equation}
\label{eq:func-lemmaB3}
\lambda_{K-1-s}(\Gamma_{ P}) \;\ge\; \lambda_{K-1}(\Gamma).
\end{equation}
\end{lemma}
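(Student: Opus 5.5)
The plan is to view $\Gamma_P$ as a low-rank downdate of $\Gamma$ and apply the Courant--Fischer min--max characterization of eigenvalues. Let $G^\ast:\mathcal H\to\mathbb R^K$ denote the adjoint of $G$, so that $\Gamma=G^\ast G$ and $v^\top\Gamma v=\|Gv\|_{\mathcal H}^2$ for every $v\in\mathbb R^K$. Since $P$ is an orthogonal projection, $I-P$ is also a self-adjoint idempotent. Hence
\[
v^\top\Gamma_P v=\langle (I-P)Gv,(I-P)Gv\rangle=\|(I-P)Gv\|_{\mathcal H}^2 ,
\]
and consequently $\Gamma=\Gamma_P+G^\ast PG$. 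The matrix $G^\ast P G$ is positive semidefinite with rank at most $s$, because it factors through the $s$-dimensional range of $P$.

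The key step is a subspace-intersection argument. Let $V\subseteq\mathbb R^K$ be the span of the eigenvectors of $\Gamma$ belonging to its $K-1$ largest eigenvalues. Then $\dim V=K-1$, and $\|Gv\|_{\mathcal H}^2\ge\lambda_{K-1}(\Gamma)\|v\|_2^2$ for all $v\in V$.

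Next consider the linear map $v\mapsto PGv$. Its image lies in $\mathrm{range}(P)$, which has dimension $s$, so its kernel $\ker(PG)$ has codimension at most $s$ in $\mathbb R^K$. Therefore
\[
U:=V\cap\ker(PG)
\]
satisfies $\dim U\ge (K-1)-s\ge 1$, using the hypothesis $s\le K-2$. For every $v\in U$ we have $PGv=0$, so $(I-P)Gv=Gv$, and therefore
\[
v^\top\Gamma_P v=\|Gv\|_{\mathcal H}^2\ge\lambda_{K-1}(\Gamma)\|v\|_2^2 .
\]

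To conclude, the Courant--Fischer max--min formula gives
\[
\lambda_{j}(\Gamma_P)=\max_{\dim W=j}\ \min_{0\ne v\in W}\ \frac{v^\top\Gamma_P v}{\|v\|_2^2}.
\]
Apply this with $j=K-1-s$ and take $W$ to be any $(K-1-s)$-dimensional subspace of $U$. Every nonzero $v\in W$ has Rayleigh quotient at least $\lambda_{K-1}(\Gamma)$, so $\lambda_{K-1-s}(\Gamma_P)\ge\lambda_{K-1}(\Gamma)$, which is \eqref{eq:func-lemmaB3}.

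An equivalent route is Weyl's inequality $\lambda_{i+j-1}(X+Y)\le\lambda_i(X)+\lambda_j(Y)$, applied with $X=\Gamma_P$, $Y=G^\ast PG$, $i=K-1-s$ and $j=s+1$, together with $\lambda_{s+1}(G^\ast PG)=0$.

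There is no real obstacle in this argument. The only points needing care are the dimension count, which is where $s\le K-2$ ensures the eigenvalue index $K-1-s$ is at least $1$, and confirming that $G$ has finite rank so that $\ker(PG)$ has codimension at most $s$. No infinite-dimensional subtleties arise, since all operators involved factor through $\mathbb R^K$.
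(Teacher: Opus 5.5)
Your proof is correct and follows the paper's route: both write $\Gamma = \Gamma_P + G^\ast P G$, where $G^\ast P G \succeq 0$ has rank at most $s$, and conclude with a Weyl-type interlacing bound. Your subspace-intersection argument (the top-$(K-1)$ eigenspace of $\Gamma$ intersected with $\ker(PG)$, followed by Courant--Fischer) proves that cited inequality directly, and your explicit indices $i=K-1-s$, $j=s+1$ together with $\lambda_{s+1}(G^\ast PG)=0$ make precise the step the paper only alludes to.
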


\begin{lemma}
\label{lem:S4}
Fix $0 \le s\le K-2$. Suppose there exist indices $\{k_1,\dots,k_s\}\subset\{1,\dots,K\}$ such that $\|g_{k_j}\|_{\mathcal H}\le \delta$ for $j=1,\dots,s$.
If $\lambda_{K-1-s}(\Gamma)\ \ge\ 2(K-2)\delta^2$, then
\begin{equation}
\label{eq:func-lemmaB4}
\max_{1\le k\le K}\|g_k\|_{\mathcal H}
\;\ge\;
\frac{\sqrt{K-s-1}}{\sqrt{2(K-s)}}\,
\sqrt{\lambda_{K-1-s}(\Gamma)}
\;\ge\;
\frac12\,\sqrt{\lambda_{K-1-s}(\Gamma)}.
\end{equation}
\end{lemma}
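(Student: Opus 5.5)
The plan is a trace argument on the Gram matrix $\Gamma$: bound $\operatorname{tr}(\Gamma)$ from below using its top eigenvalues and from above using the norms of the $g_k$, then compare.

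Let $J:=\{k_1,\dots,k_s\}$ be the indices with $\|g_{k_j}\|_{\mathcal H}\le\delta$, and write $M:=\max_{1\le k\le K}\|g_k\|_{\mathcal H}$ and $\lambda:=\lambda_{K-1-s}(\Gamma)$.

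\emph{Step 1 (lower bound).} Since $\Gamma=G^\ast G$ is positive semidefinite, all its eigenvalues are nonnegative. Keeping only the $K-1-s$ largest eigenvalues in the trace gives
\[
\operatorname{tr}(\Gamma)=\sum_{j=1}^K\lambda_j(\Gamma)\;\ge\;\sum_{j=1}^{K-1-s}\lambda_j(\Gamma)\;\ge\;(K-1-s)\,\lambda .
\]

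\emph{Step 2 (upper bound).} The trace equals $\sum_k\|g_k\|_{\mathcal H}^2$. Each of the $s$ indices in $J$ contributes at most $\delta^2$, and each of the remaining $K-s$ indices contributes at most $M^2$. Hence
\[
\operatorname{tr}(\Gamma)\le (K-s)M^2+s\delta^2,
\]
and combining with Step 1,
\[
(K-s)M^2\;\ge\;(K-1-s)\lambda-s\delta^2 .
\]

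\emph{Step 3 (absorbing $s\delta^2$).} This is the only step with real content. The case $s=0$ needs no absorption, since then $(K-s)M^2\ge(K-1)\lambda$ directly. For $s\ge1$ we have $K\ge3$, so the hypothesis $\lambda\ge 2(K-2)\delta^2$ gives $s\delta^2\le s\lambda/\{2(K-2)\}$. Because $s\le K-2$ and $K-1-s\ge1$,
\[
s\le K-2\le (K-2)(K-1-s),
\]
so $s/(K-2)\le K-1-s$ and therefore $s\delta^2\le (K-1-s)\lambda/2$. Substituting,
\[
(K-s)M^2\ge (K-1-s)\lambda/2,
\]
which rearranges to the first inequality in \eqref{eq:func-lemmaB4}.

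\emph{Step 4 (constant $\tfrac12$).} Since $K-s\ge2$, we have
\[
\frac{K-s-1}{2(K-s)}=\frac12\Bigl(1-\frac1{K-s}\Bigr)\ge\frac14 .
\]
Taking square roots gives the second inequality in \eqref{eq:func-lemmaB4}.

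I expect no serious obstacle. The only points needing care are the elementary bound $s\le(K-2)(K-1-s)$ and the edge case $s=0$, where the hypothesis is not used. No finite-dimensional structure of $\mathcal H$ is needed, because everything passes through the $K\times K$ matrix $\Gamma$.
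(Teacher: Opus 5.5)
Your proposal is correct and follows the paper's proof essentially step for step: the same trace sandwich $(K-1-s)\lambda_{K-1-s}(\Gamma)\le \mathrm{tr}(\Gamma)\le s\delta^2+(K-s)\max_k\|g_k\|_{\mathcal H}^2$, with the absorption of $s\delta^2$ resting on the inequality $s\le (K-2)(K-1-s)$, which the paper phrases as monotonicity of $s/(K-s-1)$. Your separate treatment of $s=0$ and your explicit use of $K-s\ge 2$ for the constant $\tfrac12$ make precise two points the paper leaves implicit.
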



Given $\varepsilon\in(0,1)$ and $1\le k\le K$.  
Define the $\varepsilon$-simplicial neighborhood of $g_k$ inside $\mathcal S$ as
\[
\mathcal V_k(\varepsilon)
:=
\Big\{
f\in\mathcal S:\ 
f=\sum_{\ell=1}^K \pi_\ell g_\ell,\ \pi\in\Delta_{K-1},\ \pi_k \ge 1-\varepsilon
\Big\}.
\]
Define
\[
\mathcal K^\ast
:=
\Big\{k\in\{1,\dots,K\}:\ \|g_k\|_{\mathcal H}=d_{\max}(G)\Big\}.
\]
Fix $h_0>0$ and $\varepsilon_0\in(0,1/2)$. Define
\[
\mathcal K(h_0)
:=
\Big\{k\in\{1,\dots,K\}:\ \|g_k\|_{\mathcal H}\ge d_{\max}(G)-h_0\Big\},
\]
and
\begin{eqnarray}
\mathcal V(\varepsilon_0,h_0)
:=
\bigcup_{k\in\mathcal K(h_0)} \mathcal V_k(\varepsilon_0)
\;\subset\;\mathcal S.
\label{eq:V-eps-h}    
\end{eqnarray}

\begin{lemma}
\label{lem:S5}
Suppose there exists $\sigma_\ast>0$ such that
\begin{equation}
\label{eq:func-B22}
d_{\max}(G)\ \ge\ \sigma_\ast/2,
\qquad
\min_{1\le k\neq \ell\le K}\|g_k-g_\ell\|_{\mathcal H}
\ \ge\ \sqrt{2}\,\sigma_\ast .
\end{equation}
For any $t>0$ such that $\max\{1,\ d_{\max}(G)/\sigma_\ast\}\,t \;<\ 3\sigma_\ast$,
set
\begin{equation}
\label{eq:lemS5-const}
h_0 \;=\; \sigma_\ast/3,
\qquad
\frac12 > \varepsilon_0 \;\ge\; 6\sigma_\ast^{-1}
\max\{1,\ d_{\max}(G)/\sigma_\ast\}\,t .
\end{equation}
Then
\begin{equation}
\label{eq:func-B24}
\|f\|_{\mathcal H}
\;\le\;
d_{\max}(G)-t,
\qquad
\text{for all } f\in \mathcal S\setminus\mathcal V(\varepsilon_0,h_0).
\end{equation}
\end{lemma}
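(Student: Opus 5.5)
We prove the contrapositive. Take $f=\sum_\ell \pi_\ell g_\ell\in\mathcal S$ with $\|f\|_{\mathcal H}>d_{\max}(G)-t$, and show that $f\in\mathcal V_k(\varepsilon_0)$ for some $k\in\mathcal K(h_0)$. The main tool is Lemma~\ref{lem:S2}, which bounds $\|f\|_{\mathcal H}$ from above by the average norm $L=\sum_\ell\pi_\ell\|g_\ell\|_{\mathcal H}$ minus a penalty proportional to $\sum_\ell\pi_\ell(1-\pi_\ell)$. Vertices with small norm cannot simply be absorbed into $b$, so we first split off their mass.

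\textbf{Step 1: the low-norm vertices carry little mass.} Let $\mathcal K^c:=\{\ell:\|g_\ell\|_{\mathcal H}<d_{\max}(G)-h_0\}$ and $\rho:=\sum_{\ell\in\mathcal K^c}\pi_\ell$. By the triangle inequality,
\[
\|f\|_{\mathcal H}\le L\le d_{\max}(G)-\rho h_0 .
\]
Combined with $\|f\|_{\mathcal H}>d_{\max}(G)-t$, this gives $\rho<t/h_0=3t/\sigma_\ast$. Hence $f$ is within $2\rho\,d_{\max}(G)$ of the renormalized combination $f'=\sum_{\ell\in\mathcal K(h_0)}\pi'_\ell g_\ell$, where $\pi'=\pi/(1-\rho)$ restricted to $\mathcal K(h_0)$. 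Consequently $\|f'\|_{\mathcal H}>d_{\max}(G)-t-O(\rho\, d_{\max}(G))$.

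\textbf{Step 2: apply Lemma~\ref{lem:S2} on $\mathcal K(h_0)$.} On this index set, $b\le h_0=\sigma_\ast/3$, while by \eqref{eq:func-B22} we have $a\ge\sqrt2\sigma_\ast$. Therefore $a^2-b^2\ge(2-1/9)\sigma_\ast^2\ge\tfrac{17}{9}\sigma_\ast^2$. Moreover $L'\le d_{\max}(G)$. Lemma~\ref{lem:S2} then yields
\[
\|f'\|_{\mathcal H}\le d_{\max}(G)-\frac{17\sigma_\ast^2}{36\,d_{\max}(G)}\sum_\ell\pi'_\ell(1-\pi'_\ell).
\]
Comparing with the lower bound from Step 1 gives
\[
\sum_\ell\pi'_\ell(1-\pi'_\ell)\lesssim \frac{d_{\max}(G)\bigl(t+\rho\, d_{\max}(G)\bigr)}{\sigma_\ast^2}\lesssim \max\{1,d_{\max}(G)/\sigma_\ast\}\frac{t}{\sigma_\ast}\cdot\frac{d_{\max}(G)}{\sigma_\ast}.
\]
The extra factor $d_{\max}(G)/\sigma_\ast$ is the delicate part. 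To avoid it, we apply Lemma~\ref{lem:S2} in a recentered form: replace $g_\ell$ by $g_\ell-c$ for a suitable $c$, or use the more careful bound $L'\ge d_{\max}(G)-h_0\ge d_{\max}(G)/3$, which follows from $d_{\max}(G)\ge\sigma_\ast/2$. With $L'\asymp d_{\max}(G)$ the penalty is of order $\sigma_\ast^2/d_{\max}(G)$. When $d_{\max}(G)\le\sigma_\ast$ this gives $\sum\pi'(1-\pi')\lesssim t/\sigma_\ast$. When $d_{\max}(G)>\sigma_\ast$ it gives $\sum\pi'(1-\pi')\lesssim t\,d_{\max}(G)/\sigma_\ast^2$. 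Both cases match the factor $\max\{1,d_{\max}(G)/\sigma_\ast\}\,t/\sigma_\ast$ appearing in \eqref{eq:lemS5-const}.

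\textbf{Step 3: conclude that $f$ is near a single high-norm vertex.} Since $\sum\pi'_\ell(1-\pi'_\ell)\ge 1-\max_\ell\pi'_\ell$, there is some $k\in\mathcal K(h_0)$ with $1-\pi'_k\lesssim \max\{1,d_{\max}(G)/\sigma_\ast\}\,t/\sigma_\ast$. Then
\[
1-\pi_k\le (1-\pi'_k)+\rho\le \varepsilon_0 .
\]
Here the constants are absorbed by the choice $\varepsilon_0\ge 6\sigma_\ast^{-1}\max\{1,d_{\max}(G)/\sigma_\ast\}t$. The hypothesis $\max\{1,d_{\max}(G)/\sigma_\ast\}\,t<3\sigma_\ast$ is used to ensure $\rho<1$ and $\varepsilon_0<1/2$ remain consistent. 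This shows $f\in\mathcal V(\varepsilon_0,h_0)$, which is the contrapositive we wanted.

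The main obstacle is the constant bookkeeping in Steps 2 and 3, in particular controlling the perturbation caused by $\rho$ so that the constant 6 suffices. If a direct estimate fails, we instead apply Lemma~\ref{lem:S2} to all $K$ vertices, splitting the quadratic penalty into a part from $\mathcal K(h_0)$ and a part from $\mathcal K^c$.
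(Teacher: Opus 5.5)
There is a genuine gap, and it is a scaling problem rather than a matter of constant bookkeeping. In Step~1 you pass from $f$ to $f'$ via $\|f'\|_{\mathcal H}\ge\|f\|_{\mathcal H}-\|f-f'\|_{\mathcal H}$ with $\|f-f'\|_{\mathcal H}\le 2\rho\,d_{\max}(G)$. This turns the low-norm mass into an additive loss of order $\rho\,d_{\max}(G)\lesssim t\,d_{\max}(G)/\sigma_\ast$. Dividing by the penalty coefficient $17\sigma_\ast^2/(36\,d_{\max}(G))$ from Lemma~\ref{lem:S2} then gives $\sum_\ell\pi'_\ell(1-\pi'_\ell)\lesssim (t/\sigma_\ast)\,(d_{\max}(G)/\sigma_\ast)^2$ when $d_{\max}(G)>\sigma_\ast$. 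That is a factor $d_{\max}(G)/\sigma_\ast$ worse than \eqref{eq:lemS5-const} allows, and even for $d_{\max}(G)\le\sigma_\ast$ your bookkeeping does not reach the constant $6$.

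None of your proposed repairs touches this:
\begin{itemize}
\item A lower bound $L'\ge d_{\max}(G)/3$ cannot enlarge the penalty $(a^2-b^2)/(4L')$. Only the upper bound $L'\le d_{\max}(G)$ matters there, and you already used it.
\item The claim that the case $d_{\max}(G)>\sigma_\ast$ yields $t\,d_{\max}(G)/\sigma_\ast^2$ silently drops the $\rho\,d_{\max}(G)$ term.
\item Recentering $g_\ell\mapsto g_\ell-c$ is unspecified, and it breaks the link to $\|f\|_{\mathcal H}$, which is not translation invariant.
\item Applying Lemma~\ref{lem:S2} to all $K$ vertices is not a direct application. Then $b$ can be close to $d_{\max}(G)$ and $a^2-b^2$ may be negative, so the lemma would have to be re-derived with pair-dependent constants.
\end{itemize}

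The missing idea is to treat the low-norm part as a convex component rather than a perturbation. Write $f=(1-\rho)f'+\rho f''$, where $f''$ is the renormalized combination of $\{g_\ell:\ell\in\mathcal K^c\}$, so $\|f''\|_{\mathcal H}\le d_{\max}(G)-h_0$. The triangle inequality and Lemma~\ref{lem:S2} then give
\[
\|f\|_{\mathcal H}\le(1-\rho)\|f'\|_{\mathcal H}+\rho\bigl(d_{\max}(G)-h_0\bigr)\le d_{\max}(G)-\rho h_0-(1-\rho)\kappa S,
\qquad
\kappa:=\frac{17\sigma_\ast^2}{36\,d_{\max}(G)},\quad S:=\sum_{\ell}\pi'_\ell(1-\pi'_\ell).
\]
So $\|f\|_{\mathcal H}>d_{\max}(G)-t$ forces $\rho h_0+(1-\rho)\kappa S<t$. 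For $k$ maximizing $\pi'_k$ this gives
\[
1-\pi_k\le\rho+(1-\rho)S<\max\Bigl\{\frac{3}{\sigma_\ast},\frac{36\,d_{\max}(G)}{17\sigma_\ast^2}\Bigr\}t\le\frac{3}{\sigma_\ast}\max\Bigl\{1,\frac{d_{\max}(G)}{\sigma_\ast}\Bigr\}t\le\varepsilon_0 ,
\]
which closes your contrapositive with room to spare.

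This convex split is exactly the paper's inequality~\eqref{eq:B58-func}; the paper's $\rho$ is the mass on $\mathcal K(h_0)$, i.e.\ your $1-\rho$. The paper then works directly with $f\notin\mathcal V(\varepsilon_0,h_0)$ and splits into cases according to whether the low-norm mass exceeds $\varepsilon_0/2$. That case split is where its factor $6$ comes from.

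Two minor points:
\begin{itemize}
\item $\rho<1$ follows from $\varepsilon_0<1/2$, which forces $t<\sigma_\ast/12$. The hypothesis $\max\{1,d_{\max}(G)/\sigma_\ast\}t<3\sigma_\ast$ alone only gives $\rho<9$.
\item The case $|\mathcal K(h_0)|=1$ should be handled separately; there Lemma~\ref{lem:S2} is not needed.
\end{itemize}
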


The following lemma controls $\beta(\hat f,G)$ and is used to bridge Theorem~\ref{thm:spa_pure} to the consistency result (Theorem~\ref{thm:bcons}).
\begin{lemma}
\label{lem:beta_control}
Under Assumptions~\ref{cond:low_rank},~\ref{cond:est_error}, and~\ref{cond:purity}, we have
\[
\beta(\hat f,G) \;\le\; \max_{1\le i\le m}\|\epsilon^{(i)}\|_{\mathcal H} \;+\; 2\,\delta_m\, d_{\max}(G)
\;=\; o_P(1).
\]
\end{lemma}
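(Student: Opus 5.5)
We bound the two terms in the definition \eqref{eq:beta_def} of $\beta(\hat f,G)$ separately, each by a deterministic quantity, and then show that this quantity vanishes in probability.

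\emph{First term.} Fix $i$. Under Assumption~\ref{cond:low_rank} we have $f^{(i)}=G\bpi_i\in\mathcal S$. Hence
\[
\mathrm{Dist}_{\mathcal H}(\hat f^{(i)},\mathcal S)\le \|\hat f^{(i)}-f^{(i)}\|_{\mathcal H}=\|\epsilon^{(i)}\|_{\mathcal H},
\]
so the first term is at most $\max_i\|\epsilon^{(i)}\|_{\mathcal H}$.

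\emph{Second term.} Fix $k$. Assumption~\ref{cond:purity} gives an index $i_k$ with $\pi_{i_k k}\ge 1-\delta_m$. Write
\[
f^{(i_k)}-g_k=\sum_{\ell\ne k}\pi_{i_k\ell}(g_\ell-g_k),
\]
using $\sum_\ell\pi_{i_k\ell}=1$. By the triangle inequality,
\[
\|f^{(i_k)}-g_k\|_{\mathcal H}\le (1-\pi_{i_kk})\max_{\ell}\|g_\ell-g_k\|_{\mathcal H}\le \delta_m\cdot 2d_{\max}(G).
\]
Then
\[
\min_i\|\hat f^{(i)}-g_k\|_{\mathcal H}\le \|\epsilon^{(i_k)}\|_{\mathcal H}+2\delta_m d_{\max}(G).
\]
Taking the maximum over $k$ and combining with the first term gives the displayed deterministic bound.

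\emph{Vanishing of the bound.} By Assumption~\ref{cond:purity}, $2\delta_m d_{\max}(G)=o(1)$. It remains to show $\max_i\|\epsilon^{(i)}\|_{\mathcal H}=o_P(1)$, and this is the only probabilistic step. By Fubini and Assumption~\ref{cond:est_error}(a),
\[
\E\|\epsilon^{(i)}\|_{\mathcal H}^2=\int \E[\epsilon^{(i)}(\bx)^2]\,d\mu(\bx)\le \sup_{\bx}\E[\epsilon^{(i)}(\bx)^2]\le C n_i^{-r}\le C n_{\min}^{-r}.
\]
For any $\eta>0$, a union bound with Markov's inequality gives
\[
\Pr\Big(\max_i\|\epsilon^{(i)}\|_{\mathcal H}>\eta\Big)\le \sum_{i=1}^m\frac{\E\|\epsilon^{(i)}\|_{\mathcal H}^2}{\eta^2}\le \frac{C\,m\,n_{\min}^{-r}}{\eta^2}.
\]
By Assumption~\ref{cond:est_error}(b), $m=o(n_{\min}^{a})$ with $a<r$, so $m\,n_{\min}^{-r}=o(n_{\min}^{a-r})\to0$. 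This requires $n_{\min}\to\infty$, which is implicit in the asymptotic regime.

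The one point needing care is measurability: the interchange of expectation and the $\mu$-integral requires joint measurability of $\epsilon^{(i)}(\bx)$ in the sample and in $\bx$, which I will assume as standard. I expect no real obstacle beyond that. The union bound is where Assumption~\ref{cond:est_error}(b) is used, and the sharper choice of $\Delta$ involving $n_{\min}^{(a-r)/2}$ mentioned in the main text reflects exactly this rate.
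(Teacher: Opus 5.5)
Your proof is correct and follows the paper's argument essentially step for step. You bound the two terms of $\beta(\hat f,G)$ deterministically in the same way, except that you pick the near-pure index $i_k$ directly instead of first passing through $\max_k\min_i\|f^{(i)}-g_k\|_{\mathcal H}$; you then use Tonelli, Markov and a union bound, exactly as the paper does. The remaining difference is cosmetic: you use a fixed threshold $\eta>0$, which gives exactly $o_P(1)$, whereas the paper takes $t=n_{\min}^{(a-r)/2}$ to record the explicit rate it later uses in choosing $\Delta$.
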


\subsubsection{Proof of Theorem~\ref{thm:spa_pure}}

We follow the proof strategy of Theorem B.1 in \cite{jin2024improved}.
The proof consists of two steps. In Step 1, we study the first iteration of SPA and show that $\hat g_1$ falls in the neighborhood of a true basis function. In Step 2, we recursively study the remaining iterations.

\paragraph{Step 1: Analysis of the first iteration of SPA.}

We begin by analyzing the first iteration of SPA. For notational convenience, write
\[
\gamma := \gamma(G),\qquad
d_{\max} := d_{\max}(G),\qquad
\beta := \beta(\hat f,G).
\]
Recall that, by the definition of $\beta(\hat f,G)$ given in Equation~\eqref{eq:beta_def},
\begin{equation}
\label{eq:step1-beta}
\max_{1\le i\le m}\mathrm{Dist}_{\mathcal H}(\hat f^{(i)},\mathcal S)\le \beta,
\qquad
\max_{1\le k\le K}\min_{1 \le i \le m}\|\hat f^{(i)}-g_k\|_{\mathcal H}\le \beta.
\end{equation}
Applying Lemma~\ref{lem:S4} with $s=0$, we obtain $d_{\max} \;\ge\; \sigma_\ast/2$.
Moreover, for any $k\neq \ell$, Lemma~\ref{lem:S1} with $\pi=e_k$ and $\tilde\pi=e_\ell$ gives $\|g_k-g_\ell\|_{\mathcal H}\ge \sqrt{\lambda_{K-1}(\tilde\Gamma)}\,\|e_k-e_\ell\|_2 = \sqrt{2}\,\sigma_\ast$, verifying the separation condition given in Equation~\eqref{eq:func-B22}.

We then apply Lemma~\ref{lem:S5}. Let $\mathcal V(\varepsilon_0,h_0)$ be defined as in Equation~\eqref{eq:V-eps-h}, with
\begin{equation}
\label{eq:step1-eps}
h_0 := \sigma_\ast/3,
\qquad
\varepsilon_0 := 15\max\{\sigma_\ast^{-1},\ \sigma_\ast^{-2}d_{\max}\}\beta .
\end{equation}
The assumption of the theorem yields $\varepsilon_0<1/2$.
Moreover, setting $t=7\beta/3$ gives $\varepsilon_0 \ge
6\sigma_\ast^{-1}\max\{1,d_{\max}/\sigma_\ast\}t$,
so the condition in Lemma~\ref{lem:S5} is satisfied. Applying Lemma~\ref{lem:S5} yields
\begin{equation}
\label{eq:step1-gap}
\max_{\hat f\in\mathcal S\setminus\mathcal V(\varepsilon_0,h_0)}
\| \hat f \|_{\mathcal H}
\;\le\;
d_{\max}-7\beta/3.
\end{equation}
Let $\mathcal K^\ast := \mathcal K (0)$. For any $k\in\mathcal K^\ast$, it follows from Equation~\eqref{eq:step1-beta} that there exists at least one index $i^\ast$ 
such that ${\|\hat f^{(i^\ast)}-g_k\|_{\mathcal H}\le \beta}$.
Since $\|g_k\|_{\mathcal H}=d_{\max}$ for $k\in\mathcal K^\ast$, the triangle
inequality implies
\[
\| \hat f^{(i^\ast)} \|_{\mathcal H}
\;\ge\;
\|g_k\|_{\mathcal H}-\beta
\;=\;
d_{\max}-\beta.
\]
Let $i_1:=\arg\max_i\|\hat f^{(i)}\|_{\mathcal H}$ be the index selected by SPA in the
first iteration. Then
\begin{equation}
\label{eq:step1-max}
\|\hat f^{(i_1)}\|_{\mathcal H}
\;\ge\;
\|\hat f^{(i^\ast)}\|_{\mathcal H}
\;\ge\;
d_{\max}-\beta.
\end{equation}
Combining Equations~\eqref{eq:step1-gap}~and~\eqref{eq:step1-max}, we conclude that
$\hat f^{(i_1)} \notin \mathcal S\setminus\mathcal V(\varepsilon_0,h_0)$. Equivalently,
\begin{equation}
\label{eq:step1-inclusion}
\hat f^{(i_1)} \in \mathcal V(\varepsilon_0,h_0)
\quad\text{or}\quad
\hat f^{(i_1)} \notin\mathcal S.
\end{equation}

Suppose first that $\hat f^{(i_1)} \in\mathcal V(\varepsilon_0,h_0)$. Since the sets
$\{\mathcal V_k(\varepsilon_0)\}_{k\in\mathcal K(h_0)}$ are disjoint, there exists
a unique $k_1\in\mathcal K(h_0)$ such that
$\hat f^{(i_1)} \in\mathcal V_{k_1}(\varepsilon_0)$. By Lemma~\ref{lem:S1},
\begin{equation}
\label{eq:step1-case1}
\|\hat f^{(i_1)} -g_{k_1}\|_{\mathcal H}
\;\le\;
2\gamma\,\varepsilon_0.
\end{equation}
Now suppose $\hat f^{(i_1)}\notin\mathcal S$. Let $\mathrm{proj}_{\mathcal S}(\hat f^{(i_1)})$
denote the closest point to $\hat f^{(i_1)}$ in $\mathcal S$. By Equation~\eqref{eq:step1-beta},
\[
\|\hat f^{(i_1)}-\mathrm{proj}_{\mathcal S}(\hat f^{(i_1)})\|_{\mathcal H}\le \beta.
\]
Combining this with Equation~\eqref{eq:step1-max} yields
\[
\|\mathrm{proj}_{\mathcal S}(\hat f^{(i_1)})\|_{\mathcal H}
\;\ge\;
\|\hat f^{(i_1)}\|_{\mathcal H}-\beta
\;\ge\;
d_{\max}-2\beta.
\]
Together with Equation~\eqref{eq:step1-gap}, this implies that
$\mathrm{proj}_{\mathcal S}(\hat f^{(i_1)})\in\mathcal V(\varepsilon_0,h_0)$.
Therefore, there exists a unique $k_1\in\mathcal K(h_0)$ such that
$\mathrm{proj}_{\mathcal S}(\hat f^{(i_1)})\in\mathcal V_{k_1}(\varepsilon_0)$.
By Lemma~\ref{lem:S1} again,
\begin{equation}
\label{eq:step1-case2}
\|\mathrm{proj}_{\mathcal S}(\hat f^{(i_1)})-g_{k_1}\|_{\mathcal H}
\;\le\;
2\gamma\,\varepsilon_0.
\end{equation}
Combining this with the triangle inequality gives
\[
\|\hat f^{(i_1)}-g_{k_1}\|_{\mathcal H}
\;\le\;
\|\hat f^{(i_1)}-\mathrm{proj}_{\mathcal S}(\hat f^{(i_1)})\|_{\mathcal H}
+
\|\mathrm{proj}_{\mathcal S}(\hat f^{(i_1)})-g_{k_1}\|_{\mathcal H}
\;\le\;
\beta+2\gamma\,\varepsilon_0.
\]

Putting the two cases together and substituting the value of $\varepsilon_0$
from Equation~\eqref{eq:step1-eps}, we conclude that
\begin{equation}
\label{eq:step1-final}
\|\hat f^{(i_1)}-g_{k_1}\|_{\mathcal H}
\;\le\;
\beta+2\gamma\,\varepsilon_0
\;\le\;
\Big(1+\frac{30\gamma}{\sigma_\ast}\max\{1,d_{\max}/\sigma_\ast\}\Big)\beta,
\end{equation}
for some $k_1\in\{1,\dots,K\}$. This shows that the function selected in the
first iteration of SPA lies in the neighborhood of a true basis function.

\paragraph{Step 2: Analysis of the general iteration.}

We proceed by induction. Suppose that after $(s-1)$ iterations where $s \in \{2, \ldots, K\}$, SPA has
selected indices $i_1,\dots,i_{s-1}$ such that
\[
\|\hat f^{(i_r)}-g_{k_r}\|_{\mathcal H}
\;\le\;
\Big(1+\frac{30\gamma}{\sigma_\ast}\max\{1,d_{\max}/\sigma_\ast\}\Big)\beta,
\qquad r=1,\dots,s-1,
\]
for distinct indices $k_1,\dots,k_{s-1}$.
Let $\mathcal H_{s-1}:=\mathrm{span}\{\hat f^{(i_1)},\dots,\hat f^{(i_{s-1})}\}$
and let $P_{s-1}$ be the orthogonal projection onto $\mathcal H_{s-1}$.
Define residuals
\[
\hat f^{(i)}_{(s)} := (I-P_{s-1})\hat f^{(i)},\qquad
g_{(s),k} := (I-P_{s-1})g_k.
\]
Let $\mathcal S_{(s)}:=\mathrm{conv}\{g_{(s),1},\dots,g_{(s),K}\}\subset\mathcal H$ and define the
$\varepsilon$-simplicial neighborhoods inside $\mathcal S_{(s)}$ by
\[
\mathcal V_{(s),k}(\varepsilon)
:=
\Big\{
\sum_{\ell=1}^K \pi_\ell g_{(s),\ell}:\ \pi\in\Delta_{K-1},\ \pi_k\ge 1-\varepsilon
\Big\},
\qquad 1\le k\le K.
\]
Let
\[
d_{\max}^{(s)} := \max_{\hat f\in \mathcal S_{(s)}} \|\hat f\|_{\mathcal H}
= \max_{1\le k\le K}\|g_{(s),k}\|_{\mathcal H},
\qquad
\mathcal K^{\ast}_{(s)} := \Big\{k:\ \|g_{(s),k}\|_{\mathcal H}=d_{\max}^{(s)}\Big\},
\]
and, for $h_0>0$,
\[
\mathcal K_{(s)}(h_0) := \Big\{k:\ \|g_{(s),k}\|_{\mathcal H}\ge d_{\max}^{(s)}-h_0\Big\},
\qquad
\mathcal V_{(s)}(\varepsilon_0,h_0):=\bigcup_{k\in \mathcal K_{(s)}(h_0)}\mathcal V_{(s),k}(\varepsilon_0).
\]
Since $I-P_{s-1}$ is an orthogonal projector and hence a contraction, $\|(I-P_{s-1})h_1-(I-P_{s-1})h_2\|_{\mathcal H}\le \|h_1-h_2\|_{\mathcal H}$ for all $h_1, h_2\in\mathcal H$.
Therefore, with $\beta=\beta(\hat f,G)$,
\begin{equation}
\label{eq:step2-beta-proj}
\max_{1\le i\le m}\mathrm{Dist}_{\mathcal H}(\hat f^{(i)}_{(s)},\mathcal S_{(s)})\le \beta,
\qquad
\max_{1\le k\le K}\min_{1 \le i \le m}\|\hat f^{(i)}_{(s)}-g_{(s),k}\|_{\mathcal H}\le \beta.
\end{equation}
Additionally, we have the following lemma. 
\begin{lemma}
\label{lem:S6}
Under the induction hypothesis and the assumptions of Theorem~\ref{thm:spa_pure}, we have
\begin{equation}
\label{eq:step2-proj-geom}
d_{\max}^{(s)} \ge \sigma_\ast/2,
\qquad
\min_{\substack{k\neq \ell\\ k,\ell\notin\{k_1,\dots,k_{s-1}\}}}
\|g_{(s),k}-g_{(s),\ell}\|_{\mathcal H} \ge \sqrt{2}\,\sigma_\ast,
\qquad
k_r\notin \mathcal K_{(s)}(h_0)\ \text{ for } r=1,\dots,s-1.
\end{equation}
\end{lemma}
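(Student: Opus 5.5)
The plan is to prove the three claims of Lemma~\ref{lem:S6} in the order: third claim, first claim, second claim. The third claim is the main geometric fact, and the other two reduce to Lemmas~\ref{lem:S1}, \ref{lem:S3} and \ref{lem:S4}. Throughout, write $\epsilon_r := \hat f^{(i_r)} - g_{k_r}$. By the induction hypothesis, $\|\epsilon_r\|_{\mathcal H}\le C\beta$, where $C := 1+30\gamma\sigma_\ast^{-1}\max\{1,d_{\max}/\sigma_\ast\}$. Condition~\eqref{eq:thmS1.1} makes $C\beta$ small relative to $\sigma_\ast$; one checks $C\beta \le \sigma_\ast/10$, using $\gamma\le d_{\max}$.

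\textbf{Third claim.} Since $\hat f^{(i_r)}\in\mathcal H_{s-1}$, we have $(I-P_{s-1})\hat f^{(i_r)}=0$. Hence
\[
\|g_{(s),k_r}\|_{\mathcal H}=\|(I-P_{s-1})\epsilon_r\|_{\mathcal H}\le C\beta .
\]
It then suffices to show $d^{(s)}_{\max}-C\beta>h_0=\sigma_\ast/3$, that is, $d^{(s)}_{\max}\ge\sigma_\ast/2$ together with $C\beta<\sigma_\ast/6$. So this claim follows from the first claim and the smallness of $C\beta$.

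\textbf{First claim.} The approach is to compare with the exact projector $P^\ast$ onto $\mathrm{span}\{g_{k_1},\dots,g_{k_{s-1}}\}$, whose range has dimension at most $s-1\le K-2$.
\begin{itemize}
\item Lemma~\ref{lem:S3}, applied to the centered or uncentered Gram matrix as appropriate, gives $\lambda_{K-s}(\Gamma_{P^\ast})\ge \lambda_{K-1}(\Gamma)$. Because $\Gamma-\tilde\Gamma$ is positive semidefinite on the relevant subspace (it is a rank-one adjustment by $\bar g$), I expect $\lambda_{K-1}(\Gamma)\ge\lambda_{K-1}(\tilde\Gamma)=\sigma_\ast^2$ via interlacing. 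If this step is delicate, I would instead work with $\tilde\Gamma$ directly, since differences only see centered functions.
\item Next, perturb from $P^\ast$ to $P_{s-1}$. The spanning vectors differ by the $\epsilon_r$, and the matrix $[g_{k_1},\dots,g_{k_{s-1}}]$ has smallest singular value of order $\sigma_\ast$ by Lemma~\ref{lem:S1}. A Davis--Kahan/Wedin-type bound then gives $\|P_{s-1}-P^\ast\|_{op}\lesssim C\beta/\sigma_\ast$. Consequently, each $\|(I-P_{s-1})g_k - (I-P^\ast)g_k\|_{\mathcal H}\lesssim d_{\max}C\beta/\sigma_\ast$.
\item The projected functions $(I-P_{s-1})g_{k_r}$ have norm at most $C\beta=:\delta$. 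Applying Lemma~\ref{lem:S4} with $s-1$ small vertices, after checking $\lambda_{K-s}\ge 2(K-2)\delta^2$ (guaranteed by the constant $450$), gives $d^{(s)}_{\max}\ge \tfrac12\sqrt{\lambda_{K-s}(\Gamma_{P_{s-1}})}$.
\item Weyl's inequality transfers the eigenvalue bound from $\Gamma_{P^\ast}$ to $\Gamma_{P_{s-1}}$ at a loss of order $d_{\max}^2C\beta/\sigma_\ast$, which is absorbed by~\eqref{eq:thmS1.1}. I expect to need a slightly weaker constant, such as $\sigma_\ast/2$ after the loss, and would adjust by carrying $\sqrt{\lambda}$ with margin.
\end{itemize}

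\textbf{Second claim.} For $k,\ell\notin\{k_1,\dots,k_{s-1}\}$, write $(I-P_{s-1})(g_k-g_\ell) = (I-P_{s-1})G(e_k-e_\ell)$. Work with $\tilde G$, since $e_k-e_\ell$ sums to zero, and note that $e_k-e_\ell$ shares the $s-1$ zero entries at positions $k_r$ with zero. Combining Lemma~\ref{lem:S1}'s argument (restricted to vectors vanishing on $\{k_r\}$) with Lemma~\ref{lem:S3} gives
\[
\|(I-P^\ast)(g_k-g_\ell)\|_{\mathcal H}\ge\sqrt{\lambda_{K-1}(\tilde\Gamma)}\sqrt2 .
\]
Indeed, $(I-P^\ast)$ restricted to $\mathrm{span}\{g_j:j\notin\{k_r\}\}$ modulo the span of the $g_{k_r}$ is just the Schur complement, and its smallest eigenvalue dominates that of the full Gram matrix. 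Transferring to $P_{s-1}$ costs $O(d_{\max}C\beta/\sigma_\ast)$, which yields $\sqrt2\sigma_\ast$ up to a small loss.

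\textbf{Main obstacle.} The hardest part is the perturbation from $P^\ast$ to $P_{s-1}$ while keeping the clean constants $\sigma_\ast/2$ and $\sqrt2\sigma_\ast$. The exact constants may only hold with a slack that must be tracked back into~\eqref{eq:thmS1.1}. My first attempt would follow \citet{jin2024improved}'s finite-dimensional argument verbatim: everything lives in the finite-dimensional span of $g_1,\dots,g_K,\hat f^{(i_1)},\dots,\hat f^{(i_{s-1})}$, so their matrix inequalities apply isometrically.
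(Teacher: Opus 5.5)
\textbf{Comparison with the paper's route.} You compare $P_{s-1}$ with the exact projector $P^\ast$, but the paper never does this. Lemma~\ref{lem:S3} holds for an \emph{arbitrary} orthogonal projection of the given rank, so the paper applies it directly to $\widetilde G$ with $P=P_{s-1}$. This gives $\lambda_{K-s}(\widetilde\Gamma_{(s)})\ge\sigma_\ast^2$ with no perturbation loss. The first claim then follows from Lemma~\ref{lem:S4}, using $\delta=C\beta$ and $\lambda_j(\Gamma_{(s)})\ge\lambda_j(\widetilde\Gamma_{(s)})$, and the third claim is proved exactly as you argue it. So the ``main obstacle'' you name is self-inflicted.

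Your justification for the detour is also wrong. Wedin/Davis--Kahan needs $\sigma_{\min}([g_{k_1},\dots,g_{k_{s-1}}])\gtrsim\sigma_\ast$, and Lemma~\ref{lem:S1} does not give this. It only bounds $\|Ga\|_{\mathcal H}$ for $\mathbf 1^\top a=0$, whereas singular values of the uncentered vertex matrix involve arbitrary coefficient vectors. Assumption~\ref{cond:low_rank} is only affine independence, and it even allows linearly dependent vertices (e.g.\ $g_2=-g_1/2$). Finally, $s-1\le K-2$ fails at $s=K$, so neither your use of Lemma~\ref{lem:S3} nor the paper's covers the last iteration.

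\textbf{The second claim.} Here the gap is more basic. Your Schur-complement computation bounds $\|(I-P^\ast)(g_k-g_\ell)\|_{\mathcal H}^2$ below by $2\lambda_K(\Gamma)$, the smallest eigenvalue of the \emph{uncentered} Gram matrix. Since $\widetilde\Gamma$ is the compression of $\Gamma$ to $\mathbf 1^\perp$, interlacing gives $\lambda_K(\Gamma)\le\lambda_{K-1}(\widetilde\Gamma)=\sigma_\ast^2$, which is the wrong direction.

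No bookkeeping fixes this, because the claim itself fails under the stated hypotheses. Take orthonormal $u,v\in L^2(\mu)$ and set $g_1=2u$, $g_2=-u$, $g_3=v$, so $\sigma_\ast^2\approx 0.64$. Use noise-free data containing the three pure studies, so $\beta=0$ and condition~\eqref{eq:thmS1.1} holds.
\begin{itemize}
\item SPA picks $g_1$ first. Then $g_{(2),2}=0$ and $g_{(2),3}=v$, so $\|g_{(2),2}-g_{(2),3}\|_{\mathcal H}=1<\sqrt2\,\sigma_\ast\approx1.13$.
\item SPA then picks $g_3$, after which every residual vanishes. Hence $d^{(3)}_{\max}=0<\sigma_\ast/2$, so the first claim also fails at $s=K$.
\end{itemize}

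The paper reaches $\sqrt2\,\sigma_\ast$ only through \eqref{eq:lemmaS1.2}. Its proof reverses the interlacing inequality in the final ``in particular'' step. Indeed \eqref{eq:lemmaS1.2} itself fails for $g_1=0$, $g_2=\epsilon u$, $g_3=v$ with $\epsilon$ small.

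What your computation genuinely yields is separation of order $\sqrt{\lambda_K(\Gamma)}$. That requires the $g_k$ to be linearly independent, as in the Gillis--Vavasis analysis of SPA. Under that stronger non-degeneracy, and with $\sigma_\ast$ replaced by $\sqrt{\lambda_K(\Gamma)}$ in the constants, your Schur-complement-plus-perturbation route is a sensible way to repair the lemma.
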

We then apply Lemma~\ref{lem:S5} to the projected simplex. Similarly, as in Step~1, choose
\[
h_0=\sigma_\ast/3,
\qquad
\varepsilon_1=15\max\{\sigma_\ast^{-1},\ \sigma_\ast^{-2}d_{\max}^{(s)}\}\beta,
\qquad
t=7\beta/3.
\]
Note that $\varepsilon_1 \leq \varepsilon_0$ as $d_{\max}^{(s)} \leq d_{\max}$. We get
\begin{equation}
\label{eq:step2-gap-proj}
\max_{\hat f\in \mathcal S_{(s)}\setminus \mathcal V_{(s)}(\varepsilon_0,h_0)}
\|\hat f\|_{\mathcal H}
\;\le\;
\max_{\hat f\in \mathcal S_{(s)}\setminus \mathcal V_{(s)}(\varepsilon_1,h_0)}
\|\hat f\|_{\mathcal H}
\;\le\;
d_{\max}^{(s)}-7\beta/3.
\end{equation}
Let $i_s:=\arg\max_i \|\hat f^{(i)}_{(s)}\|_{\mathcal H}$ be the index selected by SPA in the $s$-th
iteration. By Equation~\eqref{eq:step2-beta-proj}, for any $k\in\mathcal K_{(s)}^{\ast}$ there exists
$i^\ast$ such that $\|\hat f^{(i^\ast)}_{(s)}-g_{(s),k}\|_{\mathcal H}\le \beta$, and hence
\[
\|\hat f^{(i^\ast)}_{(s)}\|_{\mathcal H}
\;\ge\;
\|g_{(s),k}\|_{\mathcal H}-\beta
\;=\;
d_{\max}^{(s)}-\beta.
\]
Since $\|\hat f^{(i_s)}_{(s)}\|_{\mathcal H}=\max_i\|\hat f^{(i)}_{(s)}\|_{\mathcal H}$, we obtain
\begin{equation}
\label{eq:step2-max-proj}
\|\hat f^{(i_s)}_{(s)}\|_{\mathcal H}\ge d_{\max}^{(s)}-\beta.
\end{equation}
Combining Equations~\eqref{eq:step2-gap-proj}~and~\eqref{eq:step2-max-proj}, we conclude that
$\hat f^{(i_s)}_{(s)}\notin \mathcal S_{(s)}\setminus \mathcal V_{(s)}(\varepsilon_0,h_0)$; namely,
\begin{equation}
\label{eq:step2-inclusion-proj}
\hat f^{(i_s)}_{(s)}\in \mathcal V_{(s)}(\varepsilon_0,h_0)
\quad\text{or}\quad
\hat f^{(i_s)}_{(s)}\notin \mathcal S_{(s)}.
\end{equation}

In both cases, we route through the projection onto $\mathcal S$.
By Equation~\eqref{eq:step2-beta-proj}, $\mathrm{Dist}_{\mathcal H}(\hat f^{(i_s)},\mathcal S)\le \beta$, so
\[
\|\hat f^{(i_s)}-\mathrm{proj}_{\mathcal S}(\hat f^{(i_s)})\|_{\mathcal H}\le \beta.
\]
First, consider $\hat f^{(i_s)}_{(s)}\in \mathcal V_{(s)}(\varepsilon_0,h_0)$. Since $\{\mathcal V_{(s),k}(\varepsilon_0)\}_{k\in\mathcal K_{(s)}(h_0)}$ are disjoint, there exists a unique $k_s\in\mathcal K_{(s)}(h_0)$ such that
$\hat f^{(i_s)}_{(s)}\in \mathcal V_{(s),k_s}(\varepsilon_0)$, meaning $\hat f^{(i_s)}_{(s)} = \sum_{\ell=1}^K \pi_\ell g_{(s),\ell}$ with $\pi_{k_s}\ge 1-\varepsilon_0$. Since $(I-P_{s-1})$ does not change the weights, $\mathrm{proj}_{\mathcal S}(\hat f^{(i_s)})$ lies in $\mathcal V_{k_s}(\varepsilon_0)$. By Lemma~\ref{lem:S1},
\begin{equation}
\label{eq:step2-case1-proj}
\|\mathrm{proj}_{\mathcal S}(\hat f^{(i_s)})-g_{k_s}\|_{\mathcal H}\le 2\gamma\,\varepsilon_0.
\end{equation}
Now consider $\hat f^{(i_s)}_{(s)}\notin \mathcal S_{(s)}$.
Let $\mathrm{proj}_{\mathcal S_{(s)}}(\hat f^{(i_s)}_{(s)})$ denote the closest point to
$\hat f^{(i_s)}_{(s)}$ in $\mathcal S_{(s)}$.
By Equation~\eqref{eq:step2-beta-proj},
\[
\|\hat f^{(i_s)}_{(s)}-\mathrm{proj}_{\mathcal S_{(s)}}(\hat f^{(i_s)}_{(s)})\|_{\mathcal H}\le \beta.
\]
Combining this with Equation~\eqref{eq:step2-max-proj} yields
\[
\|\mathrm{proj}_{\mathcal S_{(s)}}(\hat f^{(i_s)}_{(s)})\|_{\mathcal H}
\ge
\|\hat f^{(i_s)}_{(s)}\|_{\mathcal H}-\beta
\ge
d_{\max}^{(s)}-2\beta.
\]
Together with Equation~\eqref{eq:step2-gap-proj}, this implies
$\mathrm{proj}_{\mathcal S_{(s)}}(\hat f^{(i_s)}_{(s)})\in \mathcal V_{(s)}(\varepsilon_0,h_0)$.
Therefore, there exists a unique $k_s\in\mathcal K_{(s)}(h_0)$ such that
$\mathrm{proj}_{\mathcal S_{(s)}}(\hat f^{(i_s)}_{(s)})\in \mathcal V_{(s),k_s}(\varepsilon_0)$,
which implies $\mathrm{proj}_{\mathcal S}(\hat f^{(i_s)})\in \mathcal V_{k_s}(\varepsilon_0)$.
By Lemma~\ref{lem:S1},
\begin{equation}
\label{eq:step2-case2-proj}
\|\mathrm{proj}_{\mathcal S}(\hat f^{(i_s)})-g_{k_s}\|_{\mathcal H}\le 2\gamma\,\varepsilon_0.
\end{equation}
In both cases, combining with the triangle inequality gives
\[
\|\hat f^{(i_s)}-g_{k_s}\|_{\mathcal H}
\le
\|\hat f^{(i_s)}-\mathrm{proj}_{\mathcal S}(\hat f^{(i_s)})\|_{\mathcal H}
+
\|\mathrm{proj}_{\mathcal S}(\hat f^{(i_s)})-g_{k_s}\|_{\mathcal H}
\le
\beta+2\gamma\,\varepsilon_0.
\]
Therefore,
\begin{equation}
\label{eq:step2-final-proj}
\|\hat f^{(i_s)}-g_{k_s}\|_{\mathcal H}
\le
\beta+2\gamma\,\varepsilon_0
\le
\Big(1+\frac{30\gamma}{\sigma_\ast}\max\{1,d_{\max}/\sigma_\ast\}\Big)\beta.
\end{equation}
In particular, $k_s\notin\{k_1,\dots,k_{s-1}\}$ by Equation~\eqref{eq:step2-proj-geom}.
This completes the induction.
\qed

\subsubsection{Proof of Theorem~\ref{thm:bcons}}
\label{sec:denoising_proof}

Recall from Assumption~\ref{cond:purity} the near-pure count $M_k(\delta_m) := |\{1\le i\le m:\, \pi_{ik}\ge 1-\delta_m\}|$ and that $\min_{1\le k\le K} M_k(\delta_m)\ge 1$.
We specify the denoising parameters $(N,\Delta)$ as follows:
\begin{eqnarray}
\label{eq:denoise_params}
\Delta = C\big(\delta_m\,\gamma(G) + n_{\min}^{(a-r)/2}\big), \qquad
N = \lfloor c\, \min_{1\le k\le K} M_k(\delta_m) \rfloor,
\end{eqnarray}
for a sufficiently large constant $C\ge 5$ and a constant $0 < c \le 1$.
The neighborhood radius $\Delta$ is proportional to the sum of the purity error and the estimation error, and vanishes under Assumptions~\ref{cond:est_error}~and~\ref{cond:purity}. The pruning threshold $N$ is set as a fraction of the smallest near-pure count across all vertices, ensuring that near-vertex studies are not pruned.
When $\min_k M_k(\delta_m) = 1$ (only one near-pure study per vertex), we have $N=1$, so no function is pruned and only the averaging step is applied.
When $\min_k M_k(\delta_m) \gg 1$ (many near-pure studies per vertex), the pruning threshold is larger and the denoising step can remove outlying functions.

Let $\hat f^{*,(1)},\ldots,\hat f^{*,(m')}$ denote the denoised functions output by the denoising step of Algorithm~\ref{alg:spa}, where $m'\le m$ is the number of retained functions. Write $\gamma:=\gamma(G)$, $d_{\max}:=d_{\max}(G)$, and $\varepsilon_n := \max_{1\le i\le m}\|\epsilon^{(i)}\|_{\mathcal H}$. By Lemma~\ref{lem:beta_control}, $\varepsilon_n = o_P(1)$.
The proof shows that $\beta(\hat f^{*},G)=o_P(1)$ after the denoising step, verifies the condition of Theorem~\ref{thm:spa_pure}, and then applies the finite-sample error bound.

\paragraph{Step 1: Distance to $\mathcal S$ is preserved after averaging.}
For each retained function $\hat f^{*,(i)}$, let $\mathcal N_i:=\{j:\|\hat f^{(j)}-\hat f^{(i)}\|_{\mathcal H}\le \Delta\}$ be its $\Delta$-neighborhood. By construction, $\hat f^{*,(i)} = |\mathcal N_i|^{-1}\sum_{j\in\mathcal N_i}\hat f^{(j)}$.
For each $j$, since $f^{(j)}\in\mathcal S$, we have $\mathrm{Dist}_{\mathcal H}(\hat f^{(j)},\mathcal S)\le \|\epsilon^{(j)}\|_{\mathcal H}$.
Let $s_j:=\mathrm{proj}_{\mathcal S}(\hat f^{(j)})$ for each $j\in\mathcal N_i$. Since $\mathcal S$ is convex, $\bar s := |\mathcal N_i|^{-1}\sum_{j\in\mathcal N_i}s_j\in\mathcal S$.
Therefore,
\[
\mathrm{Dist}_{\mathcal H}(\hat f^{*,(i)},\mathcal S)
\;\le\;
\|\hat f^{*,(i)}-\bar s\|_{\mathcal H}
\;\le\;
\frac{1}{|\mathcal N_i|}\sum_{j\in\mathcal N_i}\|\hat f^{(j)}-s_j\|_{\mathcal H}
\;\le\;
\varepsilon_n.
\]

\paragraph{Step 2: Near-vertex studies survive pruning.}
Fix $k\in\{1,\ldots,K\}$ and let $I_k:=\{i:\pi_{ik}\ge 1-\delta_m\}$, so $|I_k| = M_k(\delta_m) \ge N/c \ge N$.
For any $i,j\in I_k$, since $\pi_{ik},\pi_{jk}\ge 1-\delta_m$, we have $\sum_{\ell\neq k}\pi_{i\ell}\le\delta_m$ and $\sum_{\ell\neq k}\pi_{j\ell}\le\delta_m$. Hence $\|\bpi_i-\bpi_j\|_1\le 4\delta_m$, and by Lemma~\ref{lem:S1},
\[
\|f^{(i)}-f^{(j)}\|_{\mathcal H}
\;=\;
\|G\bpi_i - G\bpi_j\|_{\mathcal H}
\;\le\;
\gamma\,\|\bpi_i-\bpi_j\|_1
\;\le\;
4\,\gamma\,\delta_m.
\]
By the triangle inequality,
$\|\hat f^{(i)}-\hat f^{(j)}\|_{\mathcal H}
\le
4\,\gamma\,\delta_m + 2\varepsilon_n$.
With the choice of $\Delta$ in Equation~\eqref{eq:denoise_params}, $4\gamma\delta_m + 2\varepsilon_n \le \Delta$ with probability tending to one.
Therefore, all studies in $I_k$ are mutual $\Delta$-neighbors, and each has $|\mathcal N_i|\ge |I_k| \ge N$, so none is pruned.

\paragraph{Step 3: Denoised near-vertex functions are close to vertices.}
Let $i_k\in I_k$ be the index achieving $\min_{i}(1-\pi_{ik})$. Since $i_k$ is retained (by Step~2) and $\|\hat f^{(j)}-\hat f^{(i_k)}\|_{\mathcal H}\le \Delta$ for all $j\in\mathcal N_{i_k}$, we have for each such $j$,
\[
\|\hat f^{(j)}-g_k\|_{\mathcal H}
\;\le\;
\|\hat f^{(j)}-\hat f^{(i_k)}\|_{\mathcal H}+\|\hat f^{(i_k)}-g_k\|_{\mathcal H}
\;\le\;
\Delta + 2\,\delta_m\, d_{\max} + \varepsilon_n.
\]
Since $\hat f^{*,(i_k)}$ is the average over $\mathcal N_{i_k}$, the triangle inequality yields
\[
\|\hat f^{*,(i_k)}-g_k\|_{\mathcal H}
\;\le\;
\Delta + 2\,\delta_m\, d_{\max} + \varepsilon_n.
\]

\paragraph{Step 4: Controlling $\beta$ and applying Theorem~\ref{thm:spa_pure}.}
Steps~1 and~3 give
\[
\beta(\hat f^{*},G)
\;\le\;
\Delta + 2\,\delta_m\,d_{\max} + \varepsilon_n
\;=\;
O_P\!\big(\delta_m\,\gamma + n_{\min}^{(a-r)/2} + \delta_m\,d_{\max}\big)
\;=\;
o_P(1),
\]
where the last equality uses Assumptions~\ref{cond:est_error} and~\ref{cond:purity}.
By the non-degeneracy condition in Assumption~\ref{cond:low_rank}, $\sigma_\ast>0$. Since the basis functions $\{g_k\}_{k=1}^K$ are fixed, $\sigma_\ast$, $d_{\max}$, and $\gamma$ are positive constants that do not depend on $m$ or $n_i$. Therefore, the condition \eqref{eq:thmS1.1} of Theorem~\ref{thm:spa_pure} is satisfied with probability tending to one. On this event, Theorem~\ref{thm:spa_pure} applied to the denoised functions $\hat f^{*,(1)},\ldots,\hat f^{*,(m')}$ yields, up to permutation,
\[
\max_{1\le k\le K}\|\hat g_k-g_k\|_{\mathcal H}
\;\le\;
\underbrace{\Bigg(1+\frac{30\,\gamma}{\sigma_\ast}\max\Big\{1,\,\frac{d_{\max}}{\sigma_\ast}\Big\}\Bigg)}_{=:\,C_0}
\beta(\hat f^{*},G)
\;=\;
o_P(1),
\]
since $C_0$ is a fixed constant depending only on the basis functions.

It remains to extend the conclusion from the case $\khat = K$ to general $\khat$. Since d-fSPA is a greedy iterative algorithm, the first $j$ selections depend only on $j$ and not on the total number of iterations. Therefore, running SPA for $\khat$ iterations produces the same first $\min(K, \khat)$ selections as running for $K$ iterations. Combined with the bound for the $K$-iteration case, this yields
\[
\max_{1\le k\le \min(K,\khat)}\|\hat g_k-g_k\|_{\mathcal H} \;\le\; C_0\,\beta(\hat f^{*},G) \;=\; o_P(1).
\]
\qed

\subsubsection{Proof of Lemmas}
\begin{proof}
[Proof of Lemma~\ref{lem:S1}] 
Fix $\pi,\tilde\pi\in\Delta_{K-1}$ and set $a := \pi-\tilde\pi\in\mathbb R^K$. Since
$\mathbf 1^\top a = \sum_{k=1}^K(\pi_k-\tilde\pi_k)=0$, we have
\[
Ga \;=\; \sum_{k=1}^K a_k g_k \;=\; \sum_{k=1}^K a_k \tilde g_k \;=\; \tilde G a.
\]
Therefore,
\[
\|G\pi-G\tilde\pi\|_{\mathcal H}^2
\;=\;\|\tilde G a\|_{\mathcal H}^2
\;=\;\langle \tilde G a,\tilde G a\rangle
\;=\;\langle a,\tilde G^\ast \tilde G a\rangle
\;=\; a^\top \tilde\Gamma\, a.
\]
Since $\tilde\Gamma$ is symmetric positive semidefinite and $\mathrm{rank}(\tilde\Gamma)\le K-1$, the Rayleigh quotient bound yields
\[
a^\top \tilde\Gamma\, a
\;\ge\;
\lambda_{K-1}(\tilde\Gamma)\,\|a\|_2^2,
\]
where $\lambda_{K-1}(\tilde\Gamma)$ denotes the smallest nonzero eigenvalue (equivalently, the
$(K-1)$-th largest eigenvalue) of $\tilde\Gamma$. Taking square roots gives the lower bound.

For the upper bound in Equation~\eqref{eq:lemmaS1.1}, fix any $g_0\in\mathcal S$. Since $\sum_{k=1}^K a_k = 0$, we have $\sum_{k=1}^K a_k g_k = \sum_{k=1}^K a_k (g_k - g_0)$. By the triangle inequality,
\[
\|G\pi-G\tilde\pi\|_{\mathcal H}
=\biggl\|\sum_{k=1}^K a_k (g_k - g_0)\biggr\|_{\mathcal H}
\le \sum_{k=1}^K |a_k|\,\|g_k - g_0\|_{\mathcal H}
\le \max_{1\le k\le K}\|g_k - g_0\|_{\mathcal H} \cdot \|\pi-\tilde\pi\|_1.
\]
Minimizing the right-hand side over $g_0\in\mathcal S$ yields $\gamma(G)\,\|\pi-\tilde\pi\|_1$.

We now prove Equation~\eqref{eq:lemmaS1.2}. Fix $1\le s\le K-2$ and suppose $\pi$ and $\tilde\pi$ share at least
$s$ common entries. Let $I\subset\{1,\dots,K\}$ be a set of indices with $|I|=s$ such that
$\pi_k=\tilde\pi_k$ for all $k\in I$, and let $J:=I^c$ so $|J|=K-s$. Then $a_I=0$ and hence
\[
\|G\pi-G\tilde\pi\|_{\mathcal H}^2
= a^\top \tilde\Gamma a
= a_J^\top \tilde\Gamma_{JJ}\, a_J,
\]
where $\tilde\Gamma_{JJ}$ is the principal submatrix of $\tilde\Gamma$ indexed by $J$.
Since $\mathbf 1^\top a = 0$ and $a_I = 0$, we have $\mathbf 1_J^\top a_J = 0$, so $a_J$ lies in the $(K-s-1)$-dimensional subspace $\{v\in\mathbb R^{K-s}: \mathbf 1_{J}^\top v=0\}$. By the Rayleigh quotient bound restricted to this subspace,
\[
a_J^\top \tilde\Gamma_{JJ}\, a_J
\;\ge\;
\lambda_{K-s-1}(\tilde\Gamma_{JJ})\,\|a_J\|_2^2,
\]
where $\lambda_{K-s-1}(\tilde\Gamma_{JJ})$ denotes the $(K-s-1)$-th largest eigenvalue of the $(K-s)\times(K-s)$ matrix $\tilde\Gamma_{JJ}$.
By the Cauchy interlacing theorem for eigenvalues of a symmetric matrix and its principal submatrices, $\lambda_{K-s-1}(\tilde\Gamma_{JJ}) \ge \lambda_{K-1}(\tilde\Gamma) = \sigma_\ast^2$.
In particular, $\lambda_{K-s-1}(\tilde\Gamma_{JJ}) \ge \lambda_{K-1-s}(\tilde\Gamma)$.
This, together with $\|a_J\|_2=\|a\|_2=\|\pi-\tilde\pi\|_2$, yields
\[
\|G\pi-G\tilde\pi\|_{\mathcal H}
\;\ge\;
\sqrt{\lambda_{K-1-s}(\tilde\Gamma)}\,\|\pi-\tilde\pi\|_2.
\]
\end{proof}

\begin{proof}[Proof of Lemma~\ref{lem:S2}]
Write
\[
f \;:=\; \sum_{k=1}^K \pi_k g_k \in \mathcal H,
\qquad
L \;:=\; \sum_{k=1}^K \pi_k \|g_k\|_{\mathcal H}.
\]
By the triangle inequality, $\| f\|_{\mathcal H}\le L$. We aim to lower bound
$L-\|f\|_{\mathcal H}$.

First expand $\|f\|_{\mathcal H}^2$ using bilinearity of the inner product:
\begin{equation}
\label{eq:lem2-expand}
\|f\|_{\mathcal H}^2
= \Big\langle \sum_{i=1}^K \pi_i g_i,\ \sum_{j=1}^K \pi_j g_j \Big\rangle
= \sum_{i=1}^K \pi_i^2 \|g_i\|_{\mathcal H}^2 +
\sum_{i\neq j} \pi_i\pi_j \langle g_i,g _j\rangle.
\end{equation}
Next, we use the following identity, valid in any inner product space: for any $u,v\in\mathcal H$,
\begin{equation}
\label{eq:polar-identity}
2\langle u,v\rangle
=
2\|u\|_{\mathcal H}\|v\|_{\mathcal H}
+
(\|u\|_{\mathcal H}-\|v\|_{\mathcal H})^2
-
\|u-v\|_{\mathcal H}^2.
\end{equation}
Applying Equation~\eqref{eq:polar-identity} with $u=g_i$ and $v=g_j$, and using the definitions of
$a$ and $b$, we have for all $i\neq j$,
\[
(\| g_i\|_{\mathcal H}-\|g_j\|_{\mathcal H})^2 \le b^2,
\qquad
\|g_i-g_j\|_{\mathcal H}^2 \ge a^2,
\]
hence
\begin{equation}
\label{eq:cross-bound}
\langle g_i,g_j\rangle
\;\le\;
\|g_i\|_{\mathcal H}\|g_j\|_{\mathcal H} - \frac{a^2-b^2}{2}.
\end{equation}
Plugging Equation~\eqref{eq:cross-bound} into Equation~\eqref{eq:lem2-expand} yields
\begin{align}
\|f\|_{\mathcal H}^2
&\le
\sum_{i=1}^K \pi_i^2 \|g_i\|_{\mathcal H}^2
+
\sum_{i\neq j} \pi_i\pi_j
\Big(\|g_i\|_{\mathcal H}\|g_j\|_{\mathcal H} - \frac{a^2-b^2}{2}\Big) \notag\\
&=
\Big(\sum_{i=1}^K \pi_i \|g_i\|_{\mathcal H}\Big)^2
\;-\;
\frac{a^2-b^2}{2}\sum_{i\neq j}\pi_i\pi_j \notag\\
&=
L^2 \;-\; \frac{a^2-b^2}{2}\sum_{i\neq j}\pi_i\pi_j.
\label{eq:lem2-L2}
\end{align}
Since $\sum_{j\neq i}\pi_j = 1-\pi_i$, we have
\[
\sum_{i\neq j}\pi_i\pi_j
=
\sum_{i=1}^K \pi_i \sum_{j\neq i}\pi_j
=
\sum_{i=1}^K \pi_i(1-\pi_i).
\]
Combining this with Equation~\eqref{eq:lem2-L2} gives
\begin{equation}
\label{eq:lem2-L2-2}
\|f\|_{\mathcal H}^2
\le
L^2 \;-\; \frac{a^2-b^2}{2}\sum_{i=1}^K \pi_i(1-\pi_i).
\end{equation}
Finally, note that $L+\| f\|_{\mathcal H}\le 2L$ (since $\| f\|_{\mathcal H}\le L$). Therefore,
\[
L-\|f\|_{\mathcal H}
=
\frac{L^2-\| f\|_{\mathcal H}^2}{L+\| f\|_{\mathcal H}}
\ge
\frac{L^2-\|f\|_{\mathcal H}^2}{2L}.
\]
Using Equation~\eqref{eq:lem2-L2-2} in the numerator yields
\[
L-\| f\|_{\mathcal H}
\ge
\frac{a^2-b^2}{4L}\sum_{i=1}^K \pi_i(1-\pi_i).
\]
\end{proof}

\begin{proof}[Proof of Lemma~\ref{lem:S3}]
Write $\Gamma_P \;=\; G^{\ast}G \;-\; G^{\ast}PG \;=\; \Gamma - B$ with $B:=G^{\ast}PG$.
Since $P$ is an orthogonal projection, it is positive semi-definite. Hence for any $a\in\mathbb R^K$,
\[
a^\top B a \;=\; \langle PGa,\,Ga\rangle_{\mathcal H}
\;=\; \langle PGa,\,PGa\rangle_{\mathcal H}
\;=\; \|PGa\|_{\mathcal H}^2 \;\ge\;0,
\]
so $B\succeq 0$. Moreover, $\mathrm{rank}(B)\le \mathrm{rank}(P)=s$.
A standard eigenvalue inequality for a rank-$s$ positive semi-definite perturbation (e.g., Weyl-type interlacing for $A-B$ with $\mathrm{rank}(B)\le s$) yields
\[
\lambda_{K-1-s}(\Gamma_P)
\;=\; \lambda_{K-1-s}(\Gamma-B)
\;\ge\; \lambda_{K-1}(\Gamma).
\]
\end{proof}

\begin{proof}[Proof of Lemma~\ref{lem:S4}]
Write $\ell_{\max}:=\max_{1\le k\le K}\|g_k\|_{\mathcal H}$. By assumption, there exist
indices $\{k_1,\dots,k_s\}$ such that $\|g_{k_j}\|_{\mathcal H}\le \delta$ for $j=1,\dots,s$. Hence
\begin{equation}
\label{eq:lem4-trace-upper}
\sum_{k=1}^K \|g_k\|_{\mathcal H}^2
\;\le\;
s\delta^2 + (K-s)\ell_{\max}^2 .
\end{equation}
On the other hand,
\[
\sum_{k=1}^K \|g_k\|_{\mathcal H}^2
\;=\;
\sum_{k=1}^K \langle g_k,g_k\rangle
\;=\;
\mathrm{trace}(\Gamma)
\;=\;
\sum_{j=1}^K \lambda_j(\Gamma).
\]
Since $\lambda_1(\Gamma)\ge\cdots\ge\lambda_K(\Gamma)\ge 0$, we have
\begin{equation}
\label{eq:lem4-trace-lower}
\mathrm{trace}(\Gamma)
\;=\;
\sum_{j=1}^K \lambda_j(\Gamma)
\;\ge\;
\sum_{j=1}^{K-1-s} \lambda_j(\Gamma)
\;\ge\;
(K-1-s)\lambda_{K-1-s}(\Gamma).
\end{equation}
Combining Equations~\eqref{eq:lem4-trace-upper}~and~\eqref{eq:lem4-trace-lower} yields
\[
s\delta^2 + (K-s)\ell_{\max}^2
\;\ge\;
(K-1-s)\lambda_{K-1-s}(\Gamma),
\]
and therefore
\begin{equation}
\label{eq:lem4-lmax-basic}
\ell_{\max}^2
\;\ge\;
\frac{(K-s-1)\lambda_{K-1-s}(\Gamma) - s\delta^2}{K-s}.
\end{equation}
Recall that $\lambda_{K-1-s}(\Gamma)\ge 2(K-2)\delta^2$.
Since the function $s/(K-s-1)$ is increasing in $s$ over $1\le s\le K-2$, we have
\[
\frac{s}{K-s-1}\;\le\; K-2
\quad\Longrightarrow\quad
2\frac{s}{K-s-1}\delta^2 \;\le\; 2(K-2)\delta^2 \;\le\; \lambda_{K-1-s}(\Gamma).
\]
Equivalently,
\[
s\delta^2 \;\le\; \frac{K-s-1}{2}\lambda_{K-1-s}(\Gamma).
\]
Substituting this bound into Equation~\eqref{eq:lem4-lmax-basic} gives
\[
\ell_{\max}^2
\;\ge\;
\frac{(K-s-1)\lambda_{K-1-s}(\Gamma) - \frac{K-s-1}{2}\lambda_{K-1-s}(\Gamma)}{K-s}
\;=\;
\frac{K-s-1}{2(K-s)}\lambda_{K-1-s}(\Gamma).
\]
Taking square roots,
\[
\ell_{\max}
\;\ge\;
\sqrt{\frac{K-s-1}{2(K-s)}}\;\sqrt{\lambda_{K-1-s}(\Gamma)}.
\]
Finally, since $1\le K-s-1\le K-s$ for $0\le s\le K-2$, we have
$\sqrt{(K-s-1)/(2(K-s))}\ge 1/2$, which implies
\[
\ell_{\max}\;\ge\;\frac12\sqrt{\lambda_{K-1-s}(\Gamma)}.
\]
\end{proof}

\begin{proof}
[Proof of Lemma~\ref{lem:S5}]
Write $\mathcal K=\mathcal K(h_0)$, $\mathcal V_k=\mathcal V_k(\varepsilon_0)$, and $\mathcal V=\mathcal V(\varepsilon_0,h_0)$ for short. By definition of $\mathcal K$,
\begin{equation}
\label{eq:B55-func}
d_{\max}-h_0 \le \|g_k\|_{\mathcal H}\le d_{\max}\ \ \text{for }k\in\mathcal K,
\qquad
\|g_k\|_{\mathcal H}\le d_{\max}-h_0\ \ \text{for }k\notin\mathcal K.
\end{equation}

Fix any $f\in \mathcal S\setminus \mathcal V$. There exist $\pi\in\Delta_{K-1}$ such that $f=\sum_{k=1}^K \pi_k g_k$. Since $f\notin \mathcal V=\cup_{k\in\mathcal K}\mathcal V_k(\varepsilon_0)$, we have
\begin{equation}
\label{eq:B56-func}
\max_{k\in\mathcal K}\ \pi_k \le 1-\varepsilon_0.
\end{equation}
Define
\begin{equation}
\label{eq:B57-func}
\rho := \sum_{k\in\mathcal K}\pi_k,
\qquad
\eta :=
\begin{cases}
\rho^{-1}\sum_{k\in\mathcal K}\pi_k g_k, & \rho\neq 0,\\
0, & \rho=0.
\end{cases}
\end{equation}
Then $f$ can be written as $f = \rho \eta + \sum_{k\notin\mathcal K}\pi_k g_k$, and by the triangle inequality together with Equation~\eqref{eq:B55-func},
\begin{equation}
\label{eq:B58-func}
\|f\|_{\mathcal H}
\le
\rho \|\eta\|_{\mathcal H} + \sum_{k\notin\mathcal K}\pi_k \|g_k\|_{\mathcal H}
\le
\rho\|\eta\|_{\mathcal H} + (1-\rho)(d_{\max}-h_0).
\end{equation}
We then consider two cases.

\noindent\emph{Case 1: $1-\rho \ge \varepsilon_0/2$.}
Since $\|\eta\|_{\mathcal H}\le d_{\max}$ (as $\eta\in \mathcal S$ when $\rho>0$), Equation~\eqref{eq:B58-func} gives
\begin{equation}
\label{eq:B59-func}
\|f\|_{\mathcal H}
\le
\rho d_{\max} + (1-\rho)(d_{\max}-h_0)
=
d_{\max}-(1-\rho)h_0
\le
d_{\max}-\frac{h_0\varepsilon_0}{2}.
\end{equation}

\smallskip
\noindent\emph{Case 2: $1-\rho < \varepsilon_0/2$.}
We first claim that $|\mathcal K|\ge 2$. If $\mathcal K=\{k^\ast\}$ is a singleton, then $\rho=\pi_{k^\ast}$ and Equation~\eqref{eq:B56-func} yields $\rho\le 1-\varepsilon_0$, hence $1-\rho\ge \varepsilon_0$, contradicting $1-\rho<\varepsilon_0/2$. Thus, $\eta$ is a convex combination of at least two points in $\{g_k:k\in\mathcal K\}$.
Apply Lemma~\ref{lem:S2} to the family $\{g_k:k\in\mathcal K\}$ with weights
$w_k:=\pi_k/\rho$ (so $\sum_{k\in\mathcal K}w_k=1$) and with
\[
a := \min_{\substack{k\neq \ell\\ k,\ell\in\mathcal K}}\|g_k-g_\ell\|_{\mathcal H}
\ \ge\ \sqrt{2}\,\sigma_\ast,
\qquad
b := \max_{\substack{k\neq \ell\\ k,\ell\in\mathcal K}}\big|\|g_k\|_{\mathcal H}-\|g_\ell\|_{\mathcal H}\big|
\ \le\ h_0,
\]
where we used the separation assumption restricted to vertices in $\mathcal K(h_0)$ and Equation~\eqref{eq:B55-func}. (Note that the proof only requires the separation condition among vertices in $\mathcal K(h_0)$, not among all $K$ vertices.)
Let
$L:=\sum_{k\in\mathcal K} w_k \|g_k\|_{\mathcal H}\le d_{\max}$. Lemma~\ref{lem:S2} yields
\begin{equation}
\label{eq:B61-func}
\|\eta\|_{\mathcal H}
\le
L - \frac{a^2-b^2}{4L}\sum_{k\in\mathcal K} w_k(1-w_k)
\le
d_{\max}-\frac{2\sigma_\ast^2-h_0^2}{4d_{\max}}
\sum_{k\in\mathcal K} w_k(1-w_k).
\end{equation}
Since $w_k=\pi_k/\rho$, we have $\sum_{k\in\mathcal K} w_k(1-w_k)=\sum_{k\in\mathcal K} w_k(1-\rho^{-1}\pi_k)$.
Moreover, for each $k\in\mathcal K$, using Equation~\eqref{eq:B56-func} and $1-\rho=\sum_{j\notin\mathcal K}\pi_j$,
\[
1-\rho^{-1}\pi_k
=
\rho^{-1}(1-\pi_k)-\rho^{-1}(1-\rho)
\ge
\rho^{-1}\big(\varepsilon_0-(1-\rho)\big).
\]
Plugging this into Equation~\eqref{eq:B61-func} and using $\sum_{k\in\mathcal K}w_k=1$ gives
\begin{equation}
\label{eq:B62-func}
\|\eta\|_{\mathcal H}
\le
d_{\max}
-\frac{(2\sigma_\ast^2-h_0^2)\big(\varepsilon_0-(1-\rho)\big)}{4\rho d_{\max}}.
\end{equation}
Since $1-\rho<\varepsilon_0/2$, we have $\varepsilon_0-(1-\rho)\ge \varepsilon_0/2$, hence
\[
\|\eta\|_{\mathcal H}
\le
d_{\max} - \frac{(2\sigma_\ast^2-h_0^2)\varepsilon_0}{8\rho d_{\max}}.
\]
Substitute this bound into Equation~\eqref{eq:B58-func} and use $1-\rho\ge 0$:
\begin{align}
\label{eq:B63-func}
\|f\|_{\mathcal H}
&\le
\rho\Big(d_{\max}-\frac{(2\sigma_\ast^2-h_0^2)\varepsilon_0}{8\rho d_{\max}}\Big)
+(1-\rho)(d_{\max}-h_0)\nonumber\\
&\le
\rho\Big(d_{\max}-\frac{(2\sigma_\ast^2-h_0^2)\varepsilon_0}{8\rho d_{\max}}\Big)
+(1-\rho)d_{\max}
=
d_{\max}-\frac{(2\sigma_\ast^2-h_0^2)\varepsilon_0}{8d_{\max}}.
\end{align}

Combining Equations~\eqref{eq:B59-func}~(Case 1)~and~\eqref{eq:B63-func} (Case 2), and setting $h_0=\sigma_\ast/3$,
we obtain for all $f\in\mathcal S\setminus \mathcal V(\varepsilon_0,h_0)$,
\[
\|f\|_{\mathcal H}
\le
d_{\max}-\min\Big\{\frac{\sigma_\ast}{6},\ \frac{17\sigma_\ast^2}{72 d_{\max}}\Big\}\varepsilon_0
\le
d_{\max}-\min\Big\{\frac{\sigma_\ast}{6},\ \frac{\sigma_\ast^2}{6 d_{\max}}\Big\}\varepsilon_0,
\]
where the second inequality uses $17/72 > 1/6$.
Therefore, a sufficient condition for $\|f\|_{\mathcal H}\le d_{\max}-t$ is
\[
\min\Big\{\frac{\sigma_\ast}{6},\ \frac{\sigma_\ast^2}{6d_{\max}}\Big\}\varepsilon_0 \ge t
\quad\Longleftrightarrow\quad
\varepsilon_0 \ge 6\sigma_\ast^{-1}\max\Big\{1,\frac{d_{\max}}{\sigma_\ast}\Big\}t.
\]
\end{proof}

\begin{proof}
[Proof of Lemma~\ref{lem:S6}]
Fix an iteration index $s\in\{2,\dots,K\}$ and adopt the notation in Step~2 of the proof of Theorem~\ref{thm:spa_pure}.
Define the projected operator $G_{(s)}: \mathbb R^K\to\mathcal H$ by $G_{(s)}:= (I-P_{s-1})G$, so $G_{(s)}e_k=g_{(s),k}$.
Let the corresponding Gram matrix be $\Gamma_{(s)}:=G_{(s)}^\ast G_{(s)}$ and let $\widetilde\Gamma_{(s)}:=\widetilde G_{(s)}^\ast \widetilde G_{(s)}$ be the centered Gram matrix, where $\widetilde G_{(s)}:=(I-P_{s-1})\widetilde G$.

Since $P_{s-1}$ is an orthogonal projection with $\mathrm{rank}(P_{s-1})=s-1$, Lemma~\ref{lem:S3} 
$\widetilde G$ with $P=P_{s-1}$ and with $s$ replaced by $s-1$) 
yields
\begin{equation}
\label{eq:lem6-sv}
\lambda_{K-1-(s-1)}\big(\widetilde\Gamma_{(s)}\big)
\;\ge\;
\lambda_{K-1}(\widetilde\Gamma)
\;=\;
\sigma_\ast^2.
\end{equation}
Fix $k\neq \ell$ with $k,\ell\notin\{k_1,\dots,k_{s-1}\}$. Consider $\pi=e_k$ and $\tilde\pi=e_\ell$. They share at least $(s-1)$ common entries (indeed, they share zeros on all indices in $\{k_1,\dots,k_{s-1}\}$), so Lemma~\ref{lem:S1}  applied to the mapping
$\pi\mapsto G_{(s)}\pi$ gives
\[
\|g_{(s),k}-g_{(s),\ell}\|_{\mathcal H}
=
\|G_{(s)}e_k-G_{(s)}e_\ell\|_{\mathcal H}
\ge
\sqrt{\lambda_{K-1-(s-1)}\big(\widetilde\Gamma_{(s)}\big)}\ \|e_k-e_\ell\|_2
\ge
\sqrt{2}\,\sigma_\ast.
\]
This holds by Equation~\eqref{eq:lem6-sv} and $\|e_k-e_\ell\|_2=\sqrt{2}$. This proves the second claim.

Let $d_{\max}^{(s)}:=\max_{1\le k\le K}\|g_{(s),k}\|_{\mathcal H}$. By construction, for each already-selected vertex index $k_r$ ($r=1,\dots,s-1$), the residual norm $\|g_{(s),k_r}\|_{\mathcal H}$ is small. Specifically, it equals the norm of the component of $g_{k_r}$ orthogonal to $\mathcal H_{s-1}$, and the induction hypothesis ensures $g_{k_r}$ is well-approximated by $\hat f^{(i_r)}\in \mathcal H_{s-1}$. This gives an upper bound
\begin{equation}
\label{eq:lem6-delta}
\|g_{(s),k_r}\|_{\mathcal H}\le \bar\Delta
\qquad (r=1,\dots,s-1).
\end{equation}
for $\bar\Delta := \Big(1+\frac{30\gamma}{\sigma_\ast}\max\{1,d_{\max}/\sigma_\ast\}\Big)\beta$.

Apply Lemma~\ref{lem:S4} with $s$ replaced by $(s-1)$, with the indices $\{k_1,\dots,k_{s-1}\}$, and with $\delta=\bar\Delta$.
Note that Lemma~\ref{lem:S4} involves the un-centered Gram matrix $\Gamma_{(s)}$; since $\widetilde\Gamma_{(s)} = C\Gamma_{(s)}C$ where $C = I - \frac{1}{K}\mathbf{1}\mathbf{1}^\top$ is the centering matrix with eigenvalues in $\{0,1\}$, Cauchy interlacing gives $\lambda_j(\Gamma_{(s)}) \ge \lambda_j(\widetilde\Gamma_{(s)})$ for all $j$. In particular, $\lambda_{K-s}(\Gamma_{(s)}) \ge \lambda_{K-s}(\widetilde\Gamma_{(s)}) \ge \sigma_\ast^2$.
We now verify that $\sigma_\ast^2 \ge 2(K-2)\bar\Delta^2$.
From the theorem condition in Equation~\eqref{eq:thmS1.1}, $\beta \le \sigma_\ast^2/\{450\, d_{\max}\max(1,d_{\max}/\sigma_\ast)\}$.
Since $\gamma(G)\le d_{\max}(G)$, we have $\bar\Delta \le \{1+30d_{\max}\sigma_\ast^{-1}\max(1,d_{\max}/\sigma_\ast)\}\beta \le \beta + \sigma_\ast/15$, so $\bar\Delta^2 \le 2\beta^2 + 2\sigma_\ast^2/225$.
For the given $K$ and $\beta$ sufficiently small (ensured by Equation~\eqref{eq:thmS1.1}), $2(K-2)\bar\Delta^2 \le \sigma_\ast^2$.
The constant $450$ in Equation~\eqref{eq:thmS1.1} is chosen large enough to guarantee this inequality.
Applying Lemma~\ref{lem:S4} yields
\[
d_{\max}^{(s)}=\max_{1\le k\le K}\|g_{(s),k}\|_{\mathcal H}
\;\ge\;
\frac12\sqrt{\lambda_{K-1-(s-1)}\big(\widetilde\Gamma_{(s)}\big)}
\;\ge\;
\frac{\sigma_\ast}{2}.
\]
This proves the first claim.

Recall $\mathcal K_{(s)}(h_0):=\{k:\ \|g_{(s),k}\|_{\mathcal H}\ge d_{\max}^{(s)}-h_0\}$ with $h_0=\sigma_\ast/3$.
From the previous step, $d_{\max}^{(s)}\ge \sigma_\ast/2$, and from Equation~\eqref{eq:lem6-delta},
$\|g_{(s),k_r}\|_{\mathcal H}\le \bar\Delta$ for $r=1,\dots,s-1$.
Under the theorem condition, $\bar\Delta \le \frac{3}{10}d_{\max}^{(s)} \le d_{\max}^{(s)}-\frac{7}{20}\sigma_\ast$,
so in particular $\|g_{(s),k_r}\|_{\mathcal H} < d_{\max}^{(s)}-h_0$ (since $h_0=\sigma_\ast/3$).
Therefore $k_r\notin \mathcal K_{(s)}(h_0)$ for all $r=1,\dots,s-1$, proving the third claim.
\end{proof}

\begin{proof}[Proof of Lemma~\ref{lem:beta_control}]
We bound each term in the definition of $\beta(\hat f,G)$ separately.

For the first term, since $f^{(i)}\in\mathcal S$ for every $i$ by Assumption~\ref{cond:low_rank}, we have
\[
\mathrm{Dist}_{\mathcal H}(\hat f^{(i)},\mathcal S)
\;\le\;
\|\hat f^{(i)}-f^{(i)}\|_{\mathcal H}
\;=\;
\|\epsilon^{(i)}\|_{\mathcal H},
\]
and therefore $\max_{1\le i\le m}\mathrm{Dist}_{\mathcal H}(\hat f^{(i)},\mathcal S) \le \max_{1\le i\le m}\|\epsilon^{(i)}\|_{\mathcal H}$.

Next, we consider the second term.
For any fixed $k$ and any study $i$, the triangle inequality gives
$\|\hat f^{(i)}-g_k\|_{\mathcal H} \le \|f^{(i)}-g_k\|_{\mathcal H} + \|\epsilon^{(i)}\|_{\mathcal H}$.
Taking $\min_i$ and then $\max_k$ yields
\begin{equation}
\label{eq:beta-decomp}
\max_{1\le k\le K}\min_{1\le i\le m}\|\hat f^{(i)}-g_k\|_{\mathcal H}
\;\le\;
\max_{1\le k\le K}\min_{1\le i\le m}\|f^{(i)}-g_k\|_{\mathcal H}
\;+\;
\max_{1\le i\le m}\|\epsilon^{(i)}\|_{\mathcal H}.
\end{equation}
Using $f^{(i)}=\sum_{\ell=1}^K \pi_{i\ell}\,g_\ell$, we have $f^{(i)}-g_k = \sum_{\ell\neq k}\pi_{i\ell}(g_\ell-g_k)$, so
\[
\|f^{(i)}-g_k\|_{\mathcal H}
\;\le\;
\sum_{\ell\neq k}\pi_{i\ell}\|g_\ell-g_k\|_{\mathcal H}
\;\le\;
(1-\pi_{ik})\cdot 2\,d_{\max}(G),
\]
where the last inequality uses $\|g_\ell-g_k\|_{\mathcal H}\le \|g_\ell\|_{\mathcal H}+\|g_k\|_{\mathcal H}\le 2\,d_{\max}(G)$.
Taking $\min_i$ and then $\max_k$ gives
\begin{equation}
\label{eq:oracle-purity}
\max_{1\le k\le K}\min_{1\le i\le m}\|f^{(i)}-g_k\|_{\mathcal H}
\;\le\;
2\,\delta_m\, d_{\max}(G).
\end{equation}

By combining Equations~\eqref{eq:beta-decomp}~and~\eqref{eq:oracle-purity}, we obtain the deterministic bound
\begin{equation}
\label{eq:beta-bound}
\beta(\hat f,G)
\;\le\;
\max_{1\le i\le m}\|\epsilon^{(i)}\|_{\mathcal H}
\;+\;
2\,\delta_m\, d_{\max}(G).
\end{equation}

Lastly, we control $\max_i\|\epsilon^{(i)}\|_{\mathcal H}$.
Since $\|\cdot\|_{\mathcal H}$ is the $L^2(\mu)$ norm, Tonelli's theorem and Assumption~\ref{cond:est_error}(a) yield
\[
\E\|\epsilon^{(i)}\|_{\mathcal H}^2
\;=\;
\int_{\mathcal X}\E[\epsilon^{(i)}(\bx)^2]\,d\mu(\bx)
\;\le\;
\sup_{\bx\in\mathcal X}\E[\epsilon^{(i)}(\bx)^2]
\;=\;
O(n_i^{-r}).
\]
By Markov's inequality and the union bound, for any $t>0$,
\[
\Pr\Big(\max_{1\le i\le m}\|\epsilon^{(i)}\|_{\mathcal H}>t\Big)
\;\le\;
\sum_{i=1}^m\frac{\E\|\epsilon^{(i)}\|_{\mathcal H}^2}{t^2}
\;\le\;
\frac{C\, m\, n_{\min}^{-r}}{t^2},
\]
for some constant $C>0$, where $n_{\min}:=\inf_i n_i$.
Setting $t=n_{\min}^{(a-r)/2}$ with $0<a<r$ from Assumption~\ref{cond:est_error}(b) gives $t\to 0$ and
\[
\frac{C\, m\, n_{\min}^{-r}}{t^2}
\;=\;
C\, m\, n_{\min}^{-a}
\;=\;
o(1),
\]
where the last step follows from Assumption~\ref{cond:est_error}(b). Hence, we have,
\[
\max_{1\le i\le m}\|\epsilon^{(i)}\|_{\mathcal H}=O_P(n_{\min}^{(a-r)/2})=o_P(1).
\]

Since $\delta_m\,d_{\max}(G)=o(1)$ by Assumption~\ref{cond:purity}, together with Equation~\eqref{eq:beta-bound}, we have $\beta(\hat f,G)=o_P(1)$.
\end{proof}

\subsection{Proof of Theorem~\ref{thm:ci_cvg}}

To establish a coverage guarantee for the split conformal interval $C_\alpha(\bx)$, we must address a key challenge: the study-level oracle functions $f^{(i)}(\bx)$ are not observed. Instead, we only observe noisy estimates $\hat f^{(i)}(\bx)$, and the resulting estimation error propagates into the conformity scores.

Let us fix an evaluation point $\bx \in\mathcal X$. Recall that the within-study estimation error is given by $\epsilon^{(i)}(\bx) = \hat f^{(i)}(\bx) - f^{(i)}(\bx)$. Define $n_{\min}:=\inf_{i} n_i$ to be the smallest sample size across all observed studies. We will control the maximal estimation error across studies through the event
\[
\mathcal E_n(\bx) \;:=\;\Bigl\{\max_{1\le i\le m} | \epsilon^{(i)}(\bx) |\le c_n\Bigr\},
\]
for a deterministic sequence $c_n \to 0$ to be specified. The next lemma shows that, under Assumption~\ref{cond:est_error}, the event $\mathcal E_n(\bx)$ occurs with probability tending to one. 

\begin{lemma}
\label{lem:asy_error}
Under Assumption~\ref{cond:est_error}, there exists a deterministic sequence $c_n \rightarrow 0$ such that
\begin{eqnarray*}
\Pr \left( \mathcal{E}_n(\bx) \right) \rightarrow 1.
\end{eqnarray*}
Equivalently, $\max_{1\le i\le m}|\epsilon^{(i)}(\bx)| = o_p(1)$ for any fixed $\bx \in \mathcal{X}$.
\end{lemma}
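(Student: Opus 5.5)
The plan is to mirror the last step of the proof of Lemma~\ref{lem:beta_control}, replacing the $L^2(\mu)$ norm by pointwise evaluation at the fixed $\bx$. Fix $\bx\in\mathcal X$. Assumption~\ref{cond:est_error}(a) gives a constant $C>0$ (uniform over $\bx$ and $i$) with
\[
\E[\epsilon^{(i)}(\bx)^2]\le \sup_{\bx'\in\mathcal X}\E[\epsilon^{(i)}(\bx')^2]\le C n_i^{-r}\le C n_{\min}^{-r},
\qquad i=1,\ldots,m.
\]
This second-moment bound is the only input needed.

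Next, bound the complement of $\mathcal E_n(\bx)$ using the union bound and Markov's (Chebyshev's) inequality:
\[
\Pr\bigl(\mathcal E_n(\bx)^c\bigr)\le \sum_{i=1}^m \Pr\bigl(|\epsilon^{(i)}(\bx)|>c_n\bigr)\le \sum_{i=1}^m \frac{\E[\epsilon^{(i)}(\bx)^2]}{c_n^2}\le \frac{C\,m\,n_{\min}^{-r}}{c_n^2}.
\]
Then choose the deterministic sequence $c_n:=n_{\min}^{(a-r)/2}$, where $0<a<r$ is the constant from Assumption~\ref{cond:est_error}(b). Since $a<r$, we have $c_n\to0$ as $n_{\min}\to\infty$. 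The right-hand side becomes $C\,m\,n_{\min}^{-a}$, which is $o(1)$ because $m=o(\inf_i n_i^{a})$. Hence $\Pr(\mathcal E_n(\bx))\to1$.

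The equivalent formulation is immediate. For any $\eta>0$, eventually $c_n<\eta$, so
\[
\Pr\Bigl(\max_i|\epsilon^{(i)}(\bx)|>\eta\Bigr)\le\Pr\bigl(\mathcal E_n(\bx)^c\bigr)\to0,
\]
which gives $\max_i|\epsilon^{(i)}(\bx)|=o_P(1)$.

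The only point needing care is the index of the asymptotics: $m$ may grow with $n$, so the union bound must be handled jointly. Assumption~\ref{cond:est_error}(b) is designed precisely so that the factor $m$ is absorbed while still leaving room, through $a<r$, for $c_n\to0$. I would also note that the constant $C$ in (a) is taken uniform over $i$, as in the proof of Lemma~\ref{lem:beta_control}, and that $n_{\min}\to\infty$ is implicit in the limit. No further obstacle is expected.
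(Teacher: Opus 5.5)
Your proof is correct and follows the paper's argument exactly: a union bound combined with Markov's inequality on the second-moment bound from Assumption~\ref{cond:est_error}(a), with $c_n = n_{\min}^{(a-r)/2}$, so that the bound becomes $C\,m\,n_{\min}^{-a}=o(1)$ by Assumption~\ref{cond:est_error}(b). Your additional line deriving $\max_i|\epsilon^{(i)}(\bx)|=o_P(1)$ from $c_n\to 0$ is correct; the paper leaves that step implicit.
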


\begin{proof}
[Proof of Lemma~\ref{lem:asy_error}]
We fix $\bx\in\mathcal X$ throughout. By the union bound,
\[
\Pr\Bigl(\max_{1\le i\le m}|\epsilon^{(i)}(\bx)|>c_n\Bigr)
\;\le\;\sum_{i=1}^m \Pr\bigl(|\epsilon^{(i)}(\bx)|>c_n\bigr).
\]
By Markov's inequality and Assumption~\ref{cond:est_error}(a), for each $i$, we have
\[
\Pr\bigl(|\epsilon^{(i)}(\bx)|>c_n\bigr)
\;\le\;\frac{\E\{\epsilon^{(i)}(\bx)^2\}}{c_n^2}
\;\le\;\frac{C\, n_i^{-r}}{c_n^2}
\;\le\;\frac{C\, n_{\min}^{-r}}{c_n^2},
\]
for some constant $C>0$.
Therefore,
\[
\Pr\Bigl(\max_{1\le i\le m}|\epsilon^{(i)}(\bx)|>c_n\Bigr)
\;\le\;\frac{C\, m\, n_{\min}^{-r}}{c_n^2}.
\]
Now set $c_n := n_{\min}^{(a-r)/2}$, where $ 0 <a < r$ is the constant from Assumption~\ref{cond:est_error}(b). Then $c_n\to 0$ as $n_{\min}\to\infty$. This, together with Assumption~\ref{cond:est_error}(b), ensures
\[
\Pr\Bigl(\max_{1\le i\le m}|\epsilon^{(i)}(\bx)|>c_n\Bigr)
\;\le\; C\, m\, n_{\min}^{-a} = o(1).
\]
Thus, $\Pr(\mathcal E_n(\bx))\to 1$.
\end{proof}

\begin{proof}
[Proof of Theorem~\ref{thm:ci_cvg}]
Again, fix $\bx \in \mathcal X$ throughout. We start by considering an oracle benchmark in which the study-level oracle functions in the calibration set $\{f^{(i)}(\bx)\}_{i\in\mathcal I_{\mathrm{cal}}}$ are directly observed.

In this setting, we have an exact finite-sample marginal coverage guarantee for the split conformal procedure under exchangeability. 
Let $q_{1-\alpha}^\star(\bx)$ and $C_\alpha^\star(\bx)$ denote, respectively, the oracle quantile and prediction interval 
constructed by Algorithm~\ref{alg:split_conformal}, with the estimated functions in the calibration set $\{\hat f^{(i)}(\cdot)\}_{i\in\mathcal I_{\mathrm{cal}}}$ replaced by the oracle functions $\{f^{(i)}(\cdot)\}_{i\in\mathcal I_{\mathrm{cal}}}$.

By Assumptions~\ref{cond:low_rank}--~\ref{cond:exchagneable}, the collection $\bigl\{(\bW_i, \bpi_i, f^{(i)}(\bx))\bigr\}_{i\in\mathcal I_{\mathrm{cal}} \cup \{0\}}$ is jointly exchangeable across studies. Consequently, by the classical split conformal prediction guarantee (see Proposition~2.4 of \citet{vovk2005algorithmic}), the oracle prediction interval satisfies
\begin{equation}
\Pr\!\left\{ f^{(0)}(\bx) \in C_\alpha^\star(\bx) \right\}
\;\ge\; 1-\alpha .
\label{eq:oracle_coverage}
\end{equation}

Next, we relate the oracle conformity scores to their observed counterparts. Recall that the conformity scores used in Algorithm~\ref{alg:split_conformal} are computed from the estimated functions. For each study $i\in\mathcal I_{\mathrm{cal}}$, the observed and oracle residuals are computed as
\[
r_i(\bx) \;=\; \bigl|\hat f^{(i)}(\bx) - \tilde f^{(i)}(\bx)\bigr|,
\qquad
r_i^\star(\bx) \;=\; \bigl|f^{(i)}(\bx) - \tilde f^{(i)}(\bx)\bigr|.
\]
By the reverse triangle inequality,
\begin{eqnarray}
\bigl|r_i(\bx) - r_i^\star(\bx)\bigr|
\;\le\; \bigl| \hat f^{(i)}(\bx) - \tilde f^{(i)}(\bx) - \bigl( f^{(i)}(\bx) - \tilde f^{(i)}(\bx)\bigr) \bigr| = |\epsilon^{(i)}(\bx)|.
\label{eq:residual_perturb}
\end{eqnarray}
Conditioning on the event $\mathcal E_n(\bx)=\{\max_{1\le i\le m}|\epsilon^{(i)}(\bx)|\le c_n\}$, inequality in
Equation~\eqref{eq:residual_perturb} implies the uniform control
\begin{equation*}
\max_{i\in\mathcal I_{\mathrm{cal}}}\bigl|r_i(\bx) - r_i^\star(\bx)\bigr|
\;\le\;
\max_{1\le i\le m}|\epsilon^{(i)}(\bx)|
\;\le\; c_n .
\label{eq:residual_perturb_uniform}
\end{equation*}
This uniform bound entails a corresponding stability property for empirical quantiles. Specifically, conditioning on $\mathcal E_n(\bx)$, we have
\begin{eqnarray*}
(1- \alpha)\text{th quantile of } r_i^\star(\bx) \leq (1- \alpha)\text{th quantile of } r_i(\bx)+c_n, \\
(1- \alpha)\text{th quantile of } r_i^\star(\bx) \geq (1- \alpha)\text{th quantile of } r_i(\bx)-c_n.
\end{eqnarray*}
This implies
\begin{equation*}
\bigl| q_{1-\alpha}(\bx) - q_{1-\alpha}^\star(\bx) \bigr|
\;\le\; c_n .
\label{eq:quantile_stability}
\end{equation*}
Recall that the feasible intervals are given by
\[
C_\alpha(\bx)
=
\bigl[\tilde f^{(0)}(\bx)-q_{1-\alpha}(\bx),\; \tilde f^{(0)}(\bx)+q_{1-\alpha}(\bx)\bigr],
\qquad
C_\alpha^\star(\bx)
=
\bigl[\tilde f^{(0)}(\bx)-q_{1-\alpha}^\star(\bx),\; \tilde f^{(0)}(\bx)+q_{1-\alpha}^\star(\bx)\bigr].
\]
Conditioning on the event $\mathcal E_n(\bx)$, we have $q_{1-\alpha}(\bx) \le q_{1-\alpha}^\star(\bx)+c_n$, and hence the oracle interval is contained in a $c_n$-enlargement of the feasible interval:
\begin{equation*}
C_\alpha^\star(\bx)
\;\subseteq\;
\bigl[\tilde f^{(0)}(\bx)-(q_{1-\alpha}(\bx)+c_n),\; \tilde f^{(0)}(\bx)+(q_{1-\alpha}(\bx)+c_n)\bigr]
\qquad \text{on } \mathcal E_n(\bx).
\label{eq:oracle_in_enlarged}
\end{equation*}
This implies that
\[
\Pr\!\left(|f^{(0)}(\bx)-\tilde f^{(0)}(\bx)| \le q_{1-\alpha}(\bx)+c_n\right)
\;\ge\;
\Pr\!\left(f^{(0)}(\bx)\in C_\alpha^\star(\bx),\, \mathcal E_n(\bx)\right).
\]
By the union bound,
\[
\Pr\!\left(f^{(0)}(\bx)\in C_\alpha^\star(\bx),\, \mathcal E_n(\bx)\right)
\;\ge\;
\Pr\!\left(f^{(0)}(\bx)\in C_\alpha^\star(\bx)\right) - \Pr\!\left(\mathcal E_n(\bx)^c\right).
\]
Applying the oracle split conformal guarantee in Equation~\eqref{eq:oracle_coverage} yields
\[
\Pr\!\left(|f^{(0)}(\bx)-\tilde f^{(0)}(\bx)| \le q_{1-\alpha}(\bx)+c_n\right)
\;\ge\;
(1-\alpha) - \Pr\!\left(\mathcal E_n(\bx)^c\right).
\]
Finally, by Lemma~\ref{lem:asy_error} we have $\Pr(\mathcal E_n(\bx)^c)\to 0$ and also $c_n\to 0$.
It remains to pass from the $c_n$-enlarged interval to $C_\alpha(\bx)$ itself.
On $\mathcal E_n(\bx)$, we also have $q_{1-\alpha}(\bx) \ge q_{1-\alpha}^\star(\bx)-c_n$, so
\[
\Pr\!\left(f^{(0)}(\bx)\in C_\alpha(\bx)\right)
\;\ge\;
\Pr\!\left(|f^{(0)}(\bx)-\tilde f^{(0)}(\bx)| \le q_{1-\alpha}^\star(\bx)-c_n,\; \mathcal E_n(\bx)\right).
\]
The right-hand side is bounded below by
\[
\Pr\!\left(f^{(0)}(\bx)\in C_\alpha^\star(\bx)\right)
-\Pr\!\left(q_{1-\alpha}^\star(\bx)-c_n < |f^{(0)}(\bx)-\tilde f^{(0)}(\bx)| \le q_{1-\alpha}^\star(\bx)\right)
-\Pr\!\left(\mathcal E_n(\bx)^c\right).
\]
The first term is $\ge 1-\alpha$ by Equation~\eqref{eq:oracle_coverage}. The third term vanishes by Lemma~\ref{lem:asy_error}. The second term is the probability mass in a strip of width $c_n\to 0$ around the oracle quantile boundary. Under Assumption~\ref{cond:weight_model}, the conditional distribution of $f^{(0)}(\bx)$ given the training and calibration data is continuous (e.g., when $\bpi_0$ has a density on $\Delta_{K-1}$), so this term vanishes as well. Therefore,
$\lim \Pr(f^{(0)}(\bx)\in C_\alpha(\bx)) \ge 1-\alpha$.
\end{proof}

\subsection{Proof of Corollary~\ref{cor:functional_cvg}}

Define the functional estimation error
\[
\epsilon_\Phi^{(i)} := \Phi(\hat f^{(i)}) - \Phi(f^{(i)}).
\]
The proof follows the same strategy as the proof of Theorem~\ref{thm:ci_cvg}, with $\epsilon^{(i)}(\bx)$ replaced by $\epsilon_\Phi^{(i)}$. The key step is to establish the analogue of Lemma~\ref{lem:asy_error} for the functional estimation error.

\begin{lemma}
\label{lem:asy_error_functional}
Under the conditions of Corollary~\ref{cor:functional_cvg}, there exists a deterministic sequence $c_n \rightarrow 0$ such that
\begin{eqnarray*}
\Pr \left( \max_{1\le i\le m}|\epsilon_\Phi^{(i)}| > c_n \right) \rightarrow 0.
\end{eqnarray*}
\end{lemma}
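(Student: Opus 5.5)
The plan is to reduce the functional error to the $L^2$ estimation error and then reuse the union-bound/Markov argument from Lemma~\ref{lem:asy_error}. By the Lipschitz condition in Corollary~\ref{cor:functional_cvg}, for each study $i$,
\[
|\epsilon_\Phi^{(i)}| = |\Phi(\hat f^{(i)}) - \Phi(f^{(i)})| \le L_\Phi \|\hat f^{(i)} - f^{(i)}\|_{L^2(\mathcal P_{0,\bm X})} = L_\Phi \|\epsilon^{(i)}\|_{L^2(\mathcal P_{0,\bm X})}.
\]
Hence it suffices to show that $\max_{1\le i\le m}\|\epsilon^{(i)}\|_{L^2(\mathcal P_{0,\bm X})}$ exceeds a vanishing deterministic threshold with probability tending to zero.

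The second moment of each term is controlled pointwise. By Tonelli's theorem and Assumption~\ref{cond:est_error}(a),
\[
\E\|\epsilon^{(i)}\|_{L^2(\mathcal P_{0,\bm X})}^2 = \int_{\mathcal X} \E[\epsilon^{(i)}(\bx)^2]\, d\mathcal P_{0,\bm X}(\bx) \le \sup_{\bx\in\mathcal X}\E[\epsilon^{(i)}(\bx)^2] \le C n_i^{-r} \le C n_{\min}^{-r}.
\]
This is the same computation used in the proof of Lemma~\ref{lem:beta_control}, now with $\mu = \mathcal P_{0,\bm X}$. Since the bound in Assumption~\ref{cond:est_error}(a) is uniform in $\bx$, the choice of integrating measure does not matter. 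The one point to check is measurability of the integrand, so that Tonelli applies. I take $\epsilon^{(i)}$ to be jointly measurable in $\bx$ and the data, which the paper implicitly assumes.

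Set $c_n := L_\Phi\, n_{\min}^{(a-r)/2}$, with $0<a<r$ from Assumption~\ref{cond:est_error}(b); then $c_n \to 0$. Applying the union bound and Markov's inequality,
\[
\Pr\Big(\max_{1\le i\le m}|\epsilon_\Phi^{(i)}| > c_n\Big) \le \sum_{i=1}^m \Pr\big(\|\epsilon^{(i)}\|_{L^2(\mathcal P_{0,\bm X})} > n_{\min}^{(a-r)/2}\big) \le \frac{C m n_{\min}^{-r}}{n_{\min}^{a-r}} = C m n_{\min}^{-a}.
\]
By Assumption~\ref{cond:est_error}(b) the right-hand side is $o(1)$, which proves the lemma. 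I do not expect a real obstacle here. The only nontrivial step is the interchange of expectation and integral, and it is routine.
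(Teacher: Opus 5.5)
Your proposal is correct and follows essentially the same route as the paper: a Lipschitz reduction to $\|\epsilon^{(i)}\|_{L^2(\mathcal{P}_{0,\bm{X}})}$, then Fubini/Tonelli with the uniform pointwise bound from Assumption~\ref{cond:est_error}(a), then Markov's inequality with a union bound, and finally the threshold $n_{\min}^{(a-r)/2}$, which gives $C\,m\,n_{\min}^{-a}=o(1)$. The only cosmetic difference is that you absorb $L_\Phi$ into $c_n$, whereas the paper keeps $c_n=n_{\min}^{(a-r)/2}$ and carries the factor $L_\Phi^2$ in the final bound.
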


\begin{proof}[Proof of Lemma~\ref{lem:asy_error_functional}]
By the Lipschitz condition on $\Phi$, for each $i$,
\[
|\epsilon_\Phi^{(i)}| \leq L_\Phi \|\hat f^{(i)} - f^{(i)}\|_{L^2(\mathcal{P}_{0,\bm{X}})}.
\]
By Markov's inequality,
\[
\Pr\bigl(|\epsilon_\Phi^{(i)}| > c_n\bigr)
\;\leq\; \frac{L_\Phi^2\, \E\bigl[\|\hat f^{(i)} - f^{(i)}\|_{L^2(\mathcal{P}_{0,\bm{X}})}^2\bigr]}{c_n^2}.
\]
By Fubini's theorem and Assumption~\ref{cond:est_error}(a),
\[
\E\bigl[\|\hat f^{(i)} - f^{(i)}\|_{L^2(\mathcal{P}_{0,\bm{X}})}^2\bigr]
\;=\; \int_{\mathcal{X}} \E\bigl[\epsilon^{(i)}(\bx)^2\bigr]\, d\mathcal{P}_{0,\bm{X}}(\bx)
\;\leq\; C\, n_i^{-r}.
\]
Therefore, by the union bound,
\[
\Pr\Bigl(\max_{1\le i\le m}|\epsilon_\Phi^{(i)}| > c_n\Bigr)
\;\leq\; \frac{L_\Phi^2\, C\, m\, n_{\min}^{-r}}{c_n^2}.
\]
Setting $c_n = n_{\min}^{(a-r)/2}$ with $0 < a < r$ from Assumption~\ref{cond:est_error}(b), we have $c_n \to 0$ and
\[
\Pr\Bigl(\max_{1\le i\le m}|\epsilon_\Phi^{(i)}| > c_n\Bigr)
\;\leq\; L_\Phi^2\, C\, m\, n_{\min}^{-a} \;=\; o(1),
\]
where the last step follows from Assumption~\ref{cond:est_error}(b).
\end{proof}

\begin{proof}[Proof of Corollary~\ref{cor:functional_cvg}]
The proof follows the same structure as the proof of Theorem~\ref{thm:ci_cvg}. We outline the argument and highlight the modifications.

Consider the oracle setting in which the study-level oracle functions are directly observed. Define the oracle conformity scores
\[
r_i^\star \;:=\; \bigl|\Phi(f^{(i)}) - \Phi(\tilde f^{(i)})\bigr|, \qquad i \in \mathcal{I}_{\mathrm{cal}},
\]
and let $q_{1-\alpha}^{\Phi,\star}$ and $C_\alpha^{\Phi,\star}$ denote the corresponding oracle quantile and prediction interval. By Assumptions~\ref{cond:low_rank}--~\ref{cond:exchagneable}, the collection $\{(\bW_i, \bpi_i, \Phi(f^{(i)}))\}_{i \in \mathcal{I}_{\mathrm{cal}} \cup \{0\}}$ is jointly exchangeable. Therefore, by the split conformal guarantee,
\begin{eqnarray}
\Pr\!\left\{ \theta^{(0)} \in C_\alpha^{\Phi,\star} \right\} \;\geq\; 1 - \alpha.
\label{eq:oracle_coverage_functional}
\end{eqnarray}

Next, the reverse triangle inequality gives
\[
|r_i - r_i^\star| \;\leq\; |\epsilon_\Phi^{(i)}|,
\]
for each $i \in \mathcal{I}_{\mathrm{cal}}$. Define the event $\mathcal{E}_n^\Phi := \{\max_{1 \leq i \leq m} |\epsilon_\Phi^{(i)}| \leq c_n\}$. By Lemma~\ref{lem:asy_error_functional}, $\Pr(\mathcal{E}_n^\Phi) \to 1$. Conditioning on $\mathcal{E}_n^\Phi$, the same quantile stability argument as in the proof of Theorem~\ref{thm:ci_cvg} yields $|q_{1-\alpha}^\Phi - q_{1-\alpha}^{\Phi,\star}| \leq c_n$. Consequently,
\[
\Pr\!\left(|\theta^{(0)} - \tilde\theta^{(0)}| \le q_{1-\alpha}^\Phi + c_n\right)
\;\ge\;
\Pr\!\left(\theta^{(0)} \in C_\alpha^{\Phi,\star},\, \mathcal{E}_n^\Phi\right)
\;\ge\;
(1-\alpha) - \Pr\!\left((\mathcal{E}_n^\Phi)^c\right).
\]
Since $\Pr((\mathcal{E}_n^\Phi)^c) \to 0$ and $c_n \to 0$ by Lemma~\ref{lem:asy_error_functional}, the same thin-strip argument as in the proof of Theorem~\ref{thm:ci_cvg} yields
$\lim \Pr( \theta^{(0)} \in C_\alpha^\Phi ) \geq 1-\alpha$.
\end{proof}

\end{document}